\def\ttt{\texttt}
\def\extraspacing{\vspace{2mm} \noindent}
\def\figcapup{\vspace{-1mm}}
\def\figcapdown{\vspace{-0mm}}
\def\vgap{\vspace{1mm}}
\newtheorem{theorem}{Theorem}
\newtheorem{lemma}[theorem]{Lemma}
\newtheorem{definition}{Definition}
\newcommand{\myitems}[1]{\begin{itemize}\setlength #1 \end{itemize}}
\newcommand{\myeqn}[1]{\begin{eqnarray}#1\end{eqnarray}}
\newcommand{\set}[1]{\{#1\}}
\newcommand{\rev}[1]{#1}
\def\mit{\mathit}
\def\eps{\epsilon}
\def\fr{\frac}
\def\-{\mbox{-}}
\def\real{\mathbb{R}}
\def\lc{\lceil}
\def\rc{\rceil}
\def\nn{\nonumber}
\def\*{\star}
\newcolumntype{M}{>{\centering\arraybackslash}m{2cm}}
\newcolumntype{L}{>{\raggedright\arraybackslash}m{6.0cm}}
\newcommand{\REV}[1]{{#1}}
\def\vgap{\vspace{0mm}}
\def\extraspacing{\vspace{0mm} \noindent}
\def\figcapup{\vspace{-2mm}}
\def\figcapdown{\vspace{-2mm}}
\def\dis{\mit{\delta}}
\def\X{\mathcal{X}}
\def\V{\mathcal{V}}
\def\E{\mathcal{E}}
\def\diam{\mit{diam}}
\def\dim{d}
\begin{document}
\setcopyright{cc}
\setcctype{by}
\acmJournal{PACMMOD}
\acmYear{2026}
\acmVolume{4}
\acmNumber{1 (SIGMOD)}
\acmArticle{36}
\acmMonth{2}
\acmPrice{}
\acmDOI{10.1145/3786650}
%%
%% The "title" command has an optional parameter,
%% allowing the author to define a "short title" to be used in page headers.
\title{Fast-Convergent Proximity Graphs for Approximate Nearest Neighbor Search}

%%
%% The "author" command and its associated commands are used to define
%% the authors and their affiliations.
%% Of note is the shared affiliation of the first two authors, and the
%% "authornote" and "authornotemark" commands
%% used to denote shared contribution to the research.
\author{Binhong Li}
\affiliation{%
  \institution{DSA Thrust, The Hong Kong University of Science and Technology (Guangzhou)}
  \country{China}
}
\email{bli120@connect.hkust-gz.edu.cn}

\author{Xiao Yan}
\affiliation{%
  \institution{Institute for Math \& AI, Wuhan University}
  \country{China}
}
\email{yanxiaosunny@whu.edu.cn}

\author{Shangqi Lu}
\authornote{Dr. Shangqi Lu is the corresponding author.}
\affiliation{%
  \institution{DSA Thrust, The Hong Kong University of Science and Technology (Guangzhou)}
  \country{China}
}
\email{shangqilu@hkust-gz.edu.cn}

%%
%% By default, the full list of authors will be used in the page
%% headers. Often, this list is too long, and will overlap
%% other information printed in the page headers. This command allows
%% the author to define a more concise list
%% of authors' names for this purpose.
% \renewcommand{\shortauthors}{Trovato et al.}

%%
%% The abstract is a short summary of the work to be presented in the
%% article.
\begin{abstract}
Approximate nearest neighbor (ANN) search in high-dimensional metric spaces is a fundamental problem with many applications. Over the past decade, proximity graph (PG)-based indexes have demonstrated superior empirical performance over alternatives. However, these methods often lack theoretical guarantees regarding the quality of query results, especially in the worst-case scenarios. In this paper, we introduce the \emph{$\alpha$-convergent graph} ($\alpha$-CG), a new PG structure that employs a new carefully designed \emph{edge pruning rule}. If the distance between the query point $q$ and its exact nearest neighbor $v^*$ is at most $\tau$ for some constant $\tau > 0$, our $\alpha$-CG finds the exact nearest neighbor in poly-logarithmic time, assuming bounded intrinsic dimensionality for the dataset; otherwise, it can find an ANN in the same time. To enhance scalability, we develop the \emph{$\alpha$-convergent neighborhood graph} ($\alpha$-CNG), a practical variant that applies the pruning rule locally within each point’s neighbors. We also introduce optimizations to reduce the index construction time. Experimental results show that our $\alpha$-CNG outperforms existing PGs on real-world datasets. For most datasets, $\alpha$-CNG can reduce the number of distance computations and search steps by over 15\% and 45\%, respectively, when compared with the best-performing baseline.
\end{abstract}

%This rule eliminates candidate neighbors for each data point $p$ by applying the {\em shifted-scaled} triangle inequalities among $p$, its existing out-neighbors, and new candidates. 

%%
%% The code below is generated by the tool at http://dl.acm.org/ccs.cfm.
%% Please copy and paste the code instead of the example below.
%%
% \begin{CCSXML}
% <ccs2012>
%    <concept>
%        <concept_id>10002951.10003317.10003338.10003346</concept_id>
%        <concept_desc>Information systems~Top-k retrieval in databases</concept_desc>
%        <concept_significance>500</concept_significance>
%        </concept>
%  </ccs2012>
% \end{CCSXML}

% \ccsdesc[500]{Information systems~Top-k retrieval in databases}

\begin{CCSXML}
	<ccs2012>
	<concept>
	<concept_id>10003752.10003809.10010031</concept_id>
	<concept_desc>Theory of computation~Data structures design and analysis</concept_desc>
	<concept_significance>500</concept_significance>
	</concept>
	<concept>
	<concept_id>10002951.10003317.10003325</concept_id>
	<concept_desc>Information systems~Information retrieval query processing</concept_desc>
	<concept_significance>500</concept_significance>
	</concept>
	</ccs2012>
\end{CCSXML}

\ccsdesc[500]{Theory of computation~Data structures design and analysis}
\ccsdesc[500]{Information systems~Information retrieval query processing}

%%
%% Keywords. The author(s) should pick words that accurately describe
%% the work being presented. Separate the keywords with commas.
\keywords{Proximity Graphs, Approximate Nearest Neighbor Search, Data Structures}

\received{July 2025}
\received[revised]{October 2025}
\received[accepted]{November 2025}
%%
%% This command processes the author and affiliation and title
%% information and builds the first part of the formatted document.
\maketitle
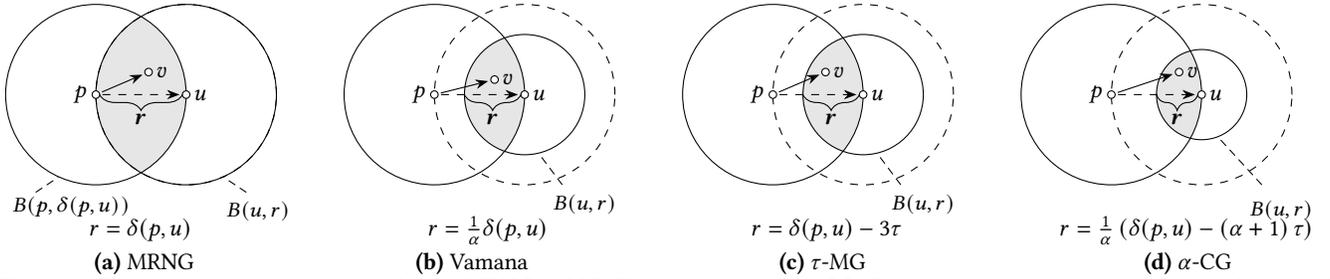
\begin{figure*}
	\centering
	% ====== 子标题行 ======
	
	\begin{tikzpicture}[scale=0.8, every node/.style={font=\small}]
		
		%%%% ==== 图 (a) ====
        \begin{scope}[shift={(0,0)}]
			\coordinate (u1) at (0,0);
			\coordinate (v1) at (1.2,0);
			\coordinate (up) at ($(v1)+(-0.5,0.3)$);
			\begin{scope}
				  \clip (u1) circle(1.2); % 限制交区域
				  \fill[gray!20] (v1) circle(1.2);
				\end{scope}
			
			% \begin{scope}
			% 	\clip (u1) circle(1.2);
			% 	\begin{scope}
			% 		\clip (v1) circle(1.2);
			% 		\fill[pattern=north east lines, pattern color=black!60, even odd rule]
			% 		(v1) circle(1.2) (v1) circle(0.8);  % 大圆减小圆
			% 	\end{scope}
			% \end{scope}
			\draw (u1) circle(1.2);
			\draw [dashed] (v1) circle(1.2);
			\draw (v1) circle(1.2);
			\draw[dashed, ->, >=Stealth,shorten <=2.5pt, shorten >=2.5pt] (u1) -- (v1);
                \draw[fill=white] (u1) circle(0.05) node[left] {$p$};
                \draw[fill=white] (v1) circle(0.05) node[right] {$u$};
			\draw[fill=white] (up) circle(0.05) node[right] {$v$};
			\draw [->, >=Stealth,shorten <=2.5pt ,shorten >=2.5pt] (u1) -- (up);
                \draw[decorate, decoration={brace, mirror, amplitude=6pt}]
                ($(u1)+(0.05,0)$) -- ($(v1)+(-0.05,0)$)
                node[midway, below=4pt] {\normalsize $\boldsymbol{r}$};
			% \draw[decorate, decoration={brace, mirror, amplitude=6pt}]
			% (u1) -- (p3t1)
			% node[midway, below=6pt] {\normalsize $\boldsymbol{3\tau}$};
			% \draw[->, densely dashed, line width=0.4pt, >=Stealth]
			% ($(v1)+(0.8,+1.3)$) -- ($(v1)+(0.5,+1.1)$)
			% node[above right=2pt] {\small $B(v,\delta(u,v))$};
			\draw[densely dashed, line width=0.4pt, >=Stealth]
			($(u1)-(0.8,1.3)$) -- ($(u1)-(0.5,1.1)$)
			node[below,xshift=5pt, yshift=-4pt] {\small $B(p,\dis(p,u))$};
			\draw[densely dashed, line width=0.4pt, >=Stealth]
			($(v1)+(0.8,-1.3)$) -- ($(v1)+(+0.5,-1.1)$)
			node[below right,xshift=-2pt, yshift=-6pt] {\small $B(u,r)$};
			% 箭头指向 pattern 区域（减法产生的 lune 区）
			% \draw[->, densely dashed, line width=0.4pt, >=Stealth]
			% ($(u1)+(0.6,-1.3)$) -- ($(u1)+(0.6,-0.8)$)
			% node[below=14pt] {\normalsize \textbf{reduction}};
			\node[anchor=west] at ($(u1)+(-0.4,-2.3)$) {\normalsize $r = \dis(p,u)$};
		\end{scope}

		%%%% ==== 图 (b) ====
		\begin{scope}[shift={(4.5,0)}]
			\coordinate (u1) at (0,0);
			\coordinate (v1) at (1.2,0);
			\coordinate (p3t1) at (0.4,0);
			\coordinate (up) at ($(v1)+(-0.4,0.2)$);
			\begin{scope}
				  \clip (u1) circle(1.2);
				  \fill[gray!20] (v1) circle(0.8);
				\end{scope}
			% \begin{scope}
			% 	\clip (u1) circle(1.2);
			% 	\begin{scope}
			% 		\clip (v1) circle(1.2);
			% 		\fill[pattern=horizontal lines, pattern color=black!40, even odd rule]
			% 		(v1) circle(1.2) (v1) circle(0.8);  % 大圆减小圆
			% 	\end{scope}
			% \end{scope}
			% horizontal lines
			\draw (u1) circle(1.2);
			\draw [dashed] (v1) circle(1.2);
			\draw (v1) circle(0.8);
			\draw[dashed, ->, >=Stealth,shorten <=2.5pt, shorten >=2.5pt] (u1) -- (v1);
                \draw[fill=white] (u1) circle(0.05) node[left] {$p$};
                \draw[fill=white] (v1) circle(0.05) node[right] {$u$};
			\draw[fill=white] (up) circle(0.05) node[right] {$v$};
			\draw [->, >=Stealth,shorten <=2.5pt ,shorten >=2.5pt] (u1) -- (up);
			\draw[decorate, decoration={brace, mirror, amplitude=6pt}]
			(p3t1) -- ($(v1)+(-0.05,0)$)
			node[midway, below=4pt] {\normalsize $\boldsymbol{r
				}$};
			% \draw[->, densely dashed, line width=0.4pt, >=Stealth]
			% ($(v1)+(0.8,+1.3)$) -- ($(v1)+(0.5,+1.1)$)
			% node[above right=2pt] {\small $B(v,\dis(u,v))$};
			% \draw[->, densely dashed, line width=0.4pt, >=Stealth]
			% ($(u1)-(0.8,1.3)$) -- ($(u1)-(0.5,1.1)$)
			% node[below left=2pt] {\small $B(p,\dis(p,u))$};
			% \draw[->, densely dashed, line width=0.4pt, >=Stealth]
			% ($(v1)+(0.4,-0.9)$) -- ($(v1)+(-0.3,-0.6)$)
			% node[below right=4pt] {\small $lune(v,r_\alpha)$};
			\draw[densely dashed, line width=0.4pt, >=Stealth]
			($(v1)+(0.5,-1.2)$) -- ($(v1)+(+0.2,-0.8)$)
			node[below right,xshift=2pt, yshift=-12pt] {\small $B(u,r)$};
			% \draw[->, densely dashed, line width=0.4pt, >=Stealth]
			% ($(u1)+(0.6,-1.3)$) -- ($(u1)+(0.6,-0.8)$)
			% node[below=14pt] {\normalsize \textbf{division}};
			\node[anchor=west] at ($(u1)+(-0.6,-2.3)$) {\normalsize $r =\frac{1}{\alpha}\dis(p,u)$};
		\end{scope}

                    %%%% ==== 图 (c) ====
		\begin{scope}[shift={(9,0)}]
			\coordinate (u1) at (0,0);
			\coordinate (v1) at (1.2,0);
			\coordinate (p3t1) at ($(u1)!0.333!(v1)$);
			\coordinate (up) at ($(v1)+(-0.5,0.3)$);
			\begin{scope}
				  \clip (u1) circle(1.2); % 限制交区域
				  \fill[gray!20] (v1) circle(0.8);
				\end{scope}
			
			% \begin{scope}
			% 	\clip (u1) circle(1.2);
			% 	\begin{scope}
			% 		\clip (v1) circle(1.2);
			% 		\fill[pattern=north east lines, pattern color=black!60, even odd rule]
			% 		(v1) circle(1.2) (v1) circle(0.8);  % 大圆减小圆
			% 	\end{scope}
			% \end{scope}
			\draw (u1) circle(1.2);
			\draw [dashed] (v1) circle(1.2);
			\draw (v1) circle(0.8);
			\draw[dashed, ->, >=Stealth,shorten <=2.5pt, shorten >=2.5pt] (u1) -- (v1);
                \draw[fill=white] (u1) circle(0.05) node[left] {$p$};
                \draw[fill=white] (v1) circle(0.05) node[right] {$u$};
			\draw[fill=white] (up) circle(0.05) node[right] {$v$};
			\draw [->, >=Stealth,shorten <=2.5pt ,shorten >=2.5pt] (u1) -- (up);
                \draw[decorate, decoration={brace, mirror, amplitude=6pt}]
			(p3t1) -- ($(v1)+(-0.05,0)$)
			node[midway, below=4pt] {\normalsize $\boldsymbol{r
				}$};
			% \draw[decorate, decoration={brace, mirror, amplitude=6pt}]
			% (u1) -- (p3t1)
			% node[midway, below=6pt] {\normalsize $\boldsymbol{3\tau}$};
			% \draw[->, densely dashed, line width=0.4pt, >=Stealth]
			% ($(v1)+(0.8,+1.3)$) -- ($(v1)+(0.5,+1.1)$)
			% node[above right=2pt] {\small $B(v,\delta(u,v))$};
			% \draw[->, densely dashed, line width=0.4pt, >=Stealth]
			% ($(u1)-(0.8,1.3)$) -- ($(u1)-(0.5,1.1)$)
			% node[below left=2pt] {\small $B(p,\dis(p,u))$};
			\draw[densely dashed, line width=0.4pt, >=Stealth]
			($(v1)+(0.5,-1.2)$) -- ($(v1)+(+0.2,-0.8)$)
			node[below right,xshift=2pt, yshift=-12pt] {\small $B(u,r)$};
			% 箭头指向 pattern 区域（减法产生的 lune 区）
			% \draw[->, densely dashed, line width=0.4pt, >=Stealth]
			% ($(u1)+(0.6,-1.3)$) -- ($(u1)+(0.6,-0.8)$)
			% node[below=14pt] {\normalsize \textbf{reduction}};
			\node[anchor=west] at ($(u1)+(-0.8,-2.3)$) {\normalsize $r = \dis(p,u) - 3\tau$};
		\end{scope}
		% \dis(u,v) - 3\tau
		
		%%%% ==== 图 (d) ====
		\begin{scope}[shift={(13.5,0)}]
			\coordinate (u1) at (0,0);
			\coordinate (v1) at (1.2,0);
			\coordinate (p3t1) at (0.3,0);
                \coordinate (p3t2) at (0.6,0);
			\coordinate (up) at ($(v1)+(-0.3,0.3)$);
			\begin{scope}
				  \clip (u1) circle(1.2);
				  \fill[gray!20] (v1) circle(0.6);
				\end{scope}
			% \begin{scope}
			% 	\clip (u1) circle(1.2);
			% 	\begin{scope}
			% 		\clip (v1) circle(1.2);
			% 		\fill[pattern=horizontal lines, pattern color=black!60, even odd rule]
			% 		(v1) circle(0.9) (v1) circle(0.6);  % 大圆减小圆
			% 	\end{scope}
			% \end{scope}
			
			% \begin{scope}
			% 	\clip (u1) circle(1.2);
			% 	\begin{scope}
			% 		\clip (v1) circle(1.2);
			% 		\fill[pattern=north east lines, pattern color=black!60, even odd rule]
			% 		(v1) circle(1.2) (v1) circle(0.9);  % 大圆减小圆
			% 	\end{scope}
			% \end{scope}
			
			\draw (u1) circle(1.2);
			\draw [dashed] (v1) circle(1.2);
			% \draw [dashed] (v1) circle(0.9);
			\draw (v1) circle(0.6);
			\draw[dashed, ->, >=Stealth,shorten <=2.5pt, shorten >=2.5pt] (u1) -- (v1);
                \draw[fill=white] (u1) circle(0.05) node[left] {$p$};
                \draw[fill=white] (v1) circle(0.05) node[right] {$u$};
			\draw[fill=white] (up) circle(0.05) node[right] {$v$};
			\draw [->, >=Stealth,shorten <=2.5pt ,shorten >=2.5pt] (u1) -- (up);
                \draw[decorate, decoration={brace, mirror, amplitude=6pt}]
			(p3t2) -- ($(v1)+(-0.05,0)$)
			node[midway, below=4pt] {\normalsize $\boldsymbol{r
				}$};
			% \draw[decorate, decoration={brace, mirror, amplitude=6pt}]
			% (u1) -- (p3t1)
			% node[midway, below=6pt] {\normalsize $\boldsymbol{\left(\alpha+1\right)\tau}$};
			% \draw[->, densely dashed, line width=0.4pt, >=Stealth]
			% ($(v1)+(0.8,-1.3)$) -- ($(v1)+(0.5,-1.1)$)
			% node[below right=2pt] {\small $B(v,\dis(u,v))$};
			% \draw[->, densely dashed, line width=0.4pt, >=Stealth]
			% ($(u1)-(0.8,1.3)$) -- ($(u1)-(0.5,1.1)$)
			% node[below left=2pt] {\small $B(p,\dis(p,u))$};
			% \draw[->, densely dashed, line width=0.4pt, >=Stealth]
			% ($(v1)+(0.5,-0.6)$) -- ($(v1)+(-0.2,-0.25)$)
			% node[below right=10pt] {\small $lune\left(v,r_{\alpha,\tau}
				% 	\right)$};
			\draw[densely dashed, line width=0.4pt, >=Stealth]
			($(v1)+(0.8,-1.2)$) -- ($(v1)+(+0.25,-0.6)$)
			node[below right, xshift=2pt, yshift=-16pt] {\small $B(u,r)$};
			% \draw[->, densely dashed, line width=0.4pt, >=Stealth]
			% ($(u1)+(0.6,-1.3)$) -- ($(u1)+(0.6,-0.8)$)
			% node[below=14pt] {\normalsize \textbf{reduction+division}};
			\node[anchor=west] at ($(u1)+(-1.8,-2.3)$) {\normalsize $r=\frac{1}{\alpha}\left(\dis(p,u) - \left(\alpha+1\right)\tau\right)$};
		\end{scope}
		% \frac{\dis(u,v) - (\alpha+1)\tau}{\alpha}
	\end{tikzpicture}
	
	% 子标题行：注意调整 parbox 宽度和间距来居中
    % \vspace{-2.0mm}
	\noindent
        \hspace{0.1cm}
        \makebox[2.5cm][c]{\normalsize \textbf{(a)} MRNG}
	\hspace{0.8cm}
	\makebox[2.5cm][c]{\normalsize \textbf{(b)} Vamana}
	\hspace{0.8cm}
	\makebox[2.5cm][c]{\normalsize \textbf{(c)} $\tau$-MG}
	\hspace{0.8cm}
	\makebox[2.5cm][c]{\normalsize \textbf{(d)} $\alpha$-CG}
    % \figcapdown\figcapdown
	\caption{The edge pruning rules of existing MRNG, $\tau$-MG, Vamana, and $\alpha$-CG. Given a data point $p$, a candidate $u$ is pruned if $p$ already has an out-neighbor $v$ falling within the intersection of $B(p, \dis(p, u))$ and $B(u, r)$.}
	\label{fig:lune-final}
\end{figure*}

\section{Introduction}
Approximate nearest neighbor search is a fundamental component in various applications such as image retrieval~\cite{kg09,pci+07}, recommendation systems~\cite{ddgr07,bnmn16}, and data mining~\cite{hff+17,i16}. Let $P$ be a set of $n$ data points from a metric space $(\X, \dis)$ where $\X$ is a set of points and $\dis:\X \times \X \rightarrow \real_{\ge 0}$ is a distance function. We want to construct a data structure on $P$ that supports the following {\em nearest neighbor} (NN) query: given a query point $q\in \X$, return the point $v^* \in P$ closest to $q$. Due to the ``curse of dimensionality", all known space-efficient solutions for NN search have search times that grow exponentially with the dimensionality of the metric space. To enhance query efficiency, researchers often resort to $c$-approximate nearest neighbor (ANN) queries for some constant $c > 1$, which balance query accuracy and search time by returning a point whose distance to $q$ is within a $c$-factor of the exact NN.

% \vgap

In recent years, numerous studies \cite{my20,fxwc19,pcc+23, sds+19,wxyw21,aep25} have demonstrated that proximity graph-based solutions exhibit superior empirical performance on real-world data over other indexes (e.g., trees~\cite{wwj+13,laja08} and inverted index~\cite{bl14,jds10}). A {\em proximity graph} (PG) is a simple directed graph, where each vertex represents a data point $p \in P$ and edges are connected based on particular geometric properties (usually close neighbors in distance). Given a query point $q$, a simple {\em greedy routing} is performed to find an ANN of $q$: starting from a fixed or random entry vertex, at each step, it explores the nearest out-neighbor of the currently visited vertex to $q$. This process traverses a search path until no closer nodes can be identified, returning the closest visited point as the answer\footnote{We can generalize the algorithm to find $k$ ANNs by maintaining a sorted list, known as the {\em beam search} algorithm (see Section~\ref{sec:pre:pg}).}. The running time of the algorithm is bounded by the total out-degree of the nodes on the search path. Therefore, to ensure efficient query performance, the goal is to construct a sparse graph (with a low maximum out-degree) that allows the search algorithm to converge to the ANNs of $q$ quickly.
\begin{table*}[!t]
\centering
\setlength{\tabcolsep}{3pt}  
\renewcommand{\arraystretch}{1.1} 
\figcapdown
\caption{Comparison of PG-based methods}
\figcapdown\figcapdown
\label{tab:method_comparison}

\resizebox{\textwidth}{!}{
\begin{tabular}{@{}l c c c c c c c@{}}
\toprule
\textbf{ } 
& \textbf{ }
& \multicolumn{3}{c}{\textbf{Accuracy Guarantees}} 
& \multicolumn{3}{c}{\textbf{Theoretical Properties}} \\
\cmidrule(lr){3-5} \cmidrule(lr){6-8}
\textbf{Method}
&\textbf{Pruning radius}
& $\delta(q, v^*) = 0$ 
& $\delta(q, v^*) < \tau$ 
& $\delta(q, v^*) > \tau$ 
& Routing Length 
& Avg. Time 
& Worst-case Time \\
\midrule

MRNG~\cite{fxwc19}
& $d(p,u)$
& \usym{1F5F8} & \usym{2715} & \usym{2715} 
& Linear 
& $O(n^{\frac{2}{m}}\ln n)$ 
& $O(n)$ \\

$\tau$-MG~\cite{pcc+23} 
& $d(p,u) - 3\tau$
& \usym{1F5F8} & \usym{1F5F8} & \usym{2715}
& Linear 
& $O(n^{\frac{1}{m}} (\ln n)^2)$ 
& $O(n)$ \\

Vamana~\cite{sds+19} 
& $\frac{1}{\alpha}d(p,u)$
& \usym{1F5F8} & \usym{2611} & \usym{2611}
& $O(\log_\alpha \fr{\Delta}{(\alpha-1)\eps})$ 
& -- 
& $O\!\left(\alpha^\dim \log \Delta \log \fr{\Delta}{(\alpha-1)\eps}\right)$ \\

\textbf{$\alpha$-CG (Ours)} 
& $\frac{1}{\alpha}d(p,u)-(\alpha+1)\tau$
& \usym{1F5F8} & \usym{1F5F8} & \usym{2611}
& $O(\log_\alpha \Delta)^\dag$ 
& -- 
& $O\!\left((\alpha \tau)^\dim \log \Delta \log_\alpha \Delta\right)^\dag$ \\

\bottomrule
\end{tabular}
}

% \vspace{0.5mm}
\begin{minipage}{\textwidth}
\footnotesize
\raggedright
\usym{1F5F8}: exact guarantee; \hspace{1em}
\usym{2611}: $\left(\frac{\alpha+1}{\alpha-1} + \epsilon\right)$-ANN; \hspace{1em}
\usym{2715}: no guarantee; \hspace{1em}
$m$: dimension of Euclidean space; \hspace{1em}
$d$: doubling dimension; \hspace{1em}
$\Delta$: aspect ratio; \hspace{1em}
$^\dag$ when $\dis(q, v^*) \le \tau$.
\end{minipage}
\end{table*}
\subsection{Existing PGs and Their Limitations}

Most methods for constructing PGs follow a common framework. For each data point $p$, we first obtain a candidate set $\V$ of points from $P$ that are close to $p$, through greedy search on a base graph (e.g., an approximate $K$-NN graph \cite{fxwc19, pcc+23}) or an intermediate graph over a subset of $P$ \cite{my20, sds+19}. A small subset of $\V$ is then selected as the out-neighbors of $p$ to maintain graph connectivity. Crucially, the resulting graph should have the ``shortcutable'' property: for any query point $q$, if data point $p$ is not an ANN of $q$, $p$ should have an out-neighbor that is closer to $q$ than $p$. To achieve this, existing PG solutions introduce {\em edge pruning rules} to eliminate unnecessary candidates in $\V$ in the following procedure. For each point $u\in \V$ processed in ascending order of the distances to $p$:
\myitems{
	\item $u$ is pruned if there already exists an out-neighbor $v$ of $p$ falling within the intersection of the two balls\footnote{Given any point $p\in \X$ and $r > 0$, the {\em ball} $B(p, r)$ with radius $r$ represents the set containing all the points in $\X$ whose distance to $p$ is at most $r$.} $B(p, \dis(p,u))$ and $B(u, r)$, where $r > 0$ is the \REV{\em pruning radius} depending on $\dis(p, u)$;
    \item otherwise, $u$ is added as a new out-neighbor of $p$.
}

% \vgap
\REV{Different choices of the pruning radius $r$
lead to distinct properties and search guarantees of PGs.} In early PG methods, such as HNSW \cite{my20} and NSG \cite{fxwc19}, $r$ is set directly to $\dis(p, u)$, \REV{as illustrated in
Fig.~\ref{fig:lune-final}a.}
% meaning that $\dis(p, u)$ is largest among the pairwise distances of $p$, $u$, and $v$.  
% We call the pruning rule the {\em triangle inequality}, which is illustrated in Figure~\ref{fig:lune-final}a. 
When $\V = P \setminus\set{p}$ for every $p\in P$ (the candidate set is the entire dataset), the constructed PG is the {monotonic relative neighborhood graph} (MRNG) and has the following property \cite{fxwc19, am93}: if query $q$ is a data point in $P$, a greedy routing will always terminate at $q$. The query time is $O(n^{2/m} \ln n)$ when the points in $P$ are from a uniform distribution in the $m$-dimensional Euclidean space \cite{fxwc19}. However, the search path length can be as large as $O(n)$ in the worst case, and when $q\notin P$, MRNG does not guarantee finding even an ANN.

\vgap

% The Vamana graph in the DiskANN method \cite{sds+19} employs the {\em scaled} triangle inequality by setting $r = \dis(p, u)/\alpha$ (see Figure~\ref{fig:lune-final}b) where $\alpha > 1$ is a parameter, leading to more relaxed pruning. 

\REV{The Vamana graph in the DiskANN method \cite{sds+19} employs a more relaxed edge pruning rule by setting $r = \dis(p, u)/\alpha$ (see Figure~\ref{fig:lune-final}b) where $\alpha > 1$ is a parameter.}
When $\V = P\setminus \set{p}$ for each $p\in P$, we refer to the constructed PG as the {\em slow preprocessing version} of Vamana. It can be verified that the distance of every visited node on the search path to the query point decreases by an $\alpha$-multiplicative factor when $q\in P$, but this does not hold when $\dis(q, v^*) > 0$. Recently, Indyk and Xu \cite{ix23} studied the worst-case performance of popular PGs (such as HNSW, NSG, and Vamana) when $\dis(q, v^*)$ can be arbitrarily large. They proved that only (the slow preprocessing version of) Vamana provides approximation guarantees: a greedy routing finds an $(\fr{\alpha+1}{\alpha-1}+\eps)$-ANN of $q$ in $O(\alpha^\dim \cdot \log \Delta \cdot \log_\alpha \fr{\Delta}{(\alpha-1)\eps})$ time. Here, $\Delta$ is the {\em aspect ratio}\footnote{The ratio between the maximum and minimum pairwise distances of $P$.} of $P$, and $\dim$ is the {\em doubling dimension} of the dataset (an intrinsic dimensionality, see Section~\ref{sec:pre:double}). However, Vamana still cannot find the exact NN when $q\notin P$.

% ; the challenge may arise from the potentially large distances between the query point $q$ and the data points, causing the search algorithm to terminate on a local optimum. 

\vgap

A more practical scenario between $\dis(q, v^*) = 0$ and $\dis(q, v^*) = \infty$ is the case when $\dis(q, v^*)$ is bounded by a small constant $\tau > 0$, as observed by \cite{hd16, pcc+23} in real-world datasets. Recently, Peng et al.~\cite{pcc+23} proposed the $\tau$-monotonic graph ($\tau$-MG) such that after each step in a greedy routing, the distance between the next visited node and query point $q$ is reduced by a $\tau$-additive factor. $\tau$-MG is constructed with $\V = P\setminus\set{p}$ for each $p\in P$ and another edge pruning rule by setting $r = \dis(p, u) - 3\tau$ (see Figure~\ref{fig:lune-final}c). The average search time of $\tau$-MG is $O(n^{1/m}(\ln n)^2)$ when $P$ is from a uniform distribution. But as the starting point may have a very large distance to $q$, the worst-case search time is still $O(n)$.

\extraspacing{\bf Motivation.} As summarized in Table~\ref{tab:method_comparison}, no existing algorithm can find the exact NN in poly-logarithmic time under the assumption that $\dis(q, v^*) \le \tau$, even when the dataset has a constant doubling dimension. Solving this problem will address an open question in the theoretical foundation of proximity graphs. Another question is whether we can combine the advantages of Vamana with $\tau$-MG and propose a unified framework that, when $\dis(q, v^*) \le \tau$, finds the exact NN, and when $\dis(q, v^*) > \tau$, returns an ANN, with search times consistently poly-logarithmic. In this paper, we will design a new PG framework that addresses these questions affirmatively.

\vgap

From a practical standpoint, our goal is to enhance both the accuracy and efficiency of existing PG algorithms. We seek to develop practical PGs that provide robust query performance on real datasets, regardless of whether the condition $\dis(q, v^*) \le \tau$ holds.

\subsection{Our Contributions} \label{sec:intro:our-result}

\noindent{\bf A new proximity graph with enhanced guarantees.} We propose a new PG, the {\em $\alpha$-convergent graph} ($\alpha$-CG), in which we set $\V = P\setminus\set {p}$ for each $p\in P$, and then \REV{utilize a carefully designed edge pruning rule that incorporates both parameters $\tau$ and $\alpha$ to eliminate unnecessary candidates.} Specifically, as illustrated in Figure~\ref{fig:lune-final}d, we set $r = \fr{1}{\alpha}(\dis(p, u) - (\alpha+1)\tau)$, which reduces the radius of the ball $B(u, \dis(p, u))$ by first subtracting $(\alpha+1)\tau$ from $\dis(p, u)$ and then scaling a multiplicative factor of $\alpha$. The intersection area is strictly smaller than that of the MRNG and Vamana.

\vgap

With this new edge pruning rule, we prove that the $\alpha$-CG admits non-trivial theoretical guarantees. When $\dis(q, v^*) \le \tau$, our $\alpha$-CG can find the exact NN of $q$ in $O((\alpha\tau)^\dim \log \Delta \log_\alpha \Delta) $ time, thereby achieving the first algorithm that can find the exact NN in poly-logarithmic time when the doubling dimension is constant (see Table~\ref{tab:method_comparison}). A critical property of our PG is that after each hop, the distance from the next visited node to $q$ is reduced by an $\alpha$-multiplicative factor, and the search algorithm terminates in $O(\log_\alpha \Delta)$ steps. While $\tau$-MG and the slow preprocessing version of Vamana terminates in $O(n)$ and $O(\log_\alpha \fr{\Delta}{(\alpha-1)\eps})$ steps, respectively. When $\dis(q, v^*) > \tau$, a greedy routing on $\alpha$-CG can find an $(\fr{\alpha+1}{\alpha-1}+\eps)$-ANN of $q$ for any $\eps > 0$, with the same query accuracy as that of Vamana. The space, query time, and construction time of $\alpha$-CG match those of Vamana \cite{ix23}, up to an $O(\tau^\dim)$-factor (see Theorem~\ref{thm:qry-time}).

%\REV{The}

\extraspacing{\bf A practical variant with efficient construction.} To reduce the index construction time, similar to existing approaches, we propose a practical variant of our $\alpha$-CG, the {\em $\alpha$-convergent neighborhood graph} ($\alpha$-CNG). The graph is constructed by generating a {\em local-neighbor set} $\V$ for each data point $p$ (much smaller than $P\setminus \set{p}$), and subsequently applying our edge pruning rule to select a small subset of $\V$ as the out-neighbors of $p$. 

\vgap

Empirical analysis shows that the parameter $\alpha$ greatly affects the performance of our graph. Increasing $\alpha$ reduces the number of search steps but results in higher out-degrees in the PG. This trade-off complicates the selection of an optimal $\alpha$, a problem that subsequent works \cite{sds+19, ssks21,gks+23,jkgsa22} on Vamana have overlooked. We propose an {\em adaptive local pruning} strategy that adjusts $\alpha$ for each node locally during graph construction. Starting from a small value, we gradually increase $\alpha$ and prune candidates until the node’s out-degree reaches a predefined threshold, preserving long-distance shortcut edges and maintaining graph connectivity. To construct our graph efficiently, we implement a {\em distance-reusing} mechanism that stores and reuses intermediate computation results to accelerate the adaptive pruning process, along with a {\em lazy pruning} strategy that significantly reduces the number of pruning operations.

\extraspacing{\bf Experiments.} We compared our $\alpha$-CNG with 4 state-of-the-art PG indexes on 8 real-world datasets. At the same accuracy levels, $\alpha$-CNG usually reduced distance computations by at least 15\% and search steps by over 45\% when compared with the best-performing baseline, while the maximum speedups in distance computations and search steps can be 2.28x and 2.88x, respectively. These improvements indidate the faster convergence of our method, making $\alpha$-CNG particularly suitable for disk-based or distributed deployments, where I/O operations scale proportionally with search steps. Besides, both the index sizes and construction times of $\alpha$-CNG are comparable to existing PGs (e.g., HNSW). We also validated our edge pruning rule by integrating it into HNSW and Vamana, and the results show that it improved efficiency for most datasets, suggesting the effectiveness of our pruning rule.

\section{Preliminaries}
\subsection{Problem Setting and Basic Notations} \label{sec:pre:prob-def}
%\extraspacing{\bf Problem setting.}

A {\em metric space} $(\X, \dis)$ consists of a set $\X$ of points and a distance function $\dis: \X \times \X \rightarrow \mathbb{R}_{\geq 0}$ satisfying the \REV{{\em triangle inequality}} $\dis(p_1, p_2) + \dis(p_2, p_3) \ge \dis(p_1, p_3), \: \forall p_1, p_2, p_3\in \X$. Let $P \subseteq \X$ be a set of $n$ data points from $\X$. \REV{Given a query point $q$, a point $v^*\in P$ is a {\em nearest neighbor} (NN) of $q$ if $\dis(v^*, q)\le \dis(p, q)$ for all $p\in P$; while a point $p'\in P$ is a {\em $c$-approximate nearest neighbor} ($c$-ANN) of $q$ for some constant $c > 1$ if $\dis(p', q)\le c\cdot \dis(p, q)$ for all $p\in P$.} We want to preprocess $P$ into a data structure with a small space that can answer exact NN or ANN queries efficiently. 

\vgap

For any set $S\subseteq \X$, the diameter of $S$ --- denoted as $\diam(S)$ --- is the maximum distance of two points in $S$, while the {\em aspect ratio of $S$} is the ratio between $\diam(S)$ and the minimum pairwise distance in $S$. We will use $\Delta$ to denote the aspect ratio of the input set $P$. In this paper, we will assume that the minimum pairwise distance in $P$ is exactly 1, as can be achieved by scaling the distance function $\dis$ appropriately. Hence, we have $\Delta = diam(P)$.

\vgap

This paper studies a practical scenario where the distance from the query point $q$ to its NN $v^*$ is bounded by a small constant $\tau \in (0, \Delta]$. Our objective is to develop a PG that can find the exact NN efficiently. For simplicity in the analysis, we assume $\tau \ge 1$. The scenario where $\tau < 1$ can be handled by using a solution designed for $\tau \ge 1$. We will explore how to support ANN queries efficiently when the condition $\mathrm{d}(q, v^*) \le \tau$ does not hold.

\vgap

Many real-world applications often require retrieving the top-$k$ nearest neighbors.  This leads to the \emph{approximate $k$-nearest neighbor search} problem, where each query returns a set of $k$ data points whose distances to $q$ are no further than the other data points by at most a constant factor. To evaluate the accuracy of a method, empirical studies often rely on ranking-based metrics that compare the returned set with the true top-$k$ results. A widely used metric is \emph{recall}, which measures the average fraction of the true $k$ nearest neighbors returned by the data structure.
\begin{algorithm}[!t]
	\caption{\textbf{beam-search($G$, $q$, $s$, $L$, $k$)}}
	\label{alg:greedy-search-impl}
	\begin{flushleft}
		\textbf{Input:} graph $G$, query point $q$, entry point $s$, queue size $L$ \\
		\textbf{Output:} $k$ ANN of $q$
	\end{flushleft}
	\begin{algorithmic}[1]
		\State candidate queue $Q\leftarrow \set{s}$
		\State explored set $\E = \emptyset$
		\While{$Q\setminus \E \ne \emptyset$}
		\State $u^* \leftarrow \arg\min \{ \dis(x, q) \mid x \in Q\setminus \E \}$
		\For{each out-neighbor $v$ of $u^*$}
		\State $Q\leftarrow Q \cup \set{v}$
		\EndFor
		\State keep the $L$ entries in $Q$ that are closest to $q$
        \State $\E \leftarrow \E \cup\set{u^*}$
		\EndWhile
		\State \textbf{return} $k$ points in $Q$ closest to $q$
	\end{algorithmic}
\end{algorithm}
\subsection{Proximity Graphs} \label{sec:pre:pg}

A {\em proximity graph} (PG) on $P$ is a simple directed graph $G$ whose vertices are precisely the points of $P$. We denote directed edges as $(u, v)$, representing arcs from vertex $u$ to vertex $v$, and define $N^+(u)$ as the set of out-neighbors of $u$ in $G$.

\vgap

Although various PG methods may use different strategies for connecting the edges in $G$, a simple greedy algorithm is commonly used to answer $k$-ANN queries. As shown in Algorithm~\ref{alg:greedy-search-impl}, given a query point $q$, an entry (starting) point $s\in P$, and a queue size $L$, the {\em beam search} algorithm maintains a queue $Q$ containing up to $L$ of the closest points visited during search. We say a point $u$ in $Q$ is {\em explored} if the distances of its out-neighbors to $q$ are computed. Initially, $Q$ contains only the entry point $s$. At each step, the algorithm selects a point $u^*\in Q$ that is closest to $q$ and has not been explored. It then visits the out-neighbors of $u^*$, attempts to insert them into $Q$, and marks $u^*$ as explored. We refer to $u^*$ as a {\em hop vertex}. When all the points in $Q$ are visited, the algorithm terminates by returning the $k$ points in $Q$ with the smallest distances to $q$. A special case is when $L = 1$, the algorithm --- referred to as {\em greedy grouting} --- traverses a sequence of hop vertices having descending distances to $q$, returning the last hop vertex as the answer.

\vgap

Following the convention of previous works \cite{ix23, fxwc19, pcc+23}, this paper will assume that each distance computation consumes constant time. The query time of the {beam search} algorithm is asymptotically bounded by the total number of distance computations, which is the total out-degree of the visited hop vertices.

\subsection{Doubling Dimension} \label{sec:pre:double}

Given a point $p$ and a real value $r > 0$, we will use $B(p, r)$ to represent the set containing all the points in $\X$ whose distance to $p$ is at most $r$; we refer to $B(p, r)$ as {\em a ball} centered at $p$ with radius $r$. To analyze the performance of PG-based algorithms, we introduce the notion of doubling dimension, which is often used to measure the “intrinsic dimensionality” of high-dimensional point sets \cite{gkl03, bkl06, kl04}:
\begin{definition}[Doubling Dimension]
	Let $(\mathcal{X}, \dis)$ be a metric space. A finite dataset $D \subseteq \mathcal{X}$ is said to have \emph{doubling constant} $\lambda$ if, for any point $p \in D$ and radius $r > 0$, the set $D \cap B(p, r)$ can be covered by at most $\lambda$ balls of radius $r/2$, and $\lambda$ is the smallest number with this property. The {doubling dimension} of $D$ is defined as $\log_2 \lambda$.
\end{definition}

The doubling dimension generalizes the Euclidean dimension. For any set $D\subseteq \real^m$, the doubling dimension of $D$ is $O(m)$.\footnote{\url{https://en.wikipedia.org/wiki/Doubling_space}} For example, when $m = 2$, any ball of radius $r$ can be covered by 7 balls with radius $r/2$, meaning that the doubling constant of $D$ is at most $\log_2 7$. Moreover, empirical studies \cite{fk94, ix23} showed that the doubling dimension of real data sets is often smaller than their ambient dimension, e.g., dimension $d$ of the Euclidean space. Following the previous studies on doubling dimension \cite{cg06, hm06, kl04}, this paper assumes that the doubling dimension of input $P$ --- denoted as $\dim$ --- is $O(1)$, namely, $P$ has a low doubling dimension. Based on the definition of doubling dimension, we can derive \cite{ix23}:
\begin{lemma} \label{lmm:cover-ball}
	\label{lem:doubling-dimension}
	Consider any set $P$ of points with doubling dimension $\dim$. For any point $p \in P$ and real values $R \ge r > 0$ satisfying $R/r = O(1)$, the set $P \cap B(p, R)$ can be covered by $O((R/r)^\dim)$ balls of radius $r$. Formally, there exist $p_1, \dots, p_s \in P$ such that:
	$$P \cap B(p, R) \subseteq \bigcup_{i=1}^s B(p_i, r) \textrm{  and  } s = O((R/r)^\dim).$$
\end{lemma}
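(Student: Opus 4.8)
The plan is to establish the bound by iterating the defining property of the doubling constant $\lambda = 2^{\dim}$. Write $S_0 = P \cap B(p, R)$. A single application of the doubling property covers $S_0$ by at most $\lambda$ balls of radius $R/2$; intersecting each of these with $P$ and applying the property again covers $S_0$ by at most $\lambda^2$ balls of radius $R/4$; continuing this way, after $k$ rounds we obtain a cover of $P \cap B(p, R)$ by at most $\lambda^k$ balls of radius $R/2^{k}$. (Shrinking a ball's radius only shrinks the subset of $P$ it covers, so retaining the label ``cover'' at every stage is harmless.)

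Next I would fix the number of rounds. Taking $k = \lceil \log_2(R/r) \rceil + 1$ makes $R/2^{k} \le r/2$, so every ball in the resulting cover has radius at most $r/2$. For each ball $B(c_i, R/2^{k})$ that actually meets $P$, I would fix a witness $p_i \in P \cap B(c_i, R/2^{k})$; by the triangle inequality every point of $P \cap B(c_i, R/2^{k})$ lies within distance $2\cdot R/2^{k} \le r$ of $p_i$, hence $P\cap B(c_i, R/2^{k}) \subseteq B(p_i, r)$ with $p_i \in P$. Discarding the balls that miss $P$ and using the $p_i$ as centers yields $P \cap B(p,R) \subseteq \bigcup_i B(p_i, r)$, and the number of surviving balls is at most $\lambda^{k} = 2^{\dim(\lceil \log_2(R/r)\rceil + 1)} \le 2^{2\dim}\,(R/r)^{\dim} = O\bigl((R/r)^{\dim}\bigr)$, the last step absorbing the $2^{2\dim}$ factor using $\dim = O(1)$.

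The only point requiring genuine care is the recentering step: the $p_i$ must be data points of $P$, whereas the covering balls produced by the doubling property are a priori centered anywhere in $\X$. This is exactly what the extra halving round buys — running the iteration to target radius $r/2$ rather than $r$ lets a single triangle-inequality recentering at the end restore $P$-centers at the cost of only a constant factor in $s$. Everything else is a routine geometric induction, and the assumption $\dim = O(1)$ ensures all hidden constants are genuinely constant. The main obstacle is therefore purely bookkeeping — choosing $k$ and tracking the radii and counts through the iteration — rather than any substantive difficulty.
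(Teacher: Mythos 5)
The paper does not actually prove this lemma---it is stated as a known consequence of the definition of doubling dimension and cited from Indyk and Xu~\cite{ix23}---so there is no ``paper proof'' to match against. Your argument is the standard iterated-doubling proof, and the bookkeeping at the end (choice of $k$, the triangle-inequality recentering, and the bound $\lambda^{k}\le 2^{2\dim}(R/r)^{\dim}$) is correct.

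One loose end deserves attention: the iteration step itself, not just the final recentering. The paper's definition only guarantees that $D\cap B(p,r)$ can be covered when the ball is centered at a point $p\in D$, while the covering balls it produces may be centered anywhere in $\X$. You correctly flag this for the \emph{output} centers, but your induction also needs it at every \emph{intermediate} level: to ``apply the property again'' to $P\cap B(c_i,R/2)$ you need $c_i\in P$, which is not guaranteed. The fix is either to adopt the usual convention that the covering balls are centered at points of $D$ (in which case your final recentering round is unnecessary), or to recenter at a witness data point at each level, which costs an extra halving per level and at worst replaces $\lambda^{k}$ by $\lambda^{2k}$; under the lemma's hypotheses $R/r=O(1)$ and $\dim=O(1)$ this still yields $s=O((R/r)^{\dim})$. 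So the gap is cosmetic rather than fatal, but it should be closed explicitly since it sits at the heart of the induction you are running.
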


\section{Alpha-Convergent Graph} \label{sec:alpha-pg}

\REV{This section presents a new PG, called the {\em $\alpha$-convergent graph ($\alpha$-CG)}, which offers stronger theoretical guarantees on query time and accuracy. }Similar to Vamana and $\tau$-MG, its construction time is $\Omega(n^2)$. We will provide a practical variant in Section~\ref{sec:practical-pg} to reduce the construction time.

\subsection{A New Pruning Strategy} \label{sec:alpha-pg:pruning-rule}

Recall that in PG construction, for each point $p\in P$, we need to select an appropriate subset from a candidate $\V$ as the out-neighbors of $p$. We consider the following sub-problem:

\begin{definition}[Neighbor Selection \cite{aep25}]\label{def:candidate-sel}
    For each point $p\in P$ and a candidate set $\V\subseteq P\setminus\set{p}$ of points, select a subset of $\V$ as the out-neighbors of $p$, while maintaining graph connectivity.
\end{definition}
    
An algorithm for this subproblem serves as a critical component in PG construction. The goal is to choose a ``shortcut" subset from $\V$ such that if $p$ is not the exact NN of $q$, then $p$ connects to an out-neighbor $v$ much closer to $q$ than $p$. To achieve the purpose, we introduce the following edge pruning rule:

%\vgap

\begin{definition}[\REV{Edge pruning rule of $\alpha$-CG}]\label{def:edge-pruning}
	Given any point $p$ and its candidate set $\V$, a point $u\in \V$ is pruned if there exists an out-neighbor $v$ of $p$ satisfying the following condition:
	\myeqn{
		\dis(p, u) > \alpha \cdot \dis(u, v) + (\alpha + 1) \cdot \tau \label{eqn:pruning-rule-ine}
	}
	where $\alpha > 1$ is a parameter.
\end{definition}
It can be verified \eqref{eqn:pruning-rule-ine} is equivalent to $v\in B(p, \dis(p,u)) \cap B(u, r)$ where $r = \fr{1}{\alpha}(\dis(p, u) - (\alpha+1)\tau)$, as shown in Figure~\ref{fig:lune-final}d. Given this pruning rule, we define a {\em pruning} procedure (Algorithm~\ref{alg:pruning-pro}) to solve the problem in Definition~\ref{def:candidate-sel}. Specifically, we sort the points in $\V$ in ascending order according to their distances to $p$. Then, we iteratively select a subset $S$ of $\V$ in the following manner. Initially, set $S = \emptyset$. For every point $u$ in this sorted sequence, check if there exists a point $v\in S$ satisfying the condition \eqref{eqn:pruning-rule-ine}. If not, add $u$ into $S$; otherwise, skip $u$ and proceed to check the next point. When all points in $\V$ are checked, the procedure returns $S$, which will be referred to as the {\em shortcut set of $p$} on $\V$. The next subsection will show that when $\V = P\setminus\set{p}$, the edges from $p$ to nodes in $S$ satisfy the ``shortcutable'' property with a fast convergence rate.

%the returned shortcut set $S$ processes a nice property.
\begin{algorithm}[!t]
	\caption{\textbf{pruning($p$, $\V$, $\alpha$)}}
	\label{alg:pruning-pro}
	\begin{flushleft}
		\textbf{Input:} point $p$, candidate set $\V$, parameter $\alpha$ \\
		\textbf{Output:} a shortcut set $S$ of $p$ on $\V$
	\end{flushleft}
	\begin{algorithmic}[1]
		\State sort $\V$ in the ascending order of the distances to $p$
		\State shortcut set $S = \emptyset$
		\For{each $u\in \V$ (sorted)}
		\State {\bf if} there exists a point $v\in S$ satisfying inequality~\eqref{eqn:pruning-rule-ine}
		\State \> \>\>\>  {\bf then} continue
		\State {\bf else} $S \leftarrow S \cup \set{u}$
        %\State {\bf if} $|S| \ge m$ {\bf then} break
		\EndFor
		\State \textbf{return} $S$
	\end{algorithmic}
\end{algorithm}

\subsection{The $\alpha$-Convergent Graph} \label{sec:alpha-pg:graph}

\REV{Based on our new edge pruning rule, we define the $\alpha$-CG $G$ of $P$ as follows:}
\begin{definition} [$\alpha$-convergent graph]
	Every point of $P$ is a vertex of $G$ and vice versa. For each point $p\in P$, run the {pruning} procedure (Algorithm~\ref{alg:pruning-pro}) with $\V = P\setminus \set{p}$, and define the returned shortcut set $S$ as the out-neighbors of $p$ in $G$.
\end{definition}

% \begin{figure}[t!]
%   \centering
%   \includegraphics[width=0.46\linewidth]{Figure/query_time.pdf}
%   \figcapup
%   \caption{
%     Illustration of $\alpha$-reducible property
%   }
%   \figcapdown
%   \label{fig:alpha_reducible}
% \end{figure}

\begin{figure}[t!]
  \centering

  % -------- (a) alpha-reducible --------
  \begin{subfigure}[t]{0.27\linewidth}
    \centering
    \includegraphics[width=0.80\linewidth]{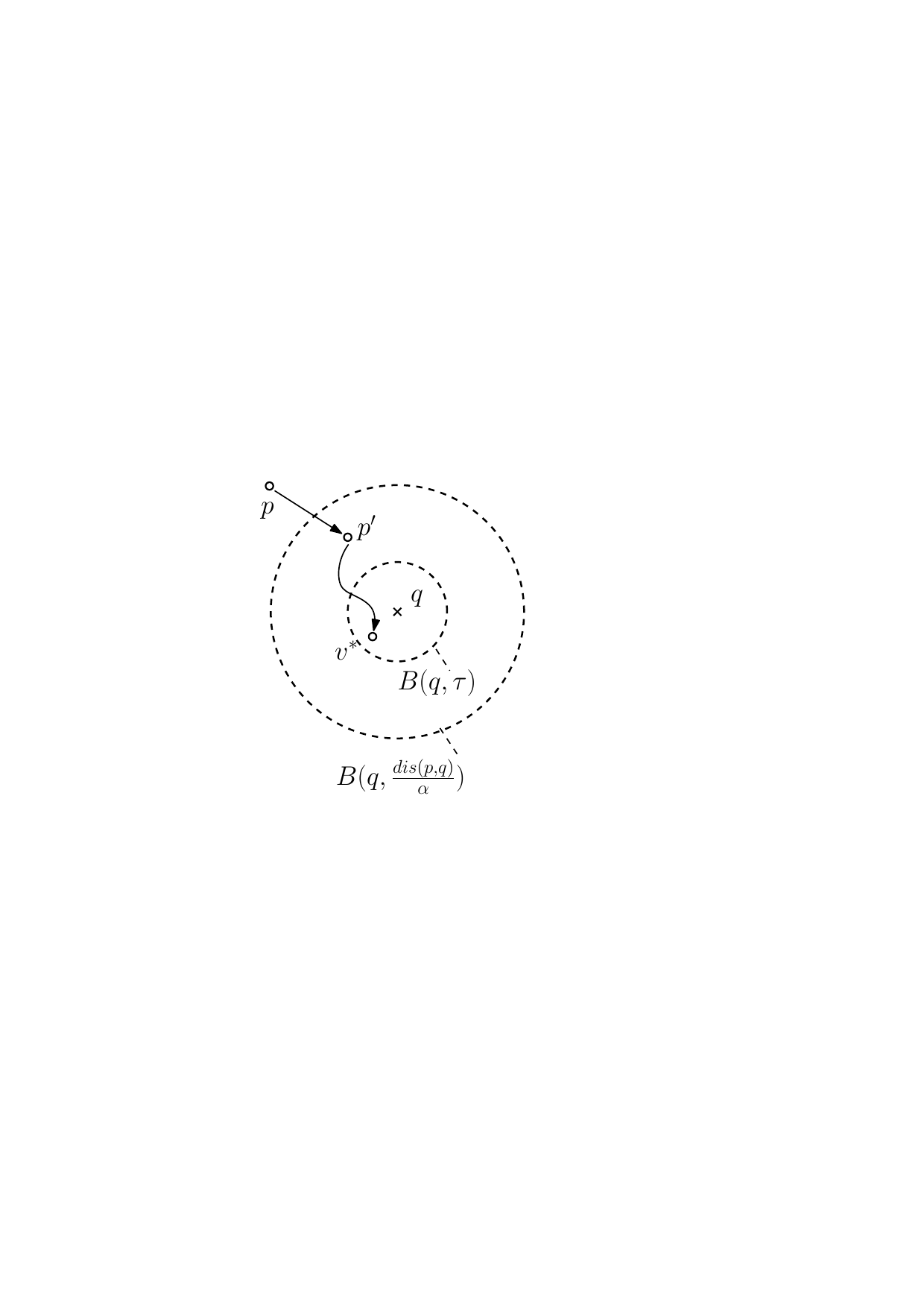}
    \caption{$\alpha$-reducible property}
    \label{fig:alpha_reducible}
  \end{subfigure}
  \hfill
  % -------- (b) case 1 --------
  \begin{subfigure}[t]{0.30\linewidth}
    \centering
    \includegraphics[width=0.85\linewidth]{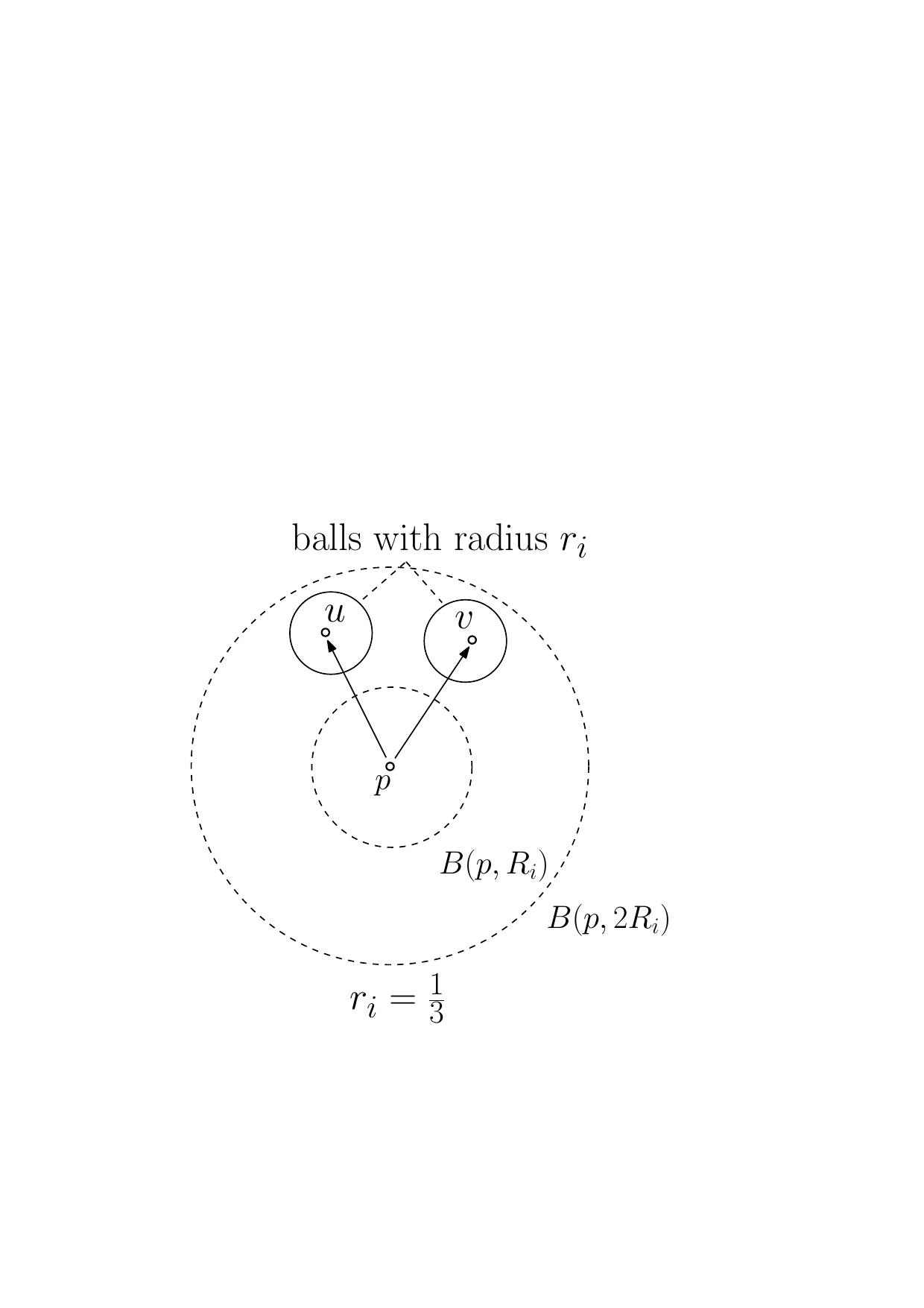}
    \caption{Out-degree analysis: Case 1}
    \label{fig:case1}
  \end{subfigure}
  \hfill
  % -------- (c) case 2 --------
  \begin{subfigure}[t]{0.30\linewidth}
    \centering
    \includegraphics[width=0.8\linewidth]{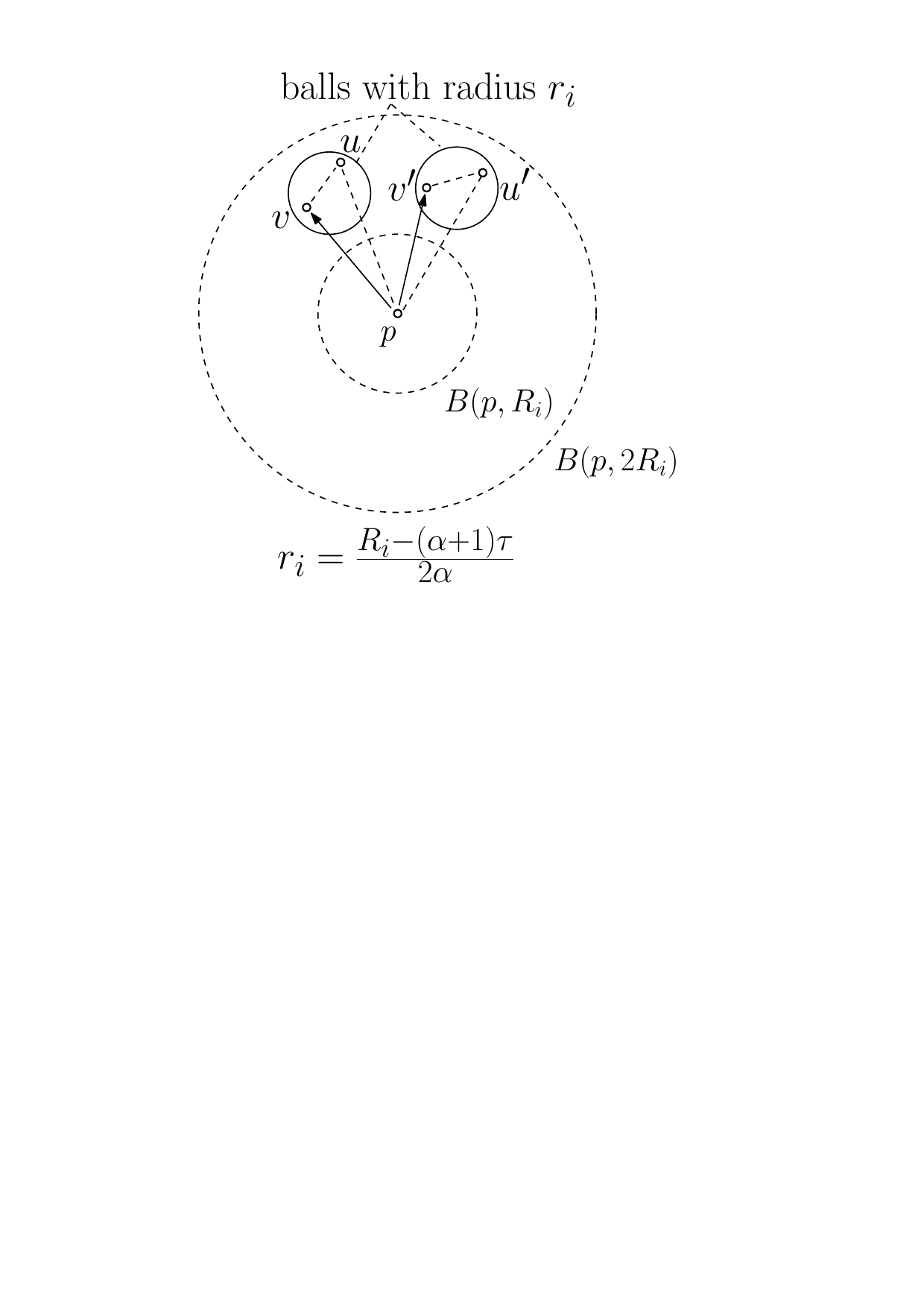}
    \caption{Out-degree analysis: Case 2}
    \label{fig:case2}
  \end{subfigure}

  \figcapup
  \caption{
  \REV{Geometric illustrations of the pruning rule.
  (a) Illustration of the $\alpha$-reducible property.
  (b,c) Two cases in the out-degree analysis.}
  }
  \figcapdown
  \label{fig:geometry_analysis}
\end{figure}

We begin by proving the following property. 
\begin{lemma}[$\alpha$-reducible property] \label{lmm:reducible}
	Consider any point $q$ whose exact NN $v^* \in P$ satisfies $\dis(q, v^*) \le \tau$. For every point $p \in P$ with $p \ne v^*$, either (i) $v^*$ is an out-neighbor of $p$ in $G$, or (ii) there exists an edge $(p, p')$ in $G$ such that $\dis(p', q) \le \dis(p, q)/\alpha$.
\end{lemma}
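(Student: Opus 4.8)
The plan is to argue that if the pruning procedure removed every candidate that could serve as a better-than-$\alpha$ step toward $q$, then one of the out-neighbors of $p$ that caused those removals is itself a good step. First I would set $r = \dis(p,q)$, let $v^*$ be the exact NN of $q$, and consider the ball $B = B(p, \dis(p,q)/\alpha)$; since $p$ is not the NN, we have $\dis(p,q) > \dis(v^*,q)$, so in particular $\dis(p,q) \ge 1$ (minimum pairwise distance is $1$) and $\tau$ is a genuine additive slack. The natural candidate ``target'' is $v^*$ itself: we have $\dis(p, v^*) \le \dis(p,q) + \dis(q,v^*) \le \dis(p,q) + \tau$, so $v^*$ was in $\V = P \setminus \{p\}$ and was examined by Algorithm~\ref{alg:pruning-pro}. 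Either $v^*$ was added to $S$ (an out-neighbor of $p$), in which case I would check directly that $\dis(v^*, q) \le \tau \le \dis(p,q)/\alpha$ whenever $\dis(p,q)$ is large enough — and handle the small-$\dis(p,q)$ regime separately, since there the claim ``$\dis(p',q) \le \dis(p,q)/\alpha$'' must be extracted from the pruning inequality rather than from $v^*$ being close to $q$.

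The substantive case is when $v^*$ was pruned: there is an out-neighbor $v$ of $p$ with $\dis(p, v^*) > \alpha \cdot \dis(v^*, v) + (\alpha+1)\tau$. I would then bound $\dis(v, q)$ by the triangle inequality through $v^*$: $\dis(v,q) \le \dis(v, v^*) + \dis(v^*, q) \le \dis(v,v^*) + \tau$. From the pruning inequality, $\dis(v, v^*) < \frac{1}{\alpha}\bigl(\dis(p,v^*) - (\alpha+1)\tau\bigr) \le \frac{1}{\alpha}\bigl(\dis(p,q) + \tau - (\alpha+1)\tau\bigr) = \frac{1}{\alpha}\bigl(\dis(p,q) - \alpha\tau\bigr)$. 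Plugging in gives $\dis(v,q) < \frac{1}{\alpha}\dis(p,q) - \tau + \tau = \dis(p,q)/\alpha$, exactly as desired, with $p' = v$. The algebra here is where the constants $(\alpha+1)\tau$ in the pruning rule are tuned precisely so that the two $\tau$-terms cancel after scaling — this cancellation is the crux and I expect it to be the main thing to get exactly right.

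One subtlety I would flag: Algorithm~\ref{alg:pruning-pro} checks the inequality against the current partial set $S$ at the time $v^*$ is processed, so the pruning witness $v$ is an element of $S$ at that moment and hence remains in $S$ (the algorithm never deletes), so $v$ is genuinely an out-neighbor of $p$ in the final $\alpha$-CG; I should state this explicitly. The remaining obstacle is the boundary regime where $\dis(p,q)$ is small (say $\dis(p,q) < \alpha\tau$): here $v^*$ may simply be adopted as an out-neighbor, and I would argue that in that case $\dis(v^*,q) \le \tau < \dis(p,q)/\alpha$ fails in general, so instead I must show $v^*$ cannot have been pruned when $\dis(p,v^*)$ is this small — indeed the pruning inequality $\dis(p,v^*) > \alpha\dis(v^*,v) + (\alpha+1)\tau \ge (\alpha+1)\tau > \dis(p,v^*)$ would be contradicted once $\dis(p,v^*) \le (\alpha+1)\tau$, forcing $v^* \in S$; then I close the argument by checking $\dis(v^*,q) \le \tau$ against $\dis(p,q)/\alpha$ using $\dis(p,q) > \dis(v^*,q)$ together with $\dis(p,q) \ge 1$ and $\tau \ge 1$ — this last inequality is the place where the assumption $\tau \ge 1$ (and the role of $\tau$ versus the unit minimum distance) is actually used, and I would make sure the write-up isolates it cleanly.
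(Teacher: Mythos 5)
Your treatment of the case where $v^*$ is pruned is, step for step, the paper's own proof: take $u=v^*$ and $v=p'$ in the pruning condition, combine $\dis(p,q)\ge\dis(p,v^*)-\dis(q,v^*)$ with $\dis(p',q)\le\dis(p',v^*)+\tau$, and the $(\alpha+1)\tau$ term cancels to leave $\dis(p,q)\ge\alpha\cdot\dis(p',q)$ --- the same two triangle inequalities and the same cancellation you correctly identify as the crux. Your remark that the pruning witness remains an out-neighbor because Algorithm~\ref{alg:pruning-pro} never deletes from $S$ is also right.

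The gap is in the sub-case you flag yourself: $v^*$ not pruned and $\dis(p,q)<\alpha\tau$. Two things go wrong there. First, ``$\dis(p,q)>\dis(v^*,q)$, so in particular $\dis(p,q)\ge 1$'' is a non sequitur: the unit lower bound applies only to pairwise distances within $P$, and $q$ is not a data point, so $\dis(p,q)$ is bounded below only by $\dis(p,v^*)-\dis(v^*,q)\ge 1-\tau$, which is $\le 0$ under the standing assumption $\tau\ge 1$. Second, and more fundamentally, the ingredients $\dis(v^*,q)\le\tau$, $\dis(p,q)>\dis(v^*,q)$, $\tau\ge 1$ cannot yield $\dis(v^*,q)\le\dis(p,q)/\alpha$, because that inequality is simply false in this regime: on the real line take $P=\{0,\,1.3,\,2.3\}$ (minimum pairwise distance $1$), $q=0.6$, $\alpha=1.5$, $\tau=1$. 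Then $v^*=0$ with $\dis(q,v^*)=0.6\le\tau$, and for $p=1.3$ the point $v^*$ cannot be pruned (since $\dis(p,v^*)=1.3<(\alpha+1)\tau$), so $N^+(p)=\{2.3,\,0\}$; yet neither out-neighbor is within $\dis(p,q)/\alpha=0.7/1.5\approx 0.467$ of $q$. So this is not a write-up detail to ``isolate cleanly'' --- no argument can close it. What is actually provable (and what the paper's proof establishes, since it silently treats only the pruned case) is the disjunction: either $(p,v^*)$ is an edge of $G$, or some out-neighbor $p'$ satisfies $\dis(p',q)\le\dis(p,q)/\alpha$. That weaker statement suffices for the step-count argument in Lemma~\ref{lmm:greedy-steps}, where the final hop to $v^*$ is handled separately; your proof should target the disjunction rather than trying to force the $\alpha$-multiplicative reduction through $v^*$ itself.
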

\begin{proof}
    Consider any $p \in P$ such that $p \neq v^*$ and $v^*$ is not an out-neighbor of $p$ in $G$. According to Definition~\ref{def:edge-pruning}, as $v^*$ is pruned from $p$'s candidate set, there must exist an out-neighbor $p'$ of $p$ such that (by setting $u = v^*$ and $v = p'$):
	\myeqn{
		\dis(p, v^*) > \alpha \cdot \dis(p', v^*) + (\alpha + 1) \cdot \tau. \label{eqn:lmm-condition1}
	}
	Combining the above with the \REV{triangle inequality} $\dis(p, q) \ge \dis(p, v^*) - \dis(q, v^*)$, we have
	\myeqn{
		\dis(p, q) &\ge& \dis(p, v^*) - \dis(q, v^*) \nn \\
		\textrm{(by \eqref{eqn:lmm-condition1})} &>& \alpha \cdot \dis(p', v^*) + (\alpha + 1) \cdot \tau -\dis(q, v^*) \nn \\
		&\ge& \alpha \cdot \dis(p', v^*) + \alpha \cdot \tau \label{eqn:lmm-condition2}
	}
	where the last inequality used the assumption that $\dis(q, v^*)\le \tau$. On the other hand, \REV{according to the triangle inequality}, we have
	\myeqn{
		\dis(p', q) \le \dis(p', v^*) +\dis(v^*, q) \le \dis(p', v^*) + \tau \label{eqn:lmm-condition3}.
	}
	Inequalities~\eqref{eqn:lmm-condition2} and~\eqref{eqn:lmm-condition3} together imply that $\dis(p, q)/\dis(p', q) \ge \alpha$, finishing the proof of the lemma.
\end{proof}

See an illustration of the property in Figure~\ref{fig:alpha_reducible}. Given any query $q$ satisfying $\dis(q, v^*)\le \tau$, 
Lemma~\ref{lmm:reducible} implies that greedy routing from any starting vertex quickly converges to the exact NN of $q$: at each step, if the current vertex is not $v^*$, then either the next hop reaches $v^*$ or the distance from the next vertex to $q$ decreases by an $\alpha$ factor.
% Lemma~\ref{lmm:reducible} implies that a greedy routing starting at any vertex converges quickly to the exact NN of $q$: after each step, if the current hop vertex is not $v^*$, then the distance of the next hop vertex to $q$ is reduced by a factor of $\alpha$.

\subsection{Theoretical Analysis} \label{sec:alpha-pg:analysis}

This subsection will establish:

\begin{theorem} \label{thm:qry-time}
    Let $P$ be a set of $n$ points and $G$ be the $\alpha$-CG of $P$. Consider any query point $q$ whose exact NN in $P$ is $v^*$. 
    \myitems{
        \item If $\dis(q, v^*)\le \tau$, a greedy routing on $G$ can find $v^*$ in $O((\alpha\cdot \tau)^\dim \cdot \log\Delta \log_\alpha \Delta)$ time. \item Otherwise, a greedy routing on $G$ can find an $(\fr{\alpha+1}{\alpha-1}+\eps)$-ANN  of $q$ in $O((\alpha\cdot \tau)^\dim \cdot \log\Delta \log_\alpha \fr{\Delta}{(\alpha-1)\eps})$ time, for any $\eps > 0$. 
    
    }
    The $\alpha$-CG $G$ has space $O(n\cdot (\alpha\tau)^\dim \log \Delta)$ and can be constructed in $O(n^2\cdot ((\alpha\tau)^\dim \log \Delta+ \log n))$ time.
\end{theorem}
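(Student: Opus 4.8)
\textbf{Proof proposal for Theorem~\ref{thm:qry-time}.}

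The plan is to handle the two query regimes separately, but with a common skeleton: (i) bound the number of hop vertices on the greedy routing path, and (ii) bound the out-degree of any single vertex, then multiply. For the first bullet ($\dis(q,v^*)\le\tau$), Lemma~\ref{lmm:reducible} does almost all the work for the hop count. Starting from any entry point $s$, every hop vertex that is not $v^*$ has an out-neighbor strictly closer to $q$ by a factor $\alpha$, so the sequence of distances $\dis(\text{hop},q)$ decreases geometrically with ratio $1/\alpha$. The initial distance is at most $\diam(P)+\tau = O(\Delta)$ (since $\tau\le\Delta$), and once a hop vertex $p$ satisfies $\dis(p,q)<\tau$ one more argument is needed: I claim such a $p$ must already be $v^*$, because any other data point $p'$ has $\dis(p',q)\ge\dis(p',v^*)-\tau\ge 1-\tau$... — this needs the assumption $\tau\ge 1$ to be used carefully; more robustly, once $\dis(p,q)\le \alpha\tau/(\alpha-1)$ is reached the $\alpha$-reduction forces the next hop to be $v^*$ (if $p\ne v^*$, the next hop $p'$ has $\dis(p',q)\le\dis(p,q)/\alpha$ and also $\dis(p',q)\ge \dis(p',v^*)-\tau$; combined with $\dis(p',v^*)\le(\dis(p,v^*)-(\alpha+1)\tau)/\alpha \le \dis(p,v^*)/\alpha$ one shows the process cannot stall above $v^*$). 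So the number of hops is $O(\log_\alpha\Delta)$.

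For the per-vertex out-degree bound, I would mimic the doubling-dimension packing argument used for Vamana in \cite{ix23}. Fix $p$ and let $S=N^+(p)$ be its shortcut set. Group the out-neighbors into "distance shells" $S_j = S\cap (B(p,2^{j+1})\setminus B(p,2^j))$ for $j=0,\dots,O(\log\Delta)$; there are $O(\log\Delta)$ shells. Within a shell, all out-neighbors have distance to $p$ in $[2^j,2^{j+1})$. The key step is to show any two distinct $u,u'\in S_j$ are "far apart" relative to this scale: if, say, $\dis(p,u)\le\dis(p,u')$ and $u$ was added to $S$ before $u'$, then $u'$ was \emph{not} pruned by $u$, i.e. inequality~\eqref{eqn:pruning-rule-ine} fails with $v=u$, giving $\dis(p,u')\le \alpha\dis(u,u')+(\alpha+1)\tau$, hence $\dis(u,u') \ge (\dis(p,u') - (\alpha+1)\tau)/\alpha \ge (2^j-(\alpha+1)\tau)/\alpha$. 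When $2^j \gg \alpha\tau$ this is $\Omega(2^j/\alpha)$, so all points in $S_j$ are an $\Omega(2^j/\alpha)$-separated subset of the ball $B(p,2^{j+1})$; by Lemma~\ref{lmm:cover-ball} (with $R=2^{j+1}$, $r=\Theta(2^j/\alpha)$, ratio $\Theta(\alpha)=O(1)$) that set has size $O(\alpha^\dim)$. The remaining shells have $2^j = O(\alpha\tau)$ and are themselves contained in $B(p,O(\alpha\tau))$; here the separation is only $\Omega(1)$ (using the minimum pairwise distance $1$), so a direct packing argument in $B(p,O(\alpha\tau))$ gives $O((\alpha\tau)^\dim)$ points total. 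Summing, $|N^+(p)| = O((\alpha\tau)^\dim + \alpha^\dim\log\Delta) = O((\alpha\tau)^\dim\log\Delta)$, which also yields the claimed space bound $O(n(\alpha\tau)^\dim\log\Delta)$.

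Multiplying hop count by out-degree gives query time $O((\alpha\tau)^\dim\log\Delta\log_\alpha\Delta)$ for the first bullet. For the second bullet ($\dis(q,v^*)>\tau$), I cannot rely on Lemma~\ref{lmm:reducible} in the same way; instead I would invoke the Vamana-style analysis of \cite{ix23}. Observe that inequality~\eqref{eqn:pruning-rule-ine} is \emph{stronger} than Vamana's scaled triangle inequality $\dis(p,u)>\alpha\dis(u,v)$ (it has an extra $+(\alpha+1)\tau$), so the $\alpha$-CG is a supergraph of the slow-preprocessing Vamana graph restricted to the same vertex set — every edge Vamana's rule keeps... — wait, the correct direction is: our rule \emph{prunes less} when $\tau>0$, i.e. $u$ is pruned by our rule only if it would also be pruned by... no: \eqref{eqn:pruning-rule-ine} requires a larger gap, so fewer candidates are pruned, hence $N^+_{\alpha\text{-CG}}(p)\supseteq N^+_{\text{Vamana}}(p)$. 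Since Vamana with parameter $\alpha$ already guarantees a greedy routing finds a $(\frac{\alpha+1}{\alpha-1}+\eps)$-ANN in $O(\alpha^\dim\log\Delta\log_\alpha\frac{\Delta}{(\alpha-1)\eps})$ steps, the same path exists in $\alpha$-CG; the only change is the per-vertex cost, now $O((\alpha\tau)^\dim\log\Delta)$ by the packing bound above, giving the stated time. Finally, construction: for each of the $n$ points, sorting $\V=P\setminus\{p\}$ costs $O(n\log n)$, and the pruning loop checks each of the $n$ candidates against the current shortcut set of size $O((\alpha\tau)^\dim\log\Delta)$, for $O(n((\alpha\tau)^\dim\log\Delta+\log n))$ per point and $O(n^2((\alpha\tau)^\dim\log\Delta+\log n))$ total.

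\textbf{Main obstacle.} The delicate point is the interaction between the additive term $(\alpha+1)\tau$ and the packing bound at small scales: the separation guarantee between two kept out-neighbors in shell $S_j$ degrades to something useless (or negative) once $2^j \lesssim (\alpha+1)\tau$, so one genuinely needs the two-regime split and must argue that the union of all small-scale shells lives in a single ball of radius $O(\alpha\tau)$ and apply Lemma~\ref{lmm:cover-ball} there with the crude separation $1$. Getting the constant $\alpha\tau$ (rather than, say, $\alpha^2\tau$) in the exponent base requires being careful that $R/r = O(1)$ genuinely holds in each application of the lemma. The second delicate point is pinning down exactly when the greedy routing in the first bullet is forced to land on $v^*$ rather than stalling at some point with $\dis(\cdot,q)$ slightly above that of $v^*$; this is where the hypotheses $\tau\ge 1$ and minimum pairwise distance $1$ must be spent.
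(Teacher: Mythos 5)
Your treatment of the out-degree bound, the construction time, and the hop count for the first bullet is essentially the paper's own argument: the paper partitions $P\setminus\{p\}$ into $O(\log\Delta)$ distance rings and runs exactly your two-regime packing argument (radius-$\fr{1}{3}$ balls with the minimum-pairwise-distance-$1$ separation when the ring scale is $O(\alpha\tau)$, radius $\fr{R_i-(\alpha+1)\tau}{2\alpha}$ balls with the ``one kept point per ball'' argument otherwise), and it bounds the hops via Lemma~\ref{lmm:reducible}. Your termination argument for the first bullet is shakier than it needs to be: the paper simply notes that the final hop vertex must be $v^*$ (otherwise Lemma~\ref{lmm:reducible} hands the routing a strictly closer out-neighbor, contradicting termination), and then lower-bounds $\dis(p_{\ell-1},q)\ge 1/2$ using the minimum pairwise distance $1$ when $\dis(v^*,q)<1/2$; no threshold like $\alpha\tau/(\alpha-1)$ is needed.

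The genuine gap is in your second bullet. The claim $N^+_{\alpha\text{-CG}}(p)\supseteq N^+_{\text{Vamana}}(p)$ does not follow from \eqref{eqn:pruning-rule-ine} being a stronger condition than $\dis(p,u)>\alpha\,\dis(u,v)$: the pruning procedure is greedy and sequential, so the two rules maintain \emph{different} running shortcut sets $S$, and a candidate $u$ can be pruned under your rule by some $v$ that Vamana's rule had already discarded, so neither containment between the two out-neighbor sets holds in general. Moreover, even if the supergraph claim held, ``the same path exists in $\alpha$-CG'' is not how the guarantee transfers --- greedy routing does not follow a prescribed path, it is determined by the graph, and adding edges changes which out-neighbor is selected at each hop. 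The paper's (much shorter) route is to verify the property that the Indyk--Xu analysis \cite{ix23} actually needs, namely $\alpha$-shortcut reachability: if $(p,z)$ is not an edge of $G$, then $z$ was pruned by some out-neighbor $p'$ of $p$, and \eqref{eqn:pruning-rule-ine} gives $\dis(p',z)<\fr{1}{\alpha}\bigl(\dis(p,z)-(\alpha+1)\tau\bigr)<\dis(p,z)/\alpha$ directly. Combining that with their Theorem~3.4 and your out-degree bound yields the second bullet; you should replace the supergraph argument with this one-line verification.
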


The above result shows that our $\alpha$-CG captures the Vamana graph. When $\tau = 1$, our $\alpha$-CG achieves the same space and query time while providing a slightly better accuracy guarantee (it can find the exact NN when $\dis(q, v^*)\le 1$). For $\tau > 1$, our method further improves query accuracy with only a modest increase in space and search time: $\alpha$-CG can find the exact NN of $q$ when $\mathrm{d}(q, v^*) \le \tau$, with both space and query time incurring an additional $O(\tau^\dim)$ factor. Readers may observe the exponential dependence on $\alpha$ and $\tau$; these are theoretical upper bounds, and in our experiments, small values of $\alpha$ and $\tau$ already yield strong empirical performance.

\vgap

The rest of the subsection serves as a proof of Theorem~\ref{thm:qry-time}. We first show that the maximum out-degree of $G$ is $O((\alpha\tau)^\dim \log \Delta)$ and analyze its construction time. After that, we analyze the search path length of the greedy routing algorithm under the two cases $\dis(q, v^*) \le \tau$ and $\dis(q, v^*) > \tau$.

\subsubsection{\bf Space and Construction Time} We first prove that $G$ has a low maximum out-degree:
\begin{lemma} \label{lmm:out-degree}
	The $\alpha$-CG $G$ of $P$ has maximum out-degree $O((\alpha \cdot \tau)^\dim\log \Delta)$, where $\dim$ is the doubling dimension of $P$.
\end{lemma}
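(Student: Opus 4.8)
The plan is to turn the pruning inequality \eqref{eqn:pruning-rule-ine} into a geometric separation statement: any two out-neighbors of $p$ that sit at comparable distances from $p$ must themselves be far apart. Combining this separation with a packing bound from the doubling dimension (cf.\ Lemma~\ref{lem:doubling-dimension}) and the normalization that the minimum pairwise distance in $P$ is $1$, a standard ``distance rings plus packing'' count gives the claimed degree bound. The only twist relative to the corresponding bound for Vamana is the additive term $(\alpha+1)\tau$, which is handled by treating out-neighbors near $p$ separately.

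First I would record the basic consequence of Algorithm~\ref{alg:pruning-pro}. Since candidates are processed in ascending order of distance to $p$, whenever $u,v\in S=N^+(p)$ with $v$ added before $u$ (so $\dis(p,v)\le\dis(p,u)$), the point $u$ was not pruned by $v$, hence $\dis(p,u)\le \alpha\cdot\dis(u,v)+(\alpha+1)\tau$, i.e. $\dis(u,v)\ge \frac{1}{\alpha}\bigl(\dis(p,u)-(\alpha+1)\tau\bigr)$. Thus this lower bound on $\dis(u,v)$ is useful only once $\dis(p,u)$ is safely above $(\alpha+1)\tau$, which motivates the threshold $T=\Theta(\alpha\tau)$ (say $T=4(\alpha+1)\tau$, so $T=O(\alpha\tau)$ since $\alpha>1$).

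Next I would split $S$ at $T$. For the ``near'' part $S\cap B(p,T)$: every such out-neighbor lies in $P\cap B(p,T)$, and iterating the doubling property covers $B(p,T)$ by $O(T^{\dim})=O((\alpha\tau)^{\dim})$ balls of radius $\tfrac13$, each containing at most one point of $P$ because $\dis\ge 1$ on $P$; so this part contributes $O((\alpha\tau)^{\dim})$. For the ``far'' part $\set{u\in S:\dis(p,u)>T}$ I would bucket by distance rings $R_i=\set{u\in S:2^{i}<\dis(p,u)\le 2^{i+1}}$; since distances from $p$ lie in $[1,\Delta]$, only $O(\log\Delta)$ rings are non-empty. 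For each non-empty far ring we have $2^{i}>2(\alpha+1)\tau$, hence $(\alpha+1)\tau<2^{i-1}$, and the inequality above gives $\dis(u,v)>\frac{2^{i}-2^{i-1}}{\alpha}=\frac{2^{i-1}}{\alpha}$ for any two distinct $u,v\in R_i$. So the points of $R_i$ form a $\frac{2^{i-1}}{\alpha}$-separated set inside $B(p,2^{i+1})$; covering $B(p,2^{i+1})$ by $O(\alpha^{\dim})$ balls of radius $\frac{2^{i-2}}{\alpha}$ (the radii ratio is $8\alpha$) and noting each such ball holds at most one point of $R_i$ (two points in it are at distance $\le \frac{2^{i-1}}{\alpha}$), we get $|R_i|=O(\alpha^{\dim})$. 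Summing over the $O(\log\Delta)$ rings bounds the far part by $O(\alpha^{\dim}\log\Delta)$, and adding the near part gives $O((\alpha\tau)^{\dim}+\alpha^{\dim}\log\Delta)=O((\alpha\tau)^{\dim}\log\Delta)$ using $\tau\ge1$.

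I expect the only delicate point to be the bookkeeping around the additive $(\alpha+1)\tau$ term: it degrades the ring-wise separation for out-neighbors close to $p$, which is exactly why the near/far split at radius $\Theta(\alpha\tau)$ is required and why the final bound picks up the extra $\tau^{\dim}$ factor. Everything else is routine — the ring separation is a direct reading of \eqref{eqn:pruning-rule-ine}, and the per-ball single-point bound is immediate from the unit minimum distance together with the doubling property.
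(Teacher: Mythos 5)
Your proposal is correct and follows essentially the same argument as the paper: decompose the out-neighbors into $O(\log\Delta)$ distance rings around $p$, handle rings within radius $O(\alpha\tau)$ by a unit-separation packing bound (balls of radius $\tfrac13$, one point each), and handle farther rings by reading off a $\Theta(2^i/\alpha)$-separation from the pruning inequality and packing via the doubling property. The only differences are cosmetic (a single near/far split at $\Theta(\alpha\tau)$ versus the paper's per-ring case analysis, and inside-out versus outside-in ring indexing), so no further comparison is needed.
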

\begin{proof}
Consider any point $p\in P$. The out-neighbors of $p$ in $G$ is the returned set of {pruning}($p$, $P\setminus\set{p}$, $\alpha$), which is the shortcut set $S$ of $p$ on $P\setminus{p}$. For each integer $i \in [0, \lc \log_2 \Delta \rc]$, define: 
\myeqn{
    \text{Ring}_i = \left\{ p' \in P \mid R_i < \dis(p, p') \leq 2R_i \right\},  \text{ where } R_i = \frac{\Delta}{2^{i+1}}. \label{eqn:ring}
}
Recall that $\Delta$ is the diameter of $P$, and the minimum pairwise distance of points in $P$ is 1. For every $p'\in P$, because $\dis(p, p') \in [1, \Delta]$, there is a unique $i \in [0, \lc \log_2 \Delta \rc]$ such that $p'\in \text{Ring}_i$. Hence, we partition $P\setminus \set{p}$ into $\lc \log_2 \Delta \rc + 1$ subsets. 

\vgap

Consider any $i \in [0, \lc \log_2 \Delta \rc]$. We will prove that $p$ has $O((\alpha \cdot \tau)^\dim)$ out-neighbors in $\text{Ring}_i$. Because there are $O(\log \Delta)$ rings, the out-degree of $p$ is thus $O((\alpha \cdot \tau)^\dim \log \Delta)$.

\vgap

\textbf{Case 1:} $R_i \le 2(\alpha+1)\tau$. In this case, we use balls of radius \( r_i=\frac{1}{3} \) to cover the ball $B(p, 2R_i)$ and hence $\text{Ring}_i$. According to Lemma~\ref{lem:doubling-dimension}, the number of such balls is 
\myeqn{
    O\left(\left(\fr{2R_i}{1/3}\right)^\dim\right) &=& O\left(\left(\fr{4(\alpha+1)\tau}{1/3}\right)^\dim\right) \nn \\
    \text{(by $\alpha > 1$)} &=& O\left(\left(\fr{8\alpha \cdot \tau}{1/3}\right)^\dim\right)
    = O((\alpha \cdot \tau)^\dim) \nn
}
where the last equality used the assumption $\dim = O(1)$.
Since the minimum distance between any two points in $P$ is \( 1 \), each ball of radius $\fr{1}{3}$ contains at most one point (see Figure~\ref{fig:case1}). Thus, $|\text{Ring}_i| = O((\alpha \cdot \tau)^\dim)$, meaning that $p$ has $O((\alpha \cdot \tau)^\dim)$ out-neighbors in $\text{Ring}_i$.

% \begin{figure}
%   \centering
%   \begin{subfigure}[t]{0.42\linewidth}
%     \centering
%     \includegraphics[width=\linewidth]{Figure/worst-case_degree_case_1.pdf}
%     \caption{Case 1}
%     \label{fig:case1}
%   \end{subfigure}%
%   \hfill
%   \begin{subfigure}[t]{0.42\linewidth}
%     \centering
%     \includegraphics[width=\linewidth]{Figure/worst-case_degree_case_2.pdf}
%     \caption{Case 2}
%     \label{fig:case2}
%   \end{subfigure}
%   \figcapup
%   \caption{Out-degree analysis in two cases}
%   \figcapdown
%   \label{fig:case_compare}
% \end{figure}

\vgap

\textbf{Case 2:} $R_i > 2(\alpha+1)\tau$. We cover $\text{Ring}_i$ using balls of radius $r_{i}= \frac{R_i - (\alpha + 1)\tau}{2\alpha}$. For any two points $u, v\in \text{Ring}_i$, if $u$ and $v$ are in the same ball with radius $r_i$, then, we have:
\myeqn{
    % \dis(u, v) &\leq& \dis(u, p) +\dis(p, v) \le 2r_i = \frac{R_i - (\alpha + 1)\tau}{\alpha} \notag \\
    \dis(u, v) &\leq&  2r_i = \frac{R_i - (\alpha + 1)\tau}{\alpha} \notag \\
\Rightarrow \dis(u, v) \cdot \alpha &\le& R_i - (\alpha+1)\tau < \dis(p, u) - (\alpha+1)\tau \nn \\
\Rightarrow \dis(p, u) &>& \dis(u, v) \cdot \alpha + (\alpha + 1)\tau. \nn
}
Similarly, we can obtain $\dis(p, v) > \dis(u, v) \cdot \alpha + (\alpha + 1)\tau$. According to the pruning rule defined by \eqref{eqn:pruning-rule-ine}, at most one of $u$ and $v$ can be in the shortcut set $S$ of $p$. Hence, for any ball $B$ with radius $r_i$, at most one point in $\text{Ring}_i \cap B$ can be an out-neighbor of $p$. For instance, in Figure~\ref{fig:case2}, when $v$ and $v'$ are the our-neighbors of $p$, then edges $(p, u)$ and $(p, u')$ will not exist in our PG.

\vgap

Next, we will show that $B(p, 2R_i)$, and hence $\text{Ring}_i$, can be covered by $O(\alpha^\dim)$ balls with radius $r_i$, implying $p$ has $O(\alpha^\dim)$ out-neighbors in $\text{Ring}_i$. By Lemma~\ref{lem:doubling-dimension}, the number of balls with radius $r_i$ needed to cover $\text{Ring}_i$ is at most:
\myeqn{
    O\left( \left( \frac{2R_i}{r_i} \right)^{\dim} \right) &=& O\left( \left( 2R_i \cdot \fr{2\alpha}{R_i - (\alpha+1)\tau} \right)^{\dim} \right) \nn \\
    &=& O \left( 4\alpha\left( 1 + \fr{(\alpha+1)\tau}{R_i - (\alpha+1)\tau}\right)^\dim\right) \nn \\
    \text{(by $R_i \ge 2(\alpha+1)\tau$)}&=& O ( \left( 8\alpha \right)^\dim ) = O(\alpha^\dim) \nn
}
where the last inequality used the assumption $\dim = O(1)$.

\vgap

Combining the above two cases and the fact $\tau \ge 1$, we conclude that the maximum out-degree of $G$ is $O((\alpha \cdot \tau)^\dim \log \Delta)$.
\end{proof}

\noindent{\bf Construction time.} The {pruning}($p$, $\V$, $\alpha$) procedure (Algorithm~\ref{alg:pruning-pro}) finishes in $O(|\V|\cdot (\log |\V|+|S|))$ time. To see this, Line 1 of algorithm~\ref{alg:pruning-pro} finishes in $O(|\V|\log |\V|)$ time. For each loop of Line 3, we need to check if a point $u\in \V$ can be pruned by a point in $S$. A linear scan finishes in $O(|S|)$ time. Because there are $|\V|$ iterations, the total time of the algorithm is thus $O(|\V|\cdot (\log |\V|+|S|))$.

\vgap

Recall that the out-neighbor set $N^+(p)$ of each point $p\in P$ is obtained by calling {pruning}($p$, $\V$, $\alpha$) with $\V = P \setminus \set{p}$. According to Lemma~\ref{lmm:out-degree}, we have $|S| = |N^+(p)| = O((\alpha\cdot \tau)^\dim \log \Delta)$, implying that the {pruning} procedure computes the out-neighbors of $p$ in $O(n \cdot (|N^+(p)| + \log n)) = O(n \cdot ((\alpha\cdot \tau)^\dim \log \Delta + \log n))$ time. As $|P| = n$, we can conclude that the total construction time of the $\alpha$-CG is $O(n^2\cdot ((\alpha\cdot \tau)^\dim \log \Delta + \log n))$.

\subsubsection{\bf Query Time} 

\noindent{\bf When $\dis(q, v^*) \le \tau$.} The $\alpha$-reducible property has an important implication on the behavior of the {\bf beam search} algorithm, even when $L = 1$: If the current hop vertex $p$ visited by the algorithm is not the exact NN $v^*$, $p$ has an out-neighbor (thus the next hop vertex) whose distance to $q$ is reduced by an $\alpha$-multiplicative factor. Based on this property, we have the following result about the number of hop vertices visited:
\begin{lemma} \label{lmm:greedy-steps}
	Consider any query point $q$ whose exact NN $v^*$ in $P$ satisfies $\dis(q, v^*) \le \tau$. The greedy routing (beam search with $L = 1$) algorithm starting with any entry point $s$ in $G$ can find the exact NN of $q$ by visiting $O(\log_\alpha \Delta)$ hop vertices. 
\end{lemma}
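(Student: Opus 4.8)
The plan is to apply the $\alpha$-reducible property (Lemma~\ref{lmm:reducible}) repeatedly along the search path traced by greedy routing, and to observe that the distances of consecutive hop vertices to $q$ shrink geometrically until we reach $v^*$. First I would note that with $L = 1$, at each step the algorithm is at some hop vertex $p$; by the definition of greedy routing it moves to the out-neighbor of $p$ closest to $q$, which in particular is at least as close as the out-neighbor $p'$ guaranteed by Lemma~\ref{lmm:reducible}. So as long as the current hop vertex $p$ is not $v^*$, the next hop vertex $p''$ satisfies $\dis(p'', q) \le \dis(p', q) \le \dis(p, q)/\alpha$.

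Next I would set up the geometric decay. Let $p_0 = s, p_1, p_2, \dots$ be the sequence of hop vertices. As long as $p_t \ne v^*$, we have $\dis(p_{t+1}, q) \le \dis(p_t, q)/\alpha$, hence $\dis(p_t, q) \le \dis(s, q)/\alpha^t$. The initial distance is bounded: $\dis(s, q) \le \dis(s, v^*) + \dis(v^*, q) \le \diam(P) + \tau = \Delta + \tau \le 2\Delta$ (using $\tau \le \Delta$). On the other hand, every point of $P$ other than $v^*$ has distance at least $\dis(v^*, q) $ plus something — more precisely, for any $p \in P$ with $p \ne v^*$, since $\dis(p, v^*) \ge 1$ (minimum pairwise distance) and $\dis(v^*, q) \le \tau$, we get $\dis(p, q) \ge \dis(p, v^*) - \dis(v^*, q) \ge 1 - \tau$, which is not directly useful when $\tau \ge 1$; instead I would argue differently. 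The cleaner route: once $\dis(p_t, q) < \dis(v^*, q) + 1 $ is not quite the right threshold either. Let me use the separation directly — for $p_t \ne v^*$, $\dis(p_t, q) \ge \dis(p_t, v^*) - \tau \ge 1 - \tau$; this can be negative, so this bound alone cannot force termination. The correct argument is that the decay must stop: if $p_t \ne v^*$ for all $t$ up to some $T$, then $\dis(p_T, q) \le 2\Delta/\alpha^{T}$, and meanwhile the greedy routing only terminates at a local optimum, which by the $\alpha$-reducible property must be $v^*$ itself. So I would argue that the routing \emph{cannot} get stuck at a non-$v^*$ vertex (Lemma~\ref{lmm:reducible} always supplies a strictly closer out-neighbor there), hence it terminates exactly at $v^*$; and it cannot run forever because the hop-vertex distances are strictly decreasing among finitely many points $P$. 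To get the $O(\log_\alpha \Delta)$ bound, I would observe that after $T$ steps with $p_T \ne v^*$ we need $\dis(p_T, q) \le 2\Delta/\alpha^T$, and simultaneously $\dis(p_T, q) > \dis(p_{T+1}, q) \ge \dots$; since all visited hop vertices are distinct points of $P$ with distinct distances to $q$ bounded below — here I would invoke that a vertex $p$ with $\dis(p, q) < \dis(v^*, q)$ is impossible ($v^*$ is the NN), so in fact $\dis(p_t, q) \ge \dis(v^*, q) \ge 0$, and more usefully once $\dis(p_t, q) \le \alpha \tau + \dots$ — hmm, the threshold I actually want is $\dis(p_t, q) \le \alpha\tau$: at that point, is $p_t$ forced to be $v^*$? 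Not obviously.

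The clean finish I would use: argue that the number of hop vertices is $O(\log_\alpha(\dis(s,q)/\dis(v^*,q)))$ when $v^* \ne s$, but since $\dis(v^*,q)$ may be $0$ this fails; so instead I cap it combinatorially. I would instead prove: after $t = \lceil \log_\alpha(2\Delta) \rceil + 1$ steps, if $p_t \ne v^*$ then $\dis(p_t, q) \le 2\Delta / \alpha^t < 1/\alpha \le 1$, but also $\dis(p_t, q) \ge \dis(p_t, v^*) - \dis(v^*, q)$, and separately $\dis(p_t, q) < 1$ forces, via $\dis(v^*, q) \le \dis(v^*, p_t) + \dis(p_t, q) < \dis(v^*,p_t) + 1$ — still not a contradiction. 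Given this, the genuinely correct and simplest argument (and the one I expect the authors use) is: the greedy routing visits a strictly decreasing sequence of distances, so all hop vertices are distinct; by the $\alpha$-reducible property every non-$v^*$ hop vertex has a strictly closer out-neighbor, so the routing terminates only at $v^*$; and the number of steps before reaching a hop vertex $p_t$ with $\dis(p_t, q) \le \dis(v^*, q) + \tau$ — no. I'll commit to: bound the number of steps by $\log_\alpha$ of the ratio between the largest and smallest \emph{positive} hop-vertex distance, and observe the smallest occurs at the last non-$v^*$ hop $p_{T}$, for which $\dis(p_T, q) \ge 1 - \tau$ is useless, so instead $\dis(p_T, q) \ge \dis(p_{T+1}, q)/\text{(nothing)}$ — the real lower bound is that $p_T$'s out-neighbor reaching $v^*$ means $\dis(p_T, v^*) \le (\dis(p_T, q) - \dots)$; unwinding, $\dis(p_T, q) > \frac{\alpha - 1}{\alpha}\dis(p_T, v^*) \ge \frac{\alpha-1}{\alpha}$. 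So the smallest positive hop distance is $\Omega(1)$, the largest is $O(\Delta)$, giving $O(\log_\alpha \Delta)$ hop vertices, and termination at $v^*$.

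\textbf{Main obstacle.} The delicate point is establishing a clean $\Omega(1)$ (or $\Omega(\text{something }\Delta\text{-independent})$) lower bound on the distance-to-$q$ of the last non-$v^*$ hop vertex, so that the geometric decay from $O(\Delta)$ down to that floor yields exactly $O(\log_\alpha \Delta)$ steps; naively one only gets that $\dis(p_t,q)$ decreases, which alone bounds the path length by $|P| = n$, not by $\log_\alpha \Delta$. Resolving this requires combining Lemma~\ref{lmm:reducible} with the minimum-pairwise-distance normalization ($\dis(p, v^*) \ge 1$ for $p \ne v^*$) to show the penultimate hop vertex cannot be too close to $q$, which is where I would spend the most care.
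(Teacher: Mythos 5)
Your final committed argument is, in substance, the paper's own: establish that greedy routing can only terminate at $v^*$ (via Lemma~\ref{lmm:reducible}), let the hop distances decay geometrically from $\dis(s,q)\le 2\Delta$, and stop the decay at a constant floor supplied by the minimum-pairwise-distance normalization $\dis(p,v^*)\ge 1$. The one step you leave unresolved --- and correctly flag as the delicate point --- does go through, but not via the inequality you wrote down. For the last non-$v^*$ hop vertex $p_T$, the next hop is $v^*$, and Lemma~\ref{lmm:reducible} applied at $p_T$ gives $\dis(v^*,q)\le \dis(p_T,q)/\alpha$; combining with the triangle inequality $\dis(p_T,v^*)\le \dis(p_T,q)+\dis(q,v^*)\le \dis(p_T,q)\left(1+\tfrac{1}{\alpha}\right)$ yields $\dis(p_T,q)\ge \tfrac{\alpha}{\alpha+1}\,\dis(p_T,v^*)\ge \tfrac{\alpha}{\alpha+1}>\tfrac{1}{2}$. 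Your stated bound $\dis(p_T,q)>\tfrac{\alpha-1}{\alpha}\dis(p_T,v^*)$ does not follow from these manipulations (unwinding them gives a lower bound on $\dis(p_T,v^*)$ in terms of $\dis(p_T,q)$, i.e., the reverse direction), and taken at face value it degenerates as $\alpha\to 1$; the $\tfrac{\alpha}{\alpha+1}$ version is what you want, since it is uniformly above $1/2$. With that floor and $\dis(p_T,q)\le 2\Delta/\alpha^{T}$ you immediately get $T=O(\log_\alpha \Delta)$. For comparison, the paper reaches the same $1/2$ floor by a case split on $\dis(v^*,q)$: when $\dis(v^*,q)\ge 1/2$ the final distance itself is the floor, and when $\dis(v^*,q)<1/2$ the penultimate vertex satisfies $\dis(p_{\ell-1},q)\ge \dis(p_{\ell-1},v^*)-\dis(v^*,q)\ge 1-\tfrac12$. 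Your unified derivation is marginally cleaner; everything else in your proposal (termination only at $v^*$, the $2\Delta$ bound on the starting distance) matches the paper, so once the floor is derived correctly the proof is complete.
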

\begin{proof}
	Let $\sigma = (p_1, p_2, ..., p_\ell)$ be the sequence of $\ell$ hop vertices visited by greedy grouting.  We thus have $p_1 = s$. To see $p_\ell = v^*$, suppose the last visited hop vertex $p_\ell \ne v^*$. Then, according to Lemma~\ref{lmm:reducible}, there exists an out-neighbor of $p_\ell$ in $G$ that is closer to $q$ than $p_\ell$, contradicting that the algorithm terminates at $p_\ell$.
	
	\vgap
	
	Next, we prove $\ell = O(\log_\alpha \Delta)$. For any entry point $s\in P$, $\dis(s, q) \le \dis(s, v^*) + \dis(v^*, q) \le \diam(P) + \tau \le 2\Delta$. Recall that $\tau \le \Delta$ and $\diam(P)$ is the diameter of the dataset $P$; as the minimum pairwise distance in $P$ is 1, we have $\Delta = \diam(P)$ (see Section~\ref{sec:pre:prob-def}). 
    
    % Next, we proceed differently depending on the value of $\dis(q, v^*)$.
	
	\vgap
	
	% \noindent Case 1: $\dis(v^*, q) \ge 1/2$. According to Lemma~\ref{lmm:reducible}, we have $\dis(p_{i+1}, q) \le \dis(p_i, q)/\alpha$ for every $i\in [1, \ell-2]$, implying that
	% \myeqn{
	% 	\dis(p_\ell, q) = \dis(v^*, q) &\le& \dis(p_1, q)/\alpha^{\ell-1} \nn \\
	% 	\Rightarrow \ell &\le& \log_\alpha (\dis(p_1, q)/\dis(v^*, q)) + 1 \nn \\
	% 	&\le& \log_\alpha (2\Delta/(1/2)) + 1 = O(\log_\alpha \Delta). \nn
	% }
	
	% \vgap
	
	% \noindent Case 2: $\dis(v^*, q) < 1/2$. According to the definition of the metric space, we have $\dis(p_{\ell-1}, q) \ge \dis(p_{\ell-1}, v^*) - \dis(q, v^*) \ge 1 - 1/2 = 1/2$ where the last inequality used the fact that the minimum pairwise distance in $P$ is 1. Combining this with the fact that $\dis(p_{\ell-1}, q) \le \dis(p_1, q)/\alpha^{\ell-2}$, we can conclude that \myeqn{
	% 	\ell &\le& \log_\alpha (\dis(p_1, q)/\dis(p_{\ell-1}, q)) + 2\nn \\
	% 	&\le& \log_\alpha (2\Delta/(1/2)) + 2 = O(\log_\alpha \Delta). \nn
	% }

Our first claim is that for every $i \in [1, \ell - 2]$, we have $\dis(p_{i+1}, q) \le \dis(p_i, q)/\alpha$. To see this, observe that $v^*$ cannot be an out-neighbor of $p_i$; otherwise, the greedy grouting would terminate at $p_{i+1}$, visiting $i+1 \le \ell - 1$ hop vertices and thereby contradicting the fact that $\sigma$ has size $\ell$. Then, by Lemma~\ref{lmm:reducible}, we have $\dis(p_{i+1}, q) \le \dis(p_i, q)/\alpha$, and 
\myeqn{
		\dis(p_{\ell-1}, q) &\le& \dis(p_1, q)/\alpha^{\ell-2} \nn \\
		\Rightarrow \ell &\le& \log_\alpha (\dis(p_1, q)/\dis(p_{\ell-1}, q)) + 2. \label{eqn:dis-converge}
		% &\le& \log_\alpha (2\Delta/(1/2)) + 1 = O(\log_\alpha \Delta). \nn
}

% \textcolor{blue}{By Lemma~\ref{lmm:reducible}, along the greedy routing path
% $\sigma=(p_1,p_2,\ldots,p_\ell)$, for every $i\in[1,\ell-2]$, we have
% $\dis(p_{i+1},q)\le \dis(p_i,q)/\alpha$. The last hop corresponds to the edge
% $(p_{\ell-1},v^*)$, and thus we have
% 	\myeqn{
% 		\dis(p_{\ell-1}, q) &\le& \dis(p_1, q)/\alpha^{\ell-2} \nn \\
% 		\Rightarrow \ell &\le& \log_\alpha (\dis(p_1, q)/\dis(p_{\ell-1}, q)) + 2 \nn 
% 		% &\le& \log_\alpha (2\Delta/(1/2)) + 1 = O(\log_\alpha \Delta). \nn
% 	}}

Now, we prove that $\dis(p_{\ell-1},q) \ge 1/2$. If $\dis(v^*,q) \ge 1/2$, then, since $v^*$ is the exact NN of $q$ in $P$, we have $\dis(p_{\ell-1},q) > \dis(q,v^*) > 1/2$. On the other hand, if $\dis(v^*,q) < 1/2$, by \REV{triangle inequality} and the fact that the minimum pairwise distance in $P$ is $1$, we obtain
$$
\dis(p_{\ell-1}, q) \ge \dis(p_{\ell-1}, v^*) - \dis(q, v^*) \ge 1 - 1/2 = 1/2.
$$
Hence, in both cases $\dis(p_{\ell-1}, q)\ge 1/2$. Plugging this bound into Inequality~\eqref{eqn:dis-converge}, we obtain
\myeqn{
\ell \le \log_\alpha (2\Delta/(1/2)) + 2 = O(\log_\alpha \Delta).\nn
}
Thus, the algorithm finds the exact NN of $q$ after visiting $O(\log_\alpha \Delta)$ hop vertices, and the lemma follows.
\end{proof}

According to Lemma~\ref{lmm:out-degree}, every vertex of $G$ has a maximum out-degree at most $O((\alpha\cdot \tau)^\dim \log \Delta)$. Since a greedy routing can find the exact NN of $q$ by visiting $O(\log_\alpha \Delta)$ hop vertices (Lemma~\ref{lmm:greedy-steps}), the search algorithm finishes in $O((\alpha\cdot \tau)^\dim \log \Delta \log_\alpha \Delta)$ time. This proves the first bullet of Theorem~\ref{thm:qry-time}.

\extraspacing{\bf When $\dis(q, v^*) > \tau$.} We say a proximity graph $G'$ is {\em $\alpha$-shortcut reachable} if for every two vertices $p, z$ of $G'$ such that $(p, z)$ is not in $G'$, then, there exists an edge $(p, p')$ in $G'$ such that $\dis(p', z) \le \dis(p, z)/\alpha$. Indyk and Xu \cite[Theorem 3.4]{ix23} proved that given an $\alpha$-shortcut reachable PG, a greedy routing starting at any vertex can answer an $(\fr{\alpha+1}{\alpha-1}+\eps)$-ANN query after visiting $O(\log_\alpha \fr{\Delta}{(\alpha-1)\eps})$ hop vertices. Our $\alpha$-CG $G$ is also $\alpha$-shortcut reachable:
\myitems{
    \item For any two vertices $p, z$ such that $(p, z)$ is not in $G$, by definition of the pruning rule (Definition~\ref{def:edge-pruning}), there must exist a vertex $p'$ such that $\dis(p', z) < \fr{1}{\alpha}(\dis(p, z) - (\alpha+1)\tau) < \fr{1}{\alpha}\dis(p, z)$, implying that $G$ is $\alpha$-shortcut reachable.
}
Together with Lemma~\ref{lmm:out-degree}, we can conclude that the query time is $O((\alpha\tau)^\dim \log \Delta \log_\alpha \fr{\Delta}{(\alpha-1)\eps})$. 
The second bullet of Theorem~\ref{thm:qry-time} then follows. This finishes the proof of Theorem~\ref{thm:qry-time}.

\extraspacing{\bf Remark.} The main difference among our $\alpha$-CG and existing PGs, such as MRNG, $\tau$-MG, and Vamana, lies in the edge pruning rules, which significantly impact query performance. This paper finds a crafted edge pruning rule that leads to a new PG with enhanced guarantees, supported by a non-trivial theoretical analysis.

\section{Alpha-Convergent Neighborhood Graph with Efficient Construction}\label{sec:practical-pg}

Despite the superior asymptotic query performance of the $\alpha$-CG, its construction time is $\Omega(n^2)$ in the worst case. In line with existing works such as NSG \cite{fxwc19}, Vamana \cite{sds+19}, and $\tau$-MNG \cite{pcc+23}, we propose a practical variant that approximates $\alpha$-CG. This section also explores strategies for adaptively pruning candidates (setting the parameter $\alpha$) and constructing the index efficiently.

\begin{algorithm}[!t]
  % \caption{\textbf{construct-$\alpha$-CNPG}$(P, k, c, M, \alpha_0, \alpha_{\max}, \Delta\alpha, T, \tau)$}
  \caption{\textbf{$\alpha$-CNG-construction}}
  \label{alg:construct-cpg}
  \begin{flushleft}
    \textbf{Input:} 
    data points $P$, 
    $K$ for $K$-NN graph, 
    out-degree threshold $M$,
    queue size $L$,
    candidate size $C$, 
    initial parameter $\alpha_0$, 
    max parameter $\alpha_{\max}$, 
    step size $\Delta\alpha$,
    runing parameter $\tau$ \\
    \textbf{Output:} $\alpha$-CNG $G$
  \end{flushleft}
  \begin{algorithmic}[1]
  
%\Statex \textbf{\ding{226} Phase 1: Approximate $K$-NN graph and entry point}

\Statex /* Phase 1: approximate $K$-NN graph and entry point */

% \vspace{0.25em}
    \State $G_0 \leftarrow$ \textbf{BuildApproxKNNGraph}$(P, K)$
    % \State $u \leftarrow$ centroid of $P$
    % \State $s \leftarrow$ \textbf{beam-search}$(G_0, u, \text{random\_point}, L, k=1)$ 
    \State $s \leftarrow$ \textbf{beam-search}$(G_0, \text{centroid}(P), \text{random\_point}, L, k=1)$
    % \Comment{navigating node}
% \vspace{0.25em}

\Statex /* Phase 2: candidate generation and pruning */

% \vspace{0.25em}
    \State $G\leftarrow$ a graph with vertex set $P$ and no edges
    \For{each point $p \in P$} 
      \State $\V \leftarrow$ \textbf{beam-search}$(G_0, p, s, L, C)$
      \State $N^+(p) \leftarrow$ \textbf{adaptive-pruning}$(p, \V, M, \alpha_0, \alpha_{\max}, \Delta\alpha)$
      % \For{each $v \in N^+(p)$}
      %   \State insert edge $(p, v)$ into $G$
      % \EndFor
      %\State insert all $(p, v)$ for $v \in N^+(p)$ into $G$
    \EndFor
% \vspace{0.25em}

%\Statex \textbf{\ding{226} Phase 3: Backward edge insertion}
\Statex /* Phase 3: backward edge insertion and lazy pruning */
% \vspace{0.25em}
 %    \For{each edge $(u, v)$ in $G$}
 %      \State insert edge $(v, u)$ into $G$
	% \State update $N^+(v)$
 %    \EndFor

    \State {\bf for} each $(u, v)$ in $G$, insert $(v, u)$ into $G$
    \For{each $p$ with $|N^+(p)| > M$}
      \State $N^+(p) \leftarrow$ \textbf{adaptive-pruning}$(p, N^+(p), M, \alpha_0, \alpha_{\max}, \Delta\alpha)$
    \EndFor
% \vspace{0.25em}

%\Statex \textbf{\ding{226} Phase 4: Connectivity examination}
\Statex /* Phase 4: connectivity examination */

% \vspace{0.25em}

    \State DFS-tree $T \leftarrow$ \textbf{DFS}$(G, s)$ 
    % \For{each $p \in P \setminus T$}
    %   \State $u \leftarrow$ \textbf{beam-search}$(T, s, p, L, k=1)$
    %   % \State sort $C'$ in ascending order of distance to $p$
    %   % \State $u \leftarrow \text{the first point} \in C' \cap T$
    %   \State insert edge $(u, p)$ into $G$
    %   % \State update $T$
    % \EndFor
    %\State {\bf for} each $p \in P \setminus \mathcal{U}$, connect $p$ to $\mathcal{U}$
    \State {\bf for} each $p \in P \setminus T$, add necessary edges from $T$ to $p$
    % \State $G \gets G \cup \{(u, p) \mid p \in P \setminus T,\; u = \textbf{beam-search}(T, s, p, L, k=1)\}$

    \State \textbf{return} $G$
  \end{algorithmic}
\end{algorithm}

\subsection{The Overall Framework} \label{sec:practical-pg:nb}

We employ a standard framework utilized in existing PG methods, such as NSG \cite{fxwc19} and $\tau$-MNG \cite{pcc+23}, to generate a \emph{local-neighbor set} (as opposed to $P \setminus {p}$) as the candidate set for each point $p \in P$. Then, we apply an adaptive pruning strategy (Section~\ref{sec:practical-pg:alpha}) to select up to $M$ points as the out-neighbors of $p$, where $M$ is a predefined maximum out-degree that ensures efficient graph storage and computation. We refer to resulting graph $G$ as the \emph{$\alpha$-convergent neighborhood graph} ($\alpha$-CNG). This framework consists of the following phases, as summarized in Algorithm~\ref{alg:construct-cpg}.

\vgap

The initial phase constructs an approximate $K$-NN graph $G_0$, where each data point is connected to its approximate $K$-NNs in $P$. Efficient implementations are available for constructing such graphs, e.g., \cite{hd16, dcl11}. We then identify the {\em navigating node} $s$ by performing a beam search, starting from a random point with the centroid of $P$ (the geometric mean of all points in $P$) as the query point. The returned NN is designated as $s$.

\vgap

Next, we generate a {\em local-neighbor set} $\V$ for each point $p$ using $G_0$. We execute a beam search starting from node $s$, using $p$ as the query. All points visited during this search (whose distances to $p$ are computed) are gathered. Let $\V$ be the set of the $C$ closest points to $p$ that are collected. We then apply our {adaptive pruning} method to select a subset of at most $M$ points from $\V$, which will be assigned as the out-neighbors of $p$ in the $\alpha$-CNG $G$.

\vgap

Subsequently, backward edges are added to the graph for each edge inserted in the previous phase, ensuring bidirectional connectivity. If any vertex $p$'s out-degree exceeds $M$, its out-neighbor set is pruned accordingly using our adaptive pruning method. Finally, we verify the graph connectivity by performing a depth-first search (DFS) starting from $s$. To ensure that all nodes are reachable from $s$, we add necessary edges for any nodes not included in the DFS tree, in line with the NSG method \cite{fxwc19}.

% \vspace{-3mm}

\begin{figure}[t!]
    \centering
    \includegraphics[height=18mm]{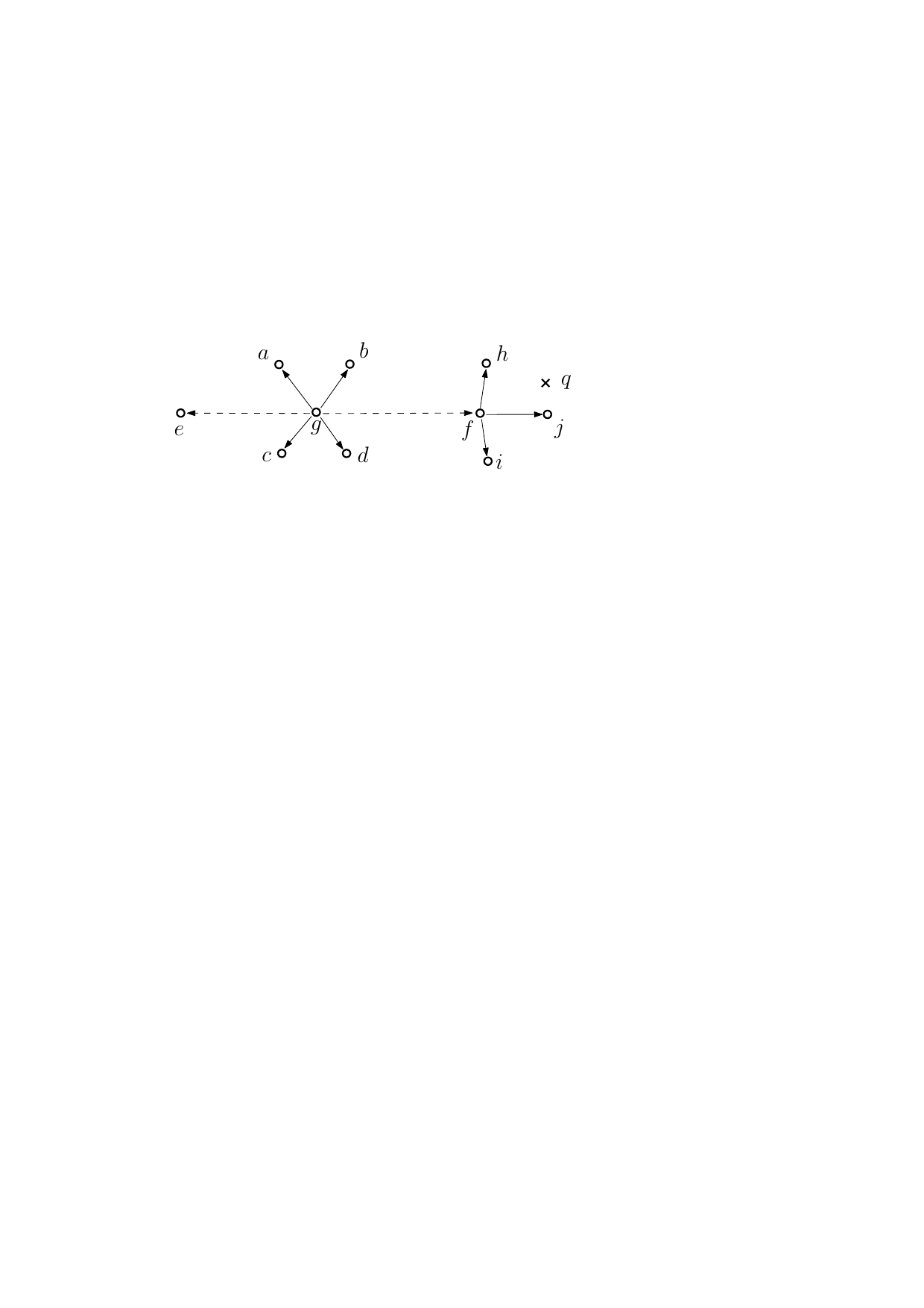}
    \figcapup
    \caption{When the shortcut set $S = \set{a,b,c,d,e,f}$ of $g$ has size larger than $M=4$, the two edges $(g, e)$ and $(g, f)$ are pruned.}
    \figcapdown
    \label{fig:large-alpha}
    \figcapdown
\end{figure}
\subsection{Adaptive Local Pruning} \label{sec:practical-pg:alpha}

To incorporate our pruning strategy (Algorithm~\ref{alg:pruning-pro}) into the above framework, a critical question is the parameter setting for $\alpha$, which greatly impacts the query performance of our $\alpha$-CNG. According to Lemma~\ref{lmm:reducible}, each time a new hop vertex is visited, the distance to $q$ decreases by at least a factor of $\alpha$, unless the routing directly reaches the exact NN $v^*$.
% each time a new hop vertex is visited, the distance to $q$ is reduced by an $\alpha$-factor. 
While a larger $\alpha$ accelerates convergence to the ANNs, it also leads to an exponential increase in the out-degree of each node (Lemma~\ref{lmm:out-degree}). Therefore, achieving a balance between out-degree and convergence rate is essential for efficient query performance. Despite the use of $\alpha$ in Vamana, a widely utilized PG, little attention has been paid to determining its optimal value.

\vgap

Recall that in our framework, each node can have an out-degree of at most $M$. However, the shortcut set $S$ returned by the {pruning} method (Algorithm~\ref{alg:pruning-pro}) may exceed this size, particularly when $\alpha$ is large. A common approach used in the literature is to return the $M$ closest points in $S$ to $p$, but this may lead to the omission of long-distance shortcut edges. For instance, as illustrated in Figure~\ref{fig:large-alpha}, the vertices from $a$ to $f$ form the shortcut set of $g$. When $M = 4$, both $e$ and $f$ become disconnected from $g$, resulting in the pruning of the longest shortcut edges $(g, e)$ and $(g, f)$ from the PG. Consequently, given the query point $q$ in Figure~\ref{fig:large-alpha}, a greedy routing will fail to find the exact NN $j$ of $q$ when starting from node $g$. This example illustrates that when the size of the shortcut set exceeds the threshold $M$, the longest shortcut edges are omitted from the constructed graph, thereby reducing graph connectivity.

\vgap

We propose an adaptive strategy (Algorithm~\ref{alg:adaptive-prune-wrapper}) that gradually adjusts the parameter $\alpha$ and preserves as many long-distance shortcut edges as possible. Since each data point has a different local-neighbor set, a global $\alpha$ may lead to varying shortcut set sizes across data points. Therefore, we tune the parameter $\alpha$ for each data point $p$ locally. Specifically, we start with a small initial value of $\alpha = \alpha_0$ and run the pruning method. Whenever the returned shortcut set $S$ is smaller than the threshold $M/2$, we increase $\alpha$ by a small step size $\Delta\alpha$ and rerun the pruning method to find a larger $S$. Note that the size of $S$ is highly sensitive to $\alpha$, so we opt to increase $\alpha$ by a small value each step. This procedure aims to find the first $\alpha$ for which the corresponding shortcut set $S$ exceeds the size of $M/2$. We then return the $M$ points in $S$ closest to $p$. This avoids selecting an excessively large $\alpha$ and $S$, which could lead to the pruning of long-distance shortcut edges. In the last iteration, the pruning method can terminate early once it has already collected $M$ points.

\vgap

%Algorithm~\ref{alg:adaptive-prune-wrapper} presents the pruning method. Experimental results show that the shortcut set size may already reach $M/2$ when $\alpha < 1$ for certain data points. Therefore, our adaptive method begins with an initial value of $\alpha_0 = 0.9$; we set $\Delta\alpha = 0.05$ and $\alpha_{\max}=1.6$.

\subsection{Efficient Graph Construction}

This subsection presents two optimizations to achieve efficient graph construction: (1) a distance reusing mechanism for adaptive edge pruning, and (2) a lazy pruning strategy for backward edge insertion. We also provide an analysis of the construction time.
\begin{algorithm}[!t]
	\caption{\textbf{adaptive-pruning}$(p, \V, M, \alpha_0, \alpha_{\max}, \Delta\alpha)$}
	\label{alg:adaptive-prune-wrapper}
	\begin{flushleft}
		\textbf{Input:} point $p$, candidate set $\V$, maximum out-degree threshold $M$, initial parameter $\alpha_0$, max parameter $\alpha_{\max}$, step size $\Delta \alpha$\\
		\textbf{Output:} a set of at most $M$ shortcut points
	\end{flushleft}
	\begin{algorithmic}[1]
		\State $\alpha \leftarrow \alpha_0$
            \State $S \leftarrow \emptyset$
		\While{$|S| < M/2$ and $\alpha \le \alpha_{\max}$}
		    \State $S \leftarrow$ \textbf{pruning}$(p, \V, \alpha)$
		    \State $\alpha \leftarrow \alpha + \Delta \alpha$
		\EndWhile
		\State \textbf{return} $M$ closest points of $S$ to $p$
	\end{algorithmic}
\end{algorithm}
\extraspacing{\bf A distance-reusing mechanism.} 
Although the adaptive pruning strategy offers a practical solution for tuning the parameter $\alpha$ and preserving long-distance shortcut edges, it introduces additional computational overhead. Recall that for each data point $p$ and its candidate set $\V$, the adaptive algorithm calls the {pruning} method (Algorithm~\ref{alg:pruning-pro}) multiple times for a sequence of increasing $\alpha$ values. As analyzed in Section~\ref{sec:alpha-pg:analysis}, the running time of the {pruning} method is $O(|\V|\cdot (\log |\V|+|S|))$. Let $\alpha_0, ..., \alpha_h$ be the sequence of tested $\alpha$ values and $S_0, ..., S_h$ be the corresponding shortcut sets returned. Since we only need to sort $\V$ once (Line 1 of algorithm~\ref{alg:pruning-pro}), the total time is $O(|\V| \cdot (\log |\V| + \sum_{i = 0}^h |S_i|) = O(|\V| \cdot (\log |\V| + M \cdot h))$, as $|S_i| \le M$ for each $i \in [0, h]$. It can be verified that the running time is dominated by the number of distance computations.

\vgap

To reduce the computational overhead, we reuse the intermediate computation results based on the following observation. When invoking {pruning}($p$, $\V$, $\alpha_i$) to obtain the shortcut set $S_i$, according to inequality~\eqref{eqn:pruning-rule-ine}, a point $u\in \V$ is pruned if and only if there exists a point $v\in S_i$ satisfying the condition $\dis(p,u)>\alpha_i\cdot\dis(u, v)+(\alpha_i+1)\cdot\tau$. Rearranging this leads to 
\myeqn{
    \frac{\dis(p, u) - \tau}{\dis(u, v) + \tau} > \alpha_i.
}
Define $\bar{\alpha}(u, v) = \frac{\dis(p, u) - \tau}{\dis(u, v) + \tau}$. Hence, a point $v\in S_i$ can prune $u$ if and only if $\bar{\alpha}(u, v) > \alpha_i$. When it is the first time to evaluate $\bar{\alpha}(u, v)$ for $\alpha = \alpha_i$, we can store and reuse it in subsequent iterations by simply comparing $\alpha(u, v)$ with $\alpha_j$ for $j > i$. This ensures that $\bar{\alpha}(u, v)$ is evaluated at most once for each pair $(u, v) \in \V\times \V$.

\vgap

Denote by $S^+ = \bigcup_{i = 0}^h S_i$, i.e., the set of all points in $\V$ that ever appeared in $S_i$ for some $i\in [0, h]$. As we only need to compute $\bar{\alpha}(u, v)$ for $v\in S^+$ and $u\in \V$. The total number of stored $(u, v)$ pairs and thus the number of distance computations is $O(|\V| \cdot |S^+|)$. Empirical analysis reveals that successive shortcut sets $S_i$ and $S_{i+1}$ exhibit significant overlap. The underlying reason is that as $\alpha$ increases, the pruning condition becomes more permissive, allowing most of the points in $S_i$ to remain in $S_{i+1}$, which implies that $|S^+|$ is close to $M$. Therefore, with the distance-reusing heuristic, the total number of distance computations is close to $O(|\V|\cdot M)$ in practice (rather than $O(|\V|\cdot M \cdot h)$), and the overhead introduced by calling the {pruning} method multiple times becomes limited.

\extraspacing{\bf Lazy pruning during backward edge insertion.} In the backward edge insertion phase (Section~\ref{sec:practical-pg:nb}), the conventional way used by $\tau$-MNG and NSG inserts each backward edge one by one. Whenever a node's out-degree exceeds $M$, a pruning procedure is invoked to enforce the out-degree constraint. We propose a lazy pruning strategy that invokes our adaptive pruning procedure at most once for each node. Specifically, for each point $p$, we first collect all backward edges starting from $p$ and merge them with $p$’s existing out-edges to form a candidate set $\V$. If $|\V| > M$, we invoke the {adaptive-pruning} procedure to select the final out-neighbors; otherwise, $\V$ is assigned directly as the new neighbor set of $p$.

\extraspacing{\bf Total construction time.} The construction time of $\alpha$-CNG is dominated by two parts: (i) approximate $K$-NN graph construction, and (ii) candidates pruning for all $p\in P$. We utilize the NN-decent algorithm \cite{dcl11} to compute the approximate $K$-NN graph, whose empirical time complexity is sub-quadratic.  

\vgap

Thanks to the lazy pruning strategy, we invoke the {adaptive-pruning} method at most twice for each $p\in P$ (once in phase 2 and once in phase 3). Since each candidate set $\V$ has a size at most $C$ ($C \le 500$ in our experiments), the running time for each adaptive pruning is $O(C \cdot (\log C + M\cdot h_{\max})$, where $h_{\max}$ is the maximum number of $\alpha$ values tested when pruning $\V$. Thus, the total running time is $O(n\cdot C(\log C + M\cdot h_{\max})) + f(n)$ where $f(n)$ is the running time of the NN-decent. We conclude that our construction time is sub-quadratic and comparable to existing practical PGs.

\subsection{Discussions} 

% \noindent{\bf Setting the parameter $\tau$.} Although $\tau$ is a problem parameter, in our experiments, we set $\tau$ to a small value within the range $[0, 10]$ (and in most cases, $\tau$ is smaller than 1) to avoid the large maximum out-degree. This is achieved through a grid search on the test queries. In practical applications, historical queries can be leveraged to determine this parameter.
%We provide practical guidelines for setting the parameters of $\alpha$-CNG. For parameter $\alpha$, thanks to the adaptive local pruning strategy, we simply set a universal maximum value $\alpha_{\max}=1.6$ for all datasets, and the increment step $\Delta\alpha$ is fixed to $0.05$.

\noindent{\textbf{Parameter configuration.} Our experiments found that the shortcut set size may already reach $M/2$ when $\alpha < 1$ for certain data points. Therefore, the adaptive pruning method begins with $\alpha_0 = 0.9$; we set $\Delta\alpha = 0.05$ and $\alpha_{\max}=1.6$.

\vgap

%we set $\tau$ to a small value within the range $[0, 10]$ (and in most cases, $\tau$ is smaller than 1) to avoid the large maximum out-degree. This is achieved through a grid search on the test queries. In practical applications, historical queries can be leveraged to determine this parameter.

Although $\tau$ is a problem parameter, in our experiments, we set $\tau$ to a small value within the range $[0, 10]$ (and in most cases, $\tau$ is smaller than 1) to avoid the large maximum out-degree. Both $\tau$-MNG \cite{pcc+23} and our paper found that the optimal settings for $\tau$ may vary across different datasets. Hence, we apply a grid search on the test queries. In practical applications, historical queries can be leveraged to determine this parameter. Specifically, we first compare the results obtained with $\tau=0$ against those with $\tau\in\{10, \, 1,\,0.1,\,0.01,\,0.001\}$ to identify a coarse range of $\tau$, and then fine-tune $\tau$ to locate the best value.

%Experimental results demonstrate that the $\alpha$-CNG is empirically effective regardless of whether $\dis(q, v^*) \leq \tau$ or $\dis(q, v^*) > \tau$. 

%As shown by Lemma~\ref{lmm:out-degree}, the maximum out-degree of the constructed PG has an exponential dependence on $\tau$. 

\vgap

\noindent{\bf Updates.} This paper focuses on the development of static PGs that enhance both query accuracy and query time. Supporting updates is beyond the scope of this work. Nonetheless, concepts from existing research—such as periodic global rebuilding \cite{pcc+23}, lazy deletion with masking \cite{ssks21}, and in-place update strategies \cite{xmb+25}—could be applied to our $\alpha$-CNG and may be considered for future work.

\section{Experimental Evaluation} \label{sec:exp}

Section~\ref{sec:exp:setup} describes the datasets and competing methods in our evaluation. Section~\ref{sec:exp:search} assesses the query performance of our approach compared to baselines, as well as its scalability on large datasets. Section~\ref{sec:exp:para-scala} analyzes the impact of the parameters $\tau$ and $\alpha$ in our methods. Section~\ref{sec:exp:construction} evaluates the index construction performance. Finally, Section~\ref{sec:exp:our-rule} explores the effectiveness of our edge pruning rule by replacing those used in other PGs with our own.

\begin{table}
\centering
\caption{Dataset statistics}
\label{tab:dataset_stats}
\figcapdown\figcapdown
\begin{tabularx}{0.7\linewidth}{lccccc}
\toprule
\textbf{Dataset} & \textbf{Dim.} & \textbf{\# Base} & \textbf{\# Query} & \textbf{Source} & \textbf{Type} \\
\midrule
% Glove   & 100  & 1.18M  & 10K   & \cite{psm14} & Text \\
\ttt{SIFT}    & 128  & 1M     & 10K  & \cite{jds10} & Image \\
% Deep    & 256  & 1M     & 1K   & \cite{lzs+19} & Image \\
\ttt{CRAWL}   & 300  & 1.98M  & 10K   & \cite{mgbpj18} & Text \\
\ttt{WIKI}    & 384  & 1M     & 1K    & \cite{mxbs16} & Text \\
\ttt{MSONG}   & 420  & 1M     & 200   & \cite{bewl11} & Audio \\
\ttt{LAION-I2I} & 768 & 1M     & 10K   & \cite{sbv+22} & Image \\
\ttt{GIST}    & 960  & 1M     & 1K    & \cite{jds10} & Image \\
\midrule
\ttt{DEEP100M}  & 96   & 100M & 10K & \cite{bl16} & Image \\
% \ttt{Wiki100M}*  & 384  & 100M & 10K & \cite{wikidump} & Text \\
\ttt{BIGANN100M}  & 128  & 100M & 10K & \cite{swa+22} & Image \\

\bottomrule
\end{tabularx}
% \vspace{0.5pt}
% \begin{flushleft}
% %\footnotesize* Large-scale datasets used for scalability evaluation.
% \end{flushleft}
  % \vspace{-10pt}
\end{table}

\subsection{Experiment Settings}\label{sec:exp:setup}

All experiments were conducted on a Linux server equipped with an Intel(R) Xeon(R) Gold 6430 CPU and 512 GB RAM, running Ubuntu 20.04. All methods were implemented in C++ and compiled with \texttt{g++} using the \texttt{-O3} optimization flag. 

%(128 physical cores) 
%Multi-threading was enabled via OpenMP with up to 64 threads. 
%All experiments were executed on a CPU without GPU acceleration.

\extraspacing{\bf Datasets.} We utilized eight real-world datasets, which are widely adopted in ANN search evaluation~\cite{pcc+23,ggxl25,cscz24,fwc21,fxwc19,my20,yxl+24}. These datasets include six at the 1M scale and two at the 100M scale. The 1M-scale datasets span various application domains, including image (\ttt{SIFT}, \ttt{LAION-I2I}, \ttt{GIST}), audio (\ttt{MSONG}), and text (\ttt{WIKI}, and \ttt{CRAWL}). We also assessed the scalability of our methods using the 100M-scale datasets, including \ttt{DEEP100M}\footnote{\ttt{DEEP100M} consists of the first 100 million vectors from the public \ttt{DEEP1B}\cite{bl16} dataset.} and \ttt{BIGANN100M}. Table~\ref{tab:dataset_stats} summarizes the key statistics, including the dimensionality (Dim.), the number of base points (\# Base), the number of query points (\# Query), data source, and data type.

\extraspacing{\bf Competing Methods.} Our first method, \texttt{$\alpha$-CNG}, employs an adaptive pruning strategy (Section~\ref{sec:practical-pg:alpha}). To evaluate the effectiveness of this strategy, we examined \texttt{Fixed-$\alpha$-CNG}, a variant that employs a global $\alpha$ for all data points. Since previous studies \cite{lzs+19,wxyw21} have consistently shown that PG-based methods 
outperform non-PG methods such as LSH and IVF, 
we focus our comparisons on PG-based baselines. We compared our methods against four state-of-the-art PG algorithms: \texttt{HNSW}~\cite{my20}, \texttt{Vamana}~\cite{sds+19}, \texttt{NSG}~\cite{fxwc19}, and \texttt{$\tau$-MNG}~\cite{pcc+23}, selected for their robust performance. Three additional popular algorithms—NSSG \cite{fwc21}, DPG \cite{lzs+19}, and FANNG \cite{hd16}—were excluded as $\tau$-MNG \cite{pcc+23} outperforms their performance. All baseline implementations use publicly available source code.

%\extraspacing{\bf Competing Methods.} Our first method is \texttt{$\alpha$-CNG}, which utilizes an adaptive pruning strategy (Section~\ref{sec:practical-pg:alpha}). To assess the effectiveness of this strategy, we developed a variant using a fixed $\alpha$ for all data points, referred to as \texttt{Fixed-$\alpha$-CNG}. We compared our methods against four state-of-the-art PG-based algorithms: \texttt{HNSW}~\cite{my20}, \texttt{Vamana}~\cite{sds+19}, \texttt{NSG}~\cite{fxwc19}, and \texttt{$\tau$-MNG}~\cite{pcc+23}, which are widely employed for their robust performance. We omitted the other three algorithms, NSSG \cite{fwc21}, DPG \cite{lzs+19}, and FANNG \cite{hd16}, because their performance is surpassed by $\tau$-MNG \cite{pcc+23}. The source codes of these baseline algorithms were obtained online.

\vgap

The construction parameters for all structures were selected based on official recommendations and empirical evaluations. \texttt{HNSW} used a configuration of $M$ = 32 and $ef_C$ = 500 for all datasets, while \texttt{Vamana} used the settings in~\cite{sds+19}, with $M$ = 70, $L$ = 75, and $\alpha$ = 1.2. 

\vgap

\texttt{NSG}, \texttt{$\tau$-MNG}, \texttt{Fixed-$\alpha$-CNG}, and \texttt{$\alpha$-CNG} shared core graph parameters $K$, $M$, $L$, and $C$. We set $K$=200 for all datasets. For \ttt{SIFT} and \ttt{GIST}, we used the settings $M$=50, $L$=40, $C$=500 and $M$=70, $L$=60, $C$=500, respectively, following NSG repository recommendations. For \ttt{WIKI} and \ttt{LAION-I2I}, we applied a configuration of $M$=70, $L$=60, and $C$=500. For \ttt{MSONG}, and \ttt{CRAWL}, we set $M$=100, $L$=100, and $C$=500. For the two large-scale datasets \ttt{DEEP100M} and \ttt{BIGANN100M}, we adopted the configurations of $M$=100, $L$=100, $C$=500 and $M$=80, $L$=100, and $C$=500 respectively. The pruning-related parameters, $\alpha$ and $\tau$, were empirically tuned for optimal performance. For \texttt{$\alpha$-CNG}, we set $\alpha_0 = 0.9$ (except for \ttt{BIGANN100M}, where $\alpha_0 = 1$ was used to include more long edges), $\Delta\alpha = 0.05$, and $\alpha_{\max} = 1.6$.

\extraspacing{\bf Metrics.} Query accuracy was measured using recall@\(k\), defined as \(\text{recall@}k = \frac{|\mathrm{KNN}(q) \cap \mathrm{Res}(q)|}{k}\), where \(\mathrm{KNN}(q)\) denotes the exact \(k\)-NN of the query \(q\), and \(\mathrm{Res}(q)\) is the set of query results returned by the algorithm. Query efficiency was evaluated using two metrics: (1) the number of distance computations (NDC), which dominates the overall search cost and provides a platform-independent measure of efficiency; (2) the number of hops, defined as the number of hop vertices (search steps) visited by the search algorithm. For each dataset, we recorded the average recall@100, average NDC, and average number of hops of all provided queries.

% under different beam-search queue sizes.

\subsection{Search Performance} \label{sec:exp:search}
\begin{figure*}[t]  
  \centering
              % \captionsetup{skip=1pt}  % ← 减小图与caption之间的间距
\includegraphics[width=0.8\textwidth]{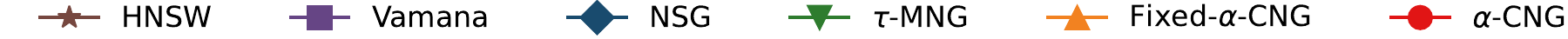}
\par\vspace{0pt}
  \begin{subfigure}[t]{0.32\textwidth}
      \centering
            \captionsetup{skip=-3pt}  % ← 减小图与caption之间的间距
    \includegraphics[width=\linewidth]{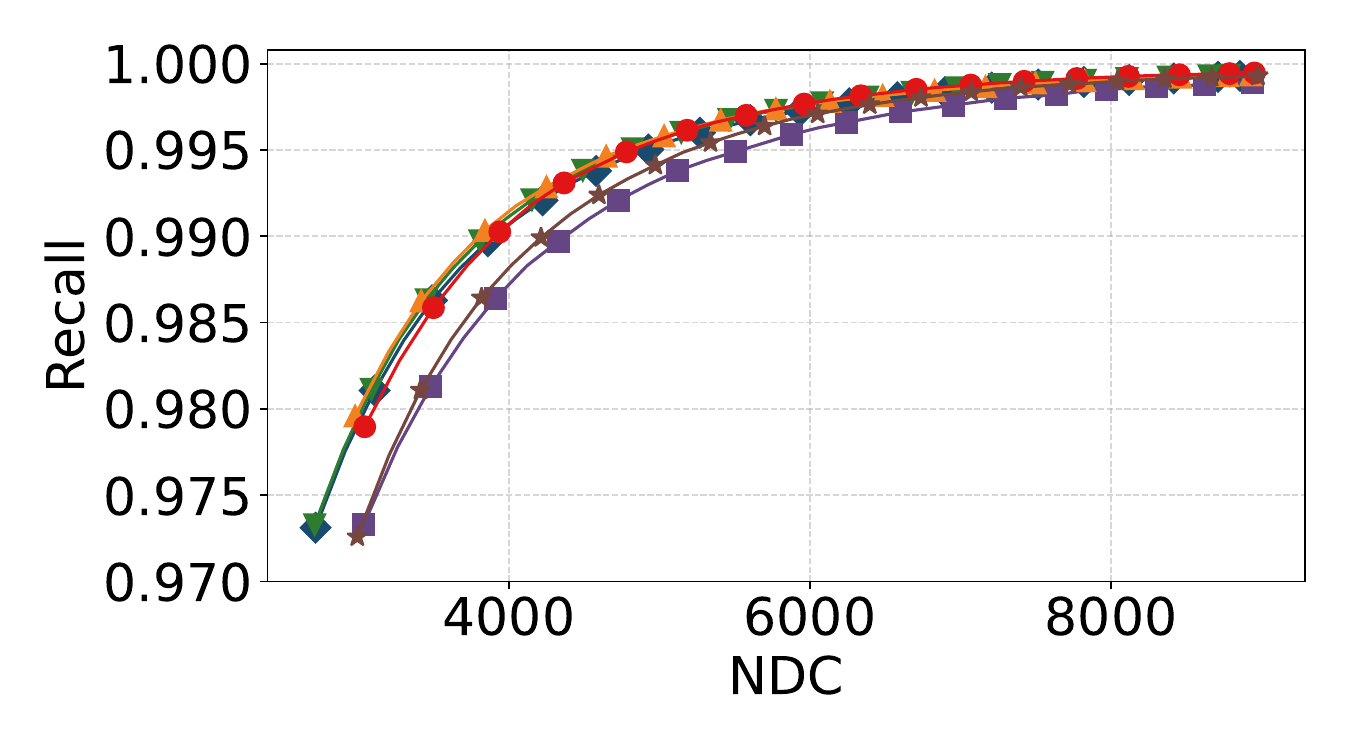}
    \caption{\ttt{SIFT}}
  \end{subfigure}
  \hfill
    \begin{subfigure}[t]{0.32\textwidth}
          \centering
      \captionsetup{skip=-3pt}  % ← 减小图与caption之间的间距
    \includegraphics[width=\linewidth]{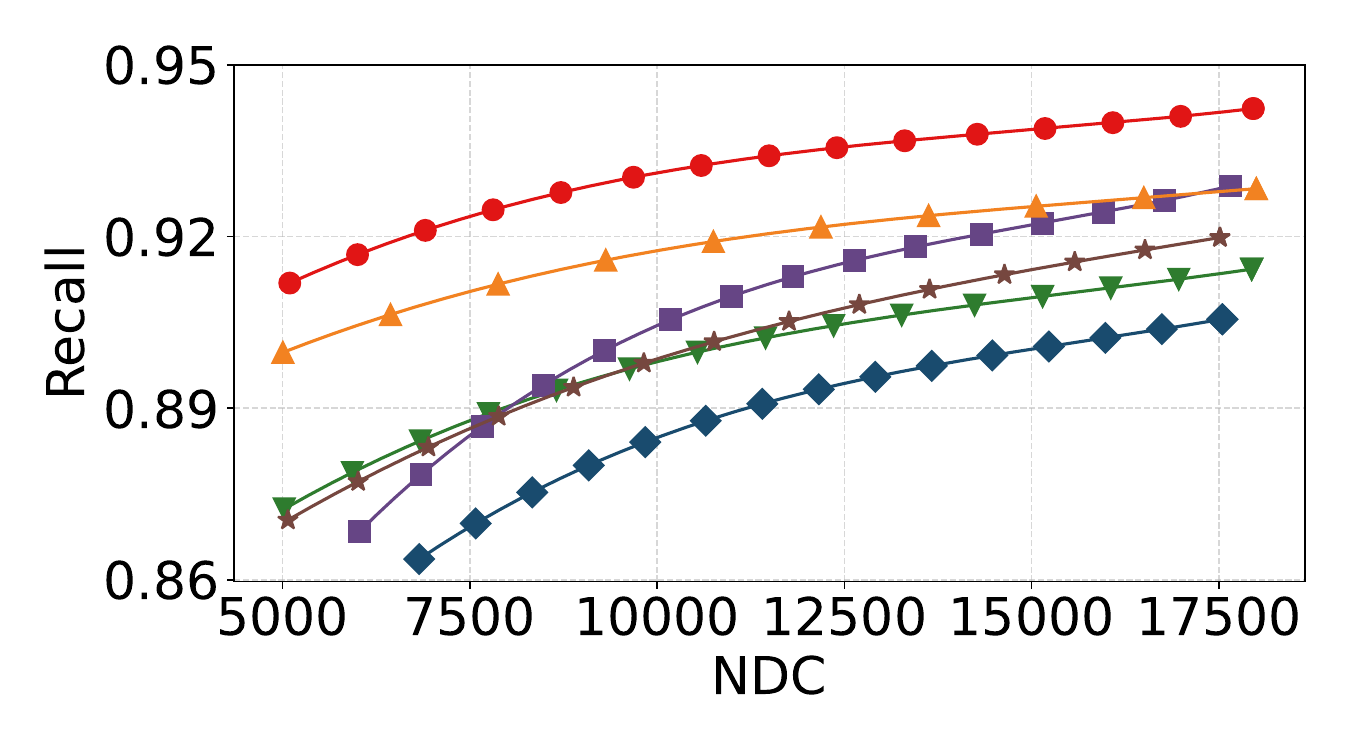}
    \caption{\ttt{CRAWL}}
  \end{subfigure}
  \hfill
    \begin{subfigure}[t]{0.32\textwidth}
          \centering
        \captionsetup{skip=-3pt}  % ← 减小图与caption之间的间距
    \includegraphics[width=\linewidth]{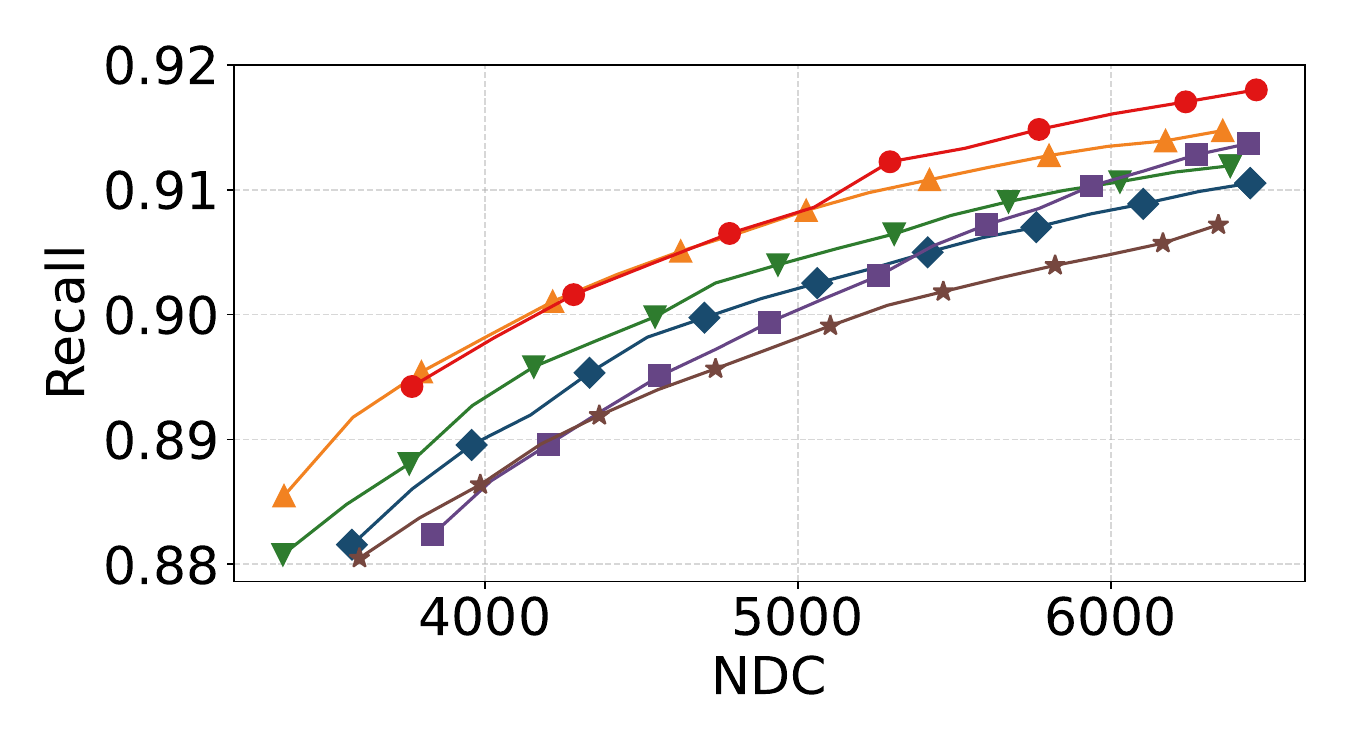}
    \caption{\ttt{WIKI}}
  \end{subfigure}
    % \par\bigskip  % 换行并加入垂直间距    
  \begin{subfigure}[t]{0.32\textwidth}
  \centering
        \captionsetup{skip=-3pt}  % ← 减小图与caption之间的间距
    \includegraphics[width=\linewidth]{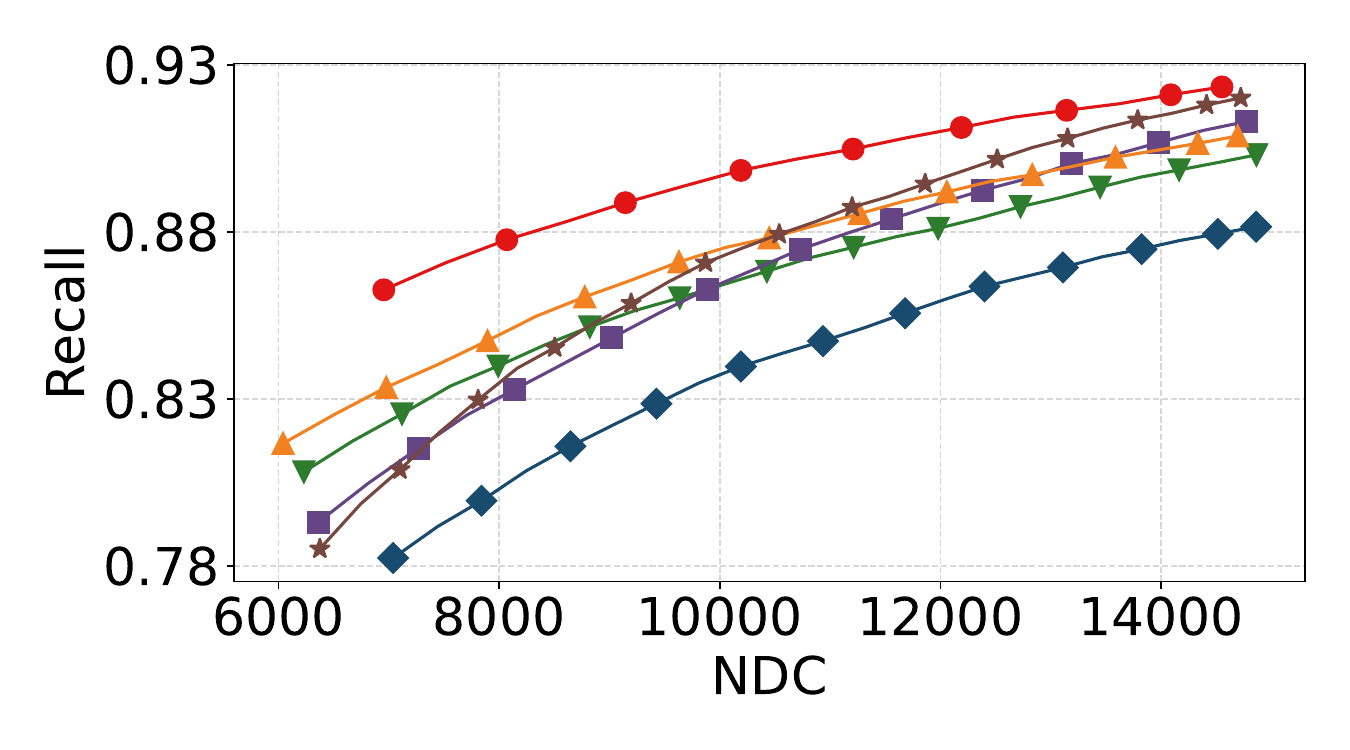}
    \caption{\ttt{MSONG}}
  \end{subfigure}
  \hfill
  \begin{subfigure}[t]{0.32\textwidth}
  \centering
        \captionsetup{skip=-3pt}  % ← 减小图与caption之间的间距
    \includegraphics[width=\linewidth]{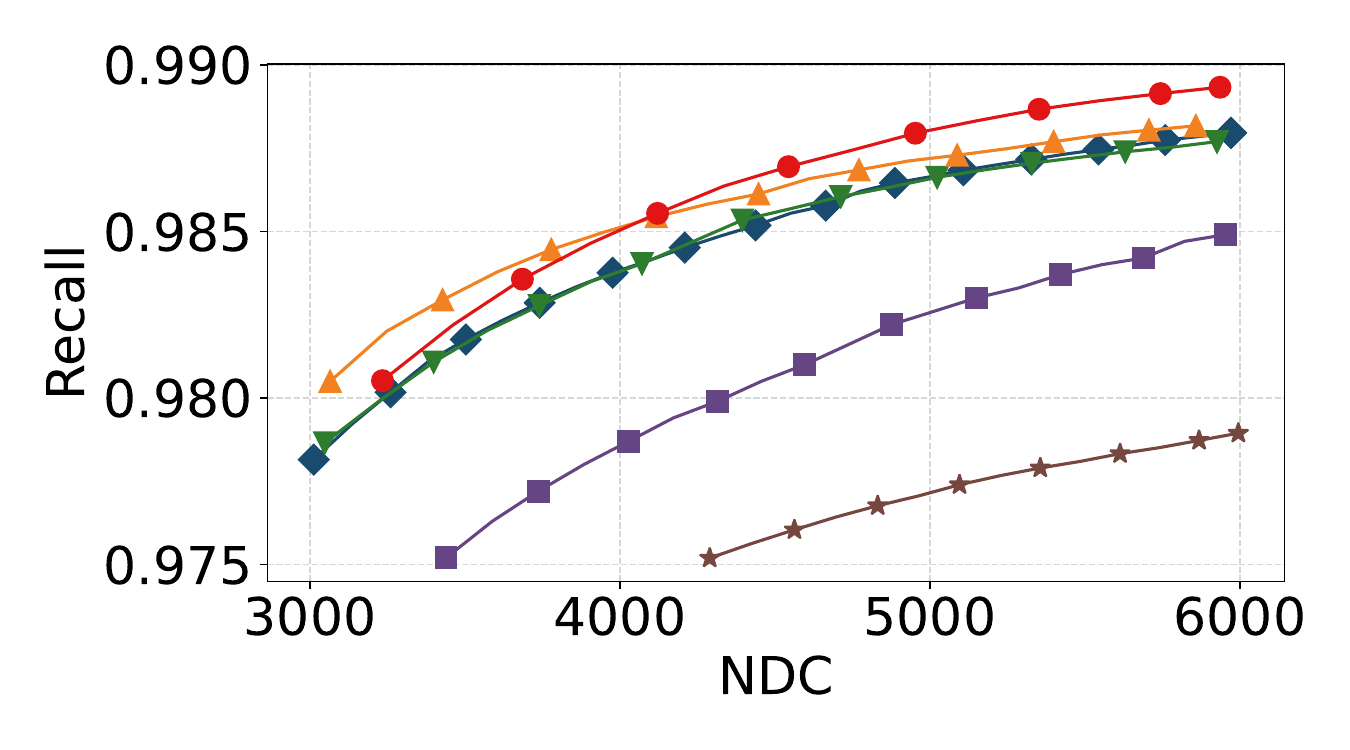}
    \caption{\ttt{LAION-I2I}}
  \end{subfigure}
  \hfill
    \begin{subfigure}[t]{0.32\textwidth}
    \centering
        \captionsetup{skip=-3pt}  % ← 减小图与caption之间的间距
    \includegraphics[width=\linewidth]{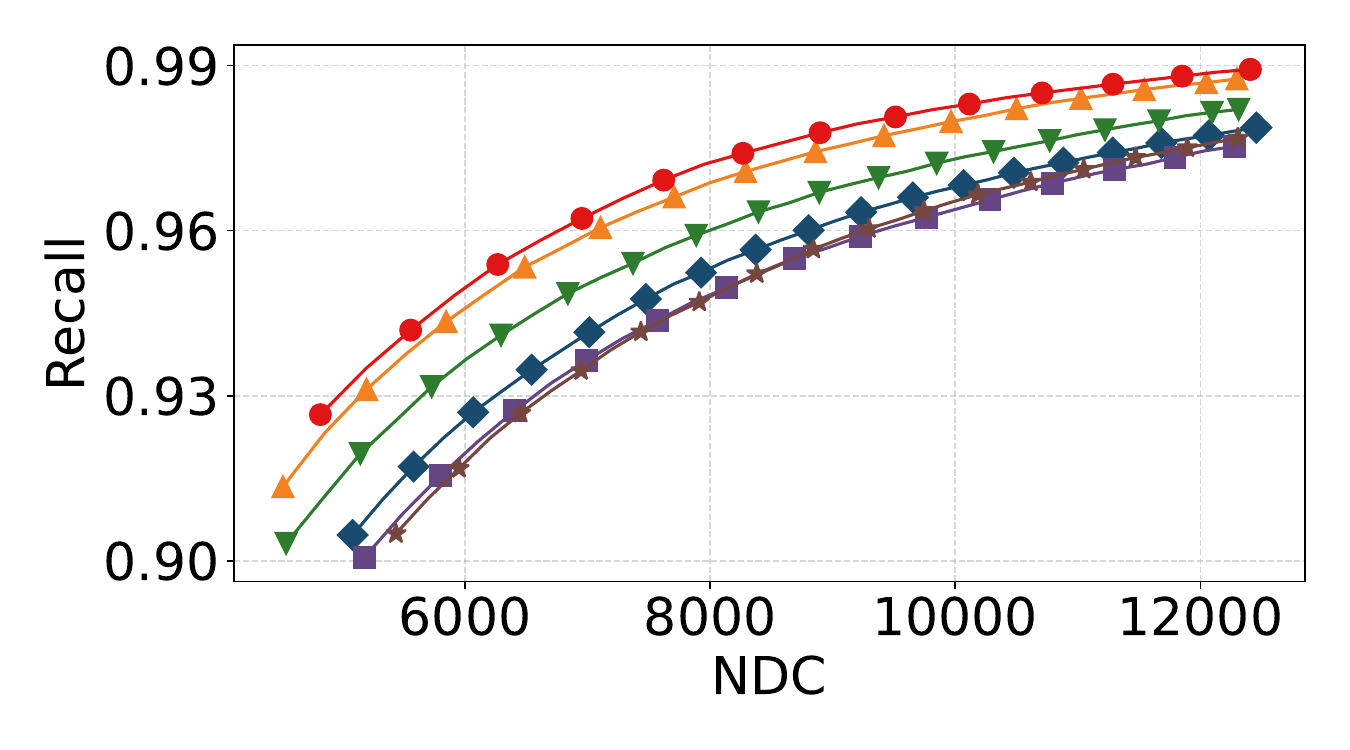}
    \caption{\ttt{GIST}}
  \end{subfigure}
    \caption{Recall@100 vs. NDC}
    \label{fig:NDC-vs-recall}
   % \par\bigskip  % 换行并加入垂直间距
  \begin{subfigure}[t]{0.32\textwidth}
  \centering
        \captionsetup{skip=-3pt}  % ← 减小图与caption之间的间距
    \includegraphics[width=\linewidth]{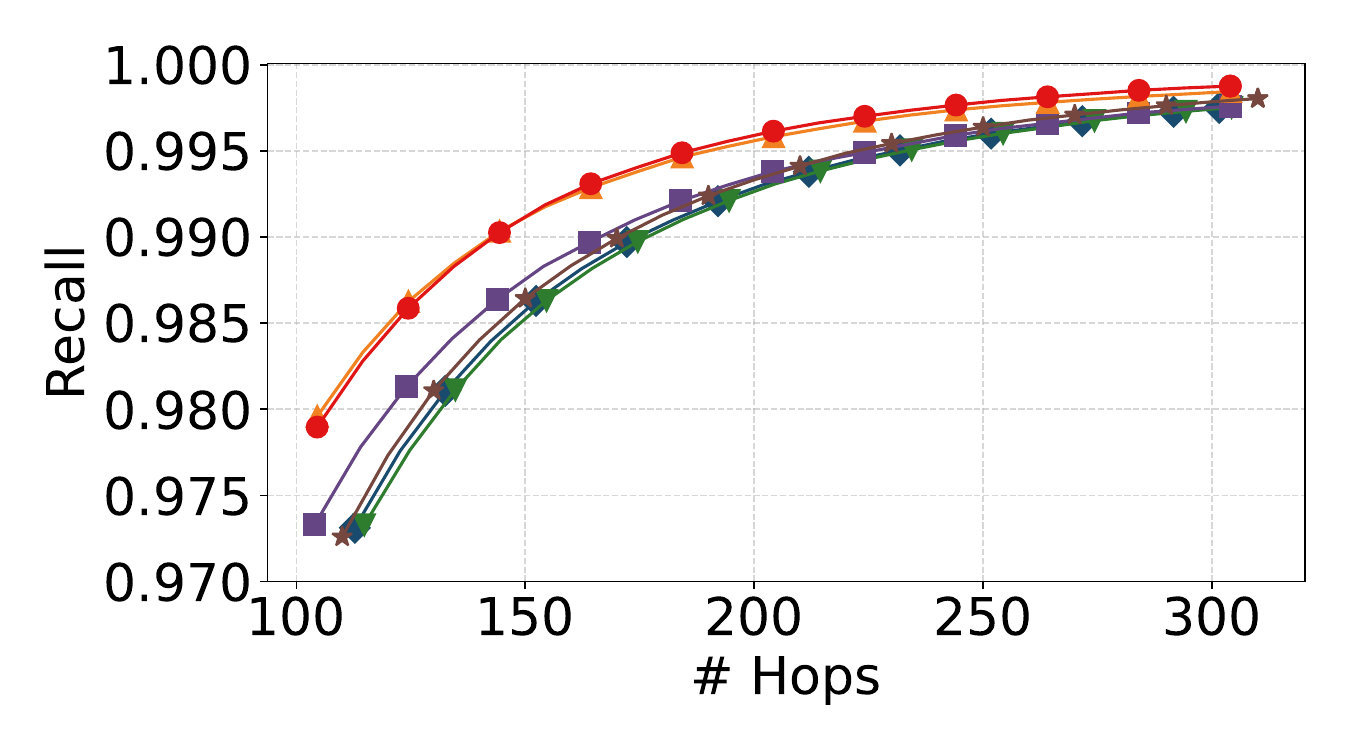}
    \caption{\ttt{SIFT}}
  \end{subfigure}
  \hfill
  \begin{subfigure}[t]{0.32\textwidth}
  \centering
        \captionsetup{skip=-3pt}  % ← 减小图与caption之间的间距
    \includegraphics[width=\linewidth]{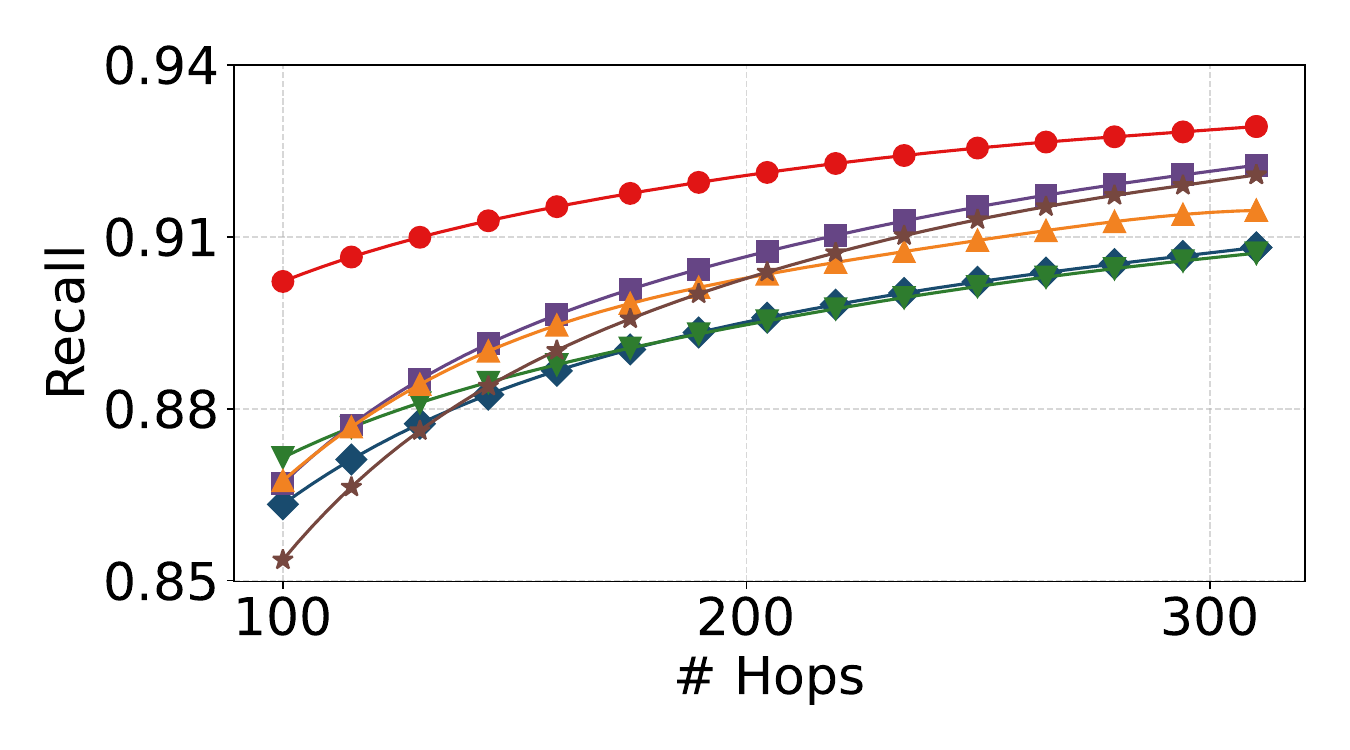}
    \caption{\ttt{CRAWL}}
  \end{subfigure}
  \hfill
    \begin{subfigure}[t]{0.32\textwidth}
      \centering
        \captionsetup{skip=-3pt}  % ← 减小图与caption之间的间距
    \includegraphics[width=\linewidth]{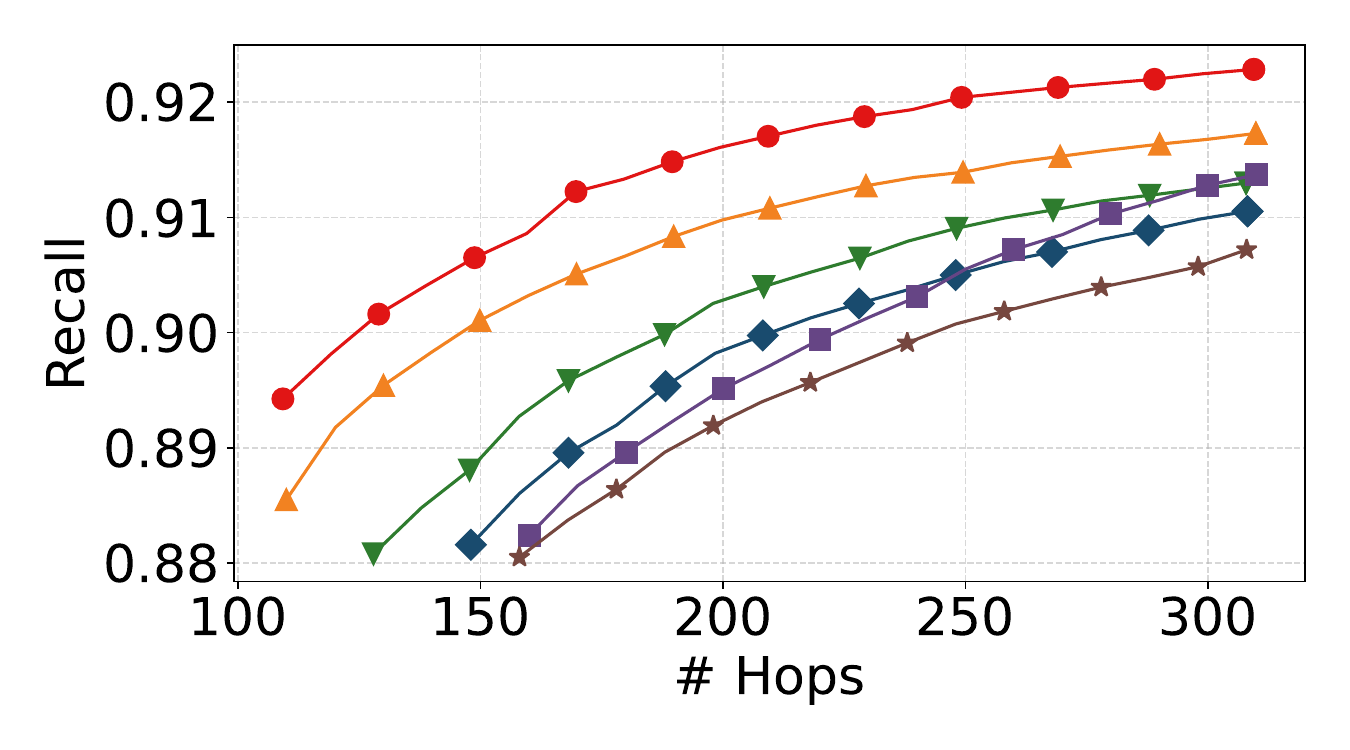}
    \caption{\ttt{WIKI}}
  \end{subfigure}
     % \par\bigskip  % 换行并加入垂直间距
  \begin{subfigure}[t]{0.32\textwidth}
  \centering
        \captionsetup{skip=-3pt}  % ← 减小图与caption之间的间距
    \includegraphics[width=\linewidth]{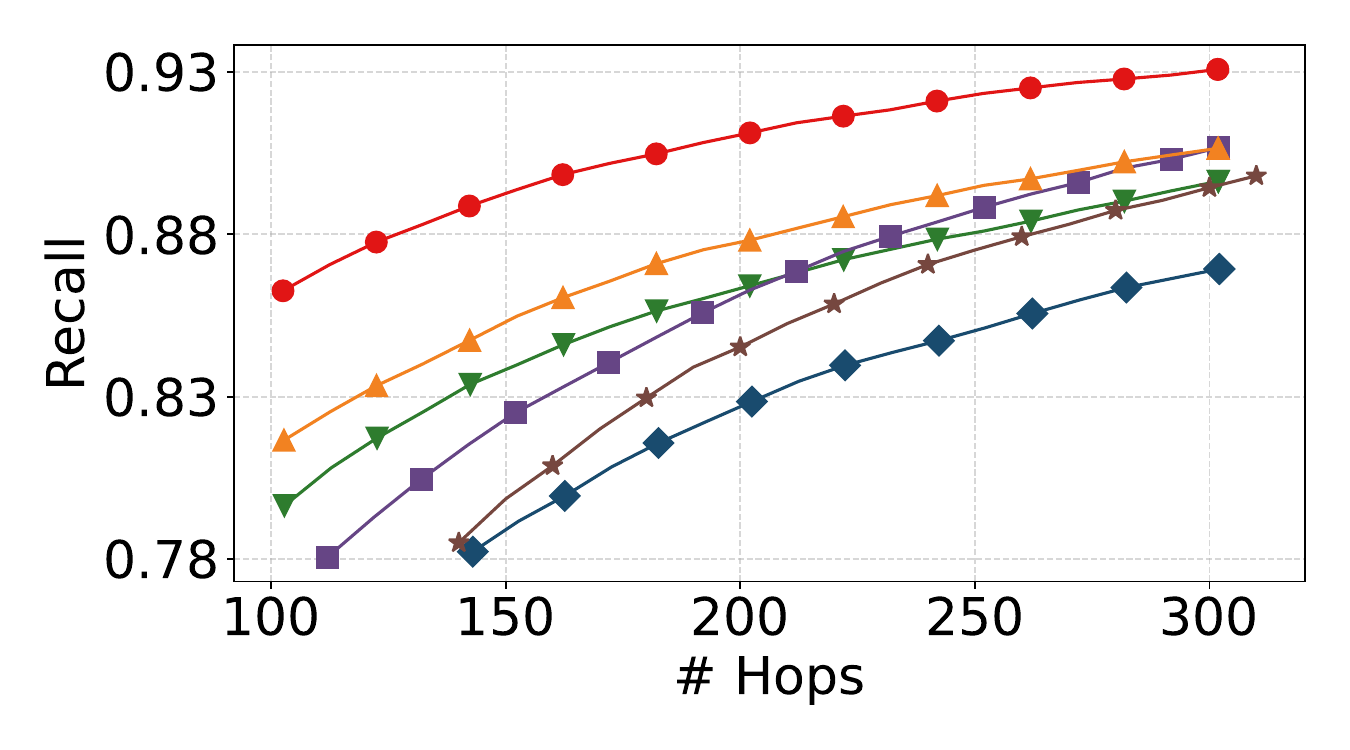}
    \caption{\ttt{MSONG}}
  \end{subfigure}
  \hfill
  \begin{subfigure}[t]{0.32\textwidth}
  \centering
        \captionsetup{skip=-3pt}  % ← 减小图与caption之间的间距
    \includegraphics[width=\linewidth]{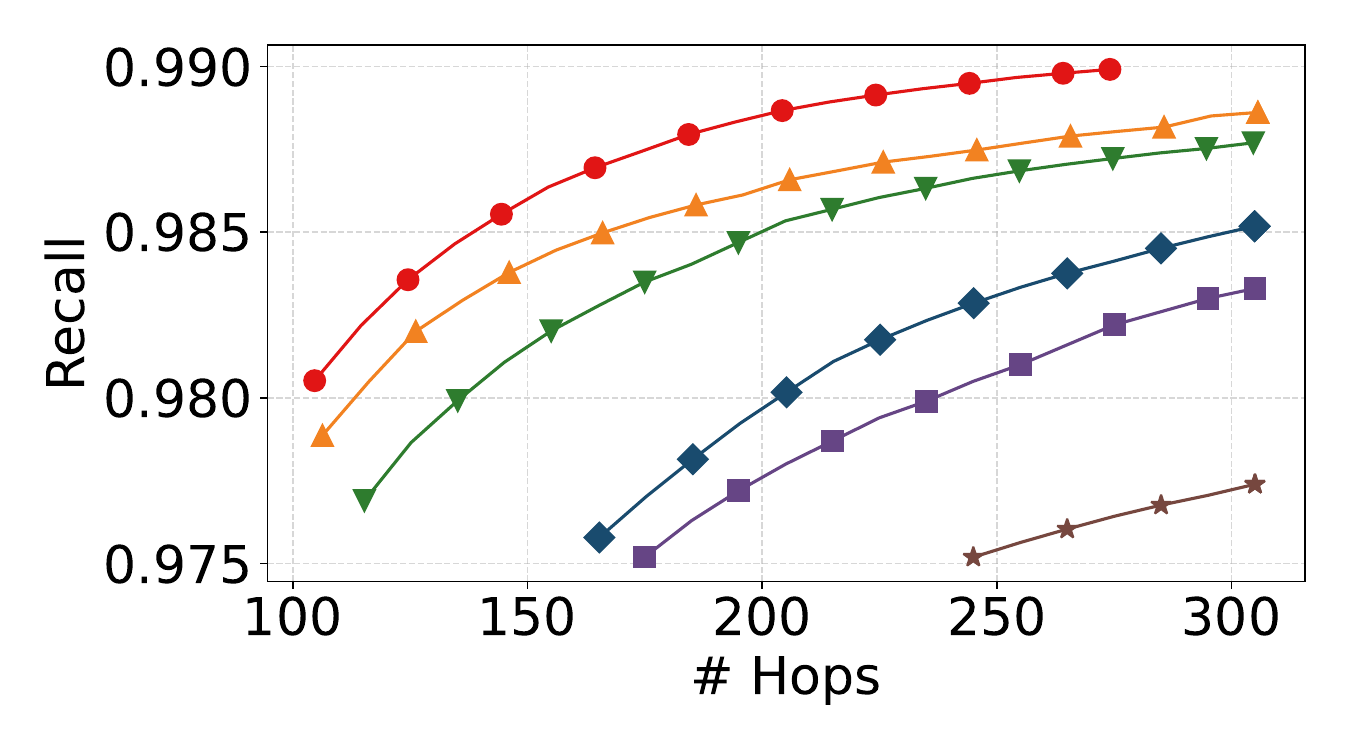}
    \caption{\ttt{LAION-I2I}}
  \end{subfigure}
  \hfill
    \begin{subfigure}[t]{0.32\textwidth}
    \centering
        \captionsetup{skip=-3pt}  % ← 减小图与caption之间的间距
    \includegraphics[width=\linewidth]{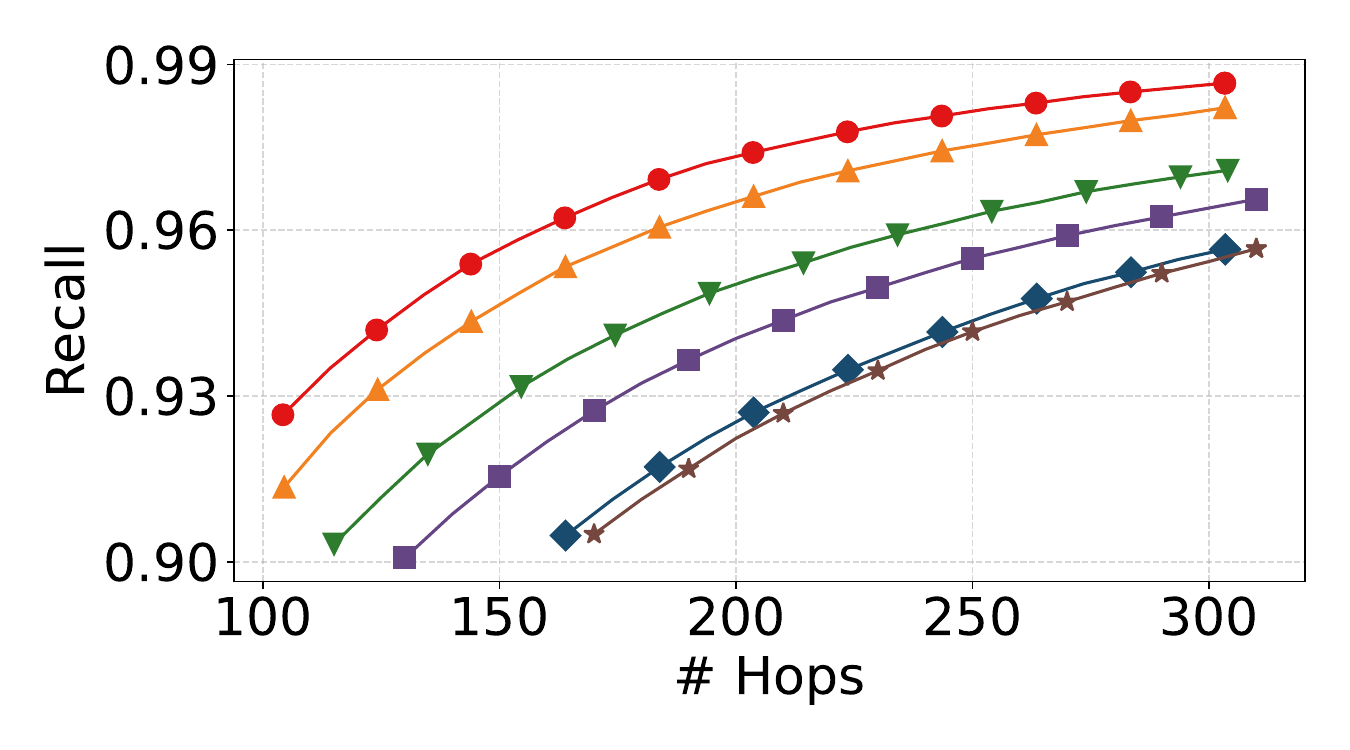}
    \caption{\ttt{GIST}}
  \end{subfigure}
  \caption{Recall@100 vs. number of hops}
  \label{fig:hops-vs-recall}
\end{figure*}

\begin{table*}[ht]
\footnotesize
\centering
\figcapup\figcapup
\caption{Speedups of our $\alpha$-CNG in NDC and \# hops over the best-performing baseline (in bold). For each dataset, all methods reached the same recall@100, i.e., 0.99 for \ttt{SIFT} and \ttt{LAION-I2I}, 0.95 for \ttt{GIST} at 0.95, and 0.90 for \ttt{WIKI}, \ttt{CRAWL}, and \ttt{MSONG}.}
%\caption{Search performance improvements of our $\alpha$-CNG on the six 1M-scale datasets. Each method is evaluated at a dataset-specific recall@100 level:
%\ttt{SIFT} and \ttt{LAION-I2I} at 0.99,
%\ttt{GIST} at 0.95, \ttt{WIKI} at 0.91,
%and \ttt{CRAWL} and \ttt{MSONG} at 0.90. The last row reports the improvements of our $\alpha$-CNG in NDC and the number of hops with the best-performing baseline.}
\figcapdown\figcapdown
\label{tab:query-improvements}
\resizebox{\textwidth}{!}{
\begin{tabular}{lcccccc|cccccc}
\toprule
\textbf{}
& \multicolumn{6}{c}{\textbf{NDC}} 
& \multicolumn{6}{c}{\textbf{\# Hops}} \\
\cmidrule(lr){2-7} \cmidrule(lr){8-13}
\textbf{Method}
& \ttt{SIFT} & \ttt{CRAWL} & \ttt{WIKI} & \ttt{MSONG} & \ttt{LAION-I2I} & \ttt{GIST}
& \ttt{SIFT} & \ttt{CRAWL} & \ttt{WIKI} & \ttt{MSONG} & \ttt{LAION-I2I} & \ttt{GIST} \\
\midrule
\ttt{HNSW}
& 4224 & 10284 & 5204 & \bf{12365} & N/A & 8179 
& 171 & 190 & 244 & 315 & N/A & 281
\\

\ttt{Vamana}
& 4373 & \bf{9286} & 4966 & 13150 & 11523 & 8177
& \bf{166} & \bf{172} & 223 & \bf{281} & \bf{810} & 232\\

\ttt{NSG}
& 3903 & 14837 & 4731 & 18289 & \bf{10297} & 7677 
& 175 & 233 & 210 & 456 & 904 & 272
\\

\ttt{$\tau$-MNG}
& \bf{3842} & 10599 & \bf{4556} & 14402 & 13776 & \bf{6971} 
& 177 & 239 & \bf{189} & 319 & 1004 & \bf{199}
\\

% \textbf{\ttt{Fixed-$\alpha$-CNG}} 
% & \bf{3802.4} & 5376.6 & 5271.0 & 13260.0 & 12107.0 & 6264.6
% & \bf{142.6} & 183.2 & 202.0 & 273.2 & 774.834 & 157.0
% \\

\textbf{\ttt{$\alpha$-CNG}} 
& 3911 & 4067 & 4166 & 10438 & 6829 & 6024 
& 143 & 102 & 124 & 167 & 282 & 137
\\
\midrule
\textbf{Speedup} 
& 0.98x & 2.28x & 1.09x & 1.18x & 1.51x & 1.16x    
& 1.16x & 1.69x & 1.52x & 1.68x & 2.88x & 1.45x
\\
% \midrule
% \multicolumn{13}{l}{\footnotesize
% \raggedright
% Each method is evaluated at dataset-specific target recall levels:
% \texttt{SIFT} and \texttt{LAION-I2I} at 0.985,
% \texttt{GIST} at 0.95,
% and \texttt{CRAWL}, \texttt{WIKI}, \texttt{MSONG} at 0.90.
% }\\
\bottomrule
\end{tabular}
}
% \vspace{0.5mm}

% \begin{minipage}{\textwidth}
% \raggedright
% Each method is evaluated at dataset-specific target recall@100 levels:
% \ttt{SIFT} and \ttt{LAION-I2I} at 0.985,
% \ttt{GIST} at 0.95,
% and \ttt{CRAWL}, \ttt{WIKI}, \ttt{MSONG} at 0.90.
% \end{minipage}
\end{table*}

\noindent{\bf Recall vs. NDC.} We first assessed the trade-off between (average) recall and (average) NDC on the six datasets at the 1M scale. Figure~\ref{fig:NDC-vs-recall} reports the results obtained by varying the queue size $L$ of beam search. Our \texttt{$\alpha$-CNG} outperformed all baselines, achieving higher recall with fewer NDCs. The only exception is the \ttt{SIFT} dataset, for which all methods exhibited strong performance due to its low intrinsic dimensionality \cite{pcc+23}. \texttt{Fixed-$\alpha$-CNG} also surpassed the four baselines, except on \ttt{MSONG}, but was outperformed by \texttt{$\alpha$-CNG}. 

To provide a more detailed comparison, we report the NDC and number of hops at fixed recall@100 levels in Table~\ref{tab:query-improvements}. We configure different recall levels for the datasets based on their query performance in Figure~\ref{fig:NDC-vs-recall}, i.e., 0.99 for easy datasets \texttt{SIFT} and \texttt{LAION-I2I}, 0.95 for \texttt{GIST}, and 0.90 for difficult datasets \texttt{CRAWL}, \texttt{WIKI}, and \texttt{MSONG}. As shown in Table~\ref{tab:query-improvements}, our $\alpha$-CNG reduces NDC by over 15\% compared to the best-performing baseline on four out of the six datasets, and the maximum speedup can be 2.28x. We also observe that none of the baselines consistently outperformed the others, making our $\alpha$-CNG a good choice due to its consistently good performance across different datasets.
    
\extraspacing{\bf Recall vs. \# hops.} We next evaluated the (average) number of hops by varying $L$. Figure~\ref{fig:hops-vs-recall} shows that \texttt{$\alpha$-CNG} substantially reduced the number of hops across all datasets, and \texttt{Fixed-$\alpha$-CNG} also outperformed the baselines on most datasets. \rev{Table~\ref{tab:query-improvements} confirms that the reductions in the number of hops over the best-performing baseline are substantial: exceeding 45\% on five out of the six datasets, with a maximum speedup of 2.88x.} These gains arise from two key factors: $\alpha$-CNG approximates $\alpha$-CG for faster convergence, and our adaptive pruning strategy preserves more long-distance shortcut edges, effectively reducing the number of hops. The reduced hops of $\alpha$-CNG are advantageous for disk-based~\cite{sds+19} or distributed deployments of PGs, where hops correspond to I/O operations.

\extraspacing{\bf Scalability.} Finally, we tested the scalability of all competing methods on the 100M-scale datasets (see Figure~\ref{fig:large-data}). Although the improvements in NDC are less pronounced than those observed at 1M-scale, \texttt{$\alpha$-CNG} consistently outperformed all baselines, and \texttt{Fixed-$\alpha$-CNG} matched the best baseline. Both our variants significantly reduced the number of hops (over 40\% when recall@100=0.98), demonstrating that our methods converge quickly on large-scale datasets.

\begin{figure*}[t]
  \centering
  \includegraphics[width=0.8\textwidth]{Figure/legend_public_cng.pdf}

  \captionsetup{skip=1pt}

  % ---------- Row 1: DEEP100M ----------
  \begin{subfigure}[b]{0.8\textwidth}
    \centering
    % \captionsetup{skip=-3pt}
    \includegraphics[width=0.42\linewidth]{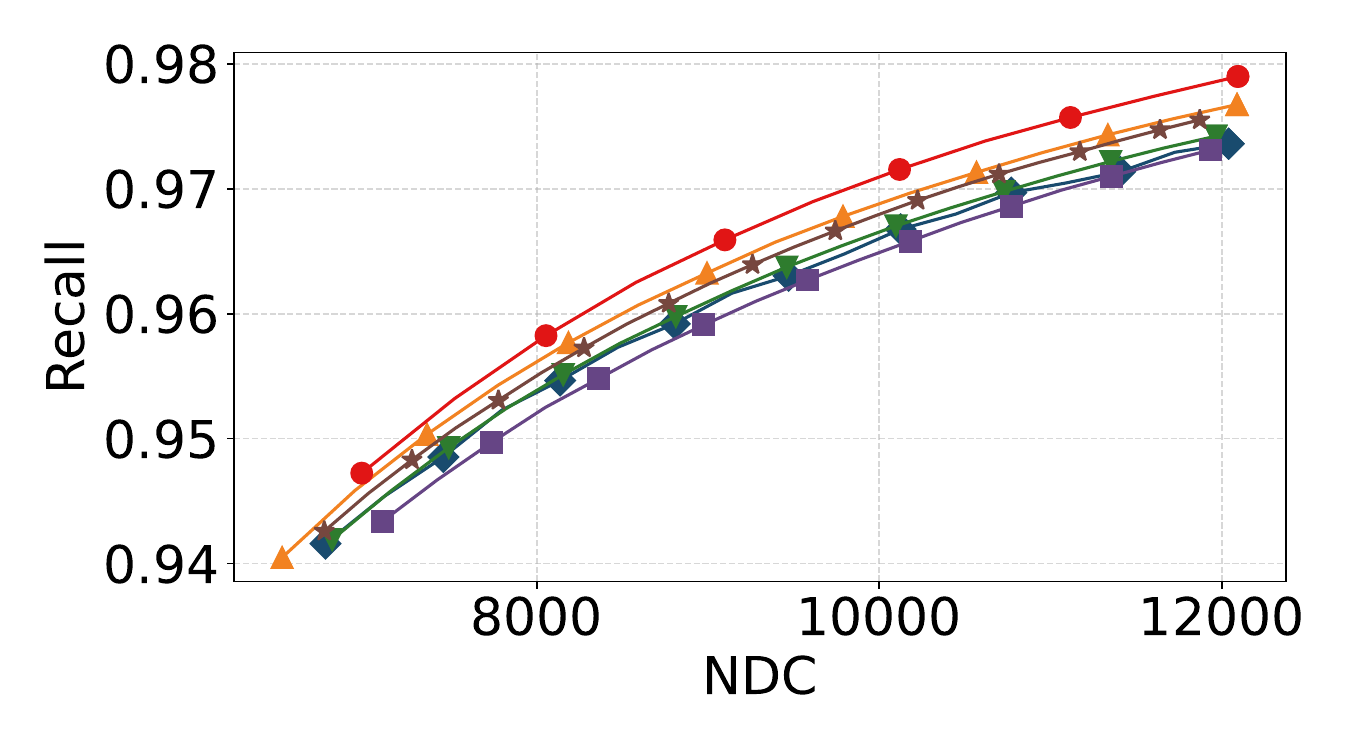}
    \hfill
    \includegraphics[width=0.42\linewidth]{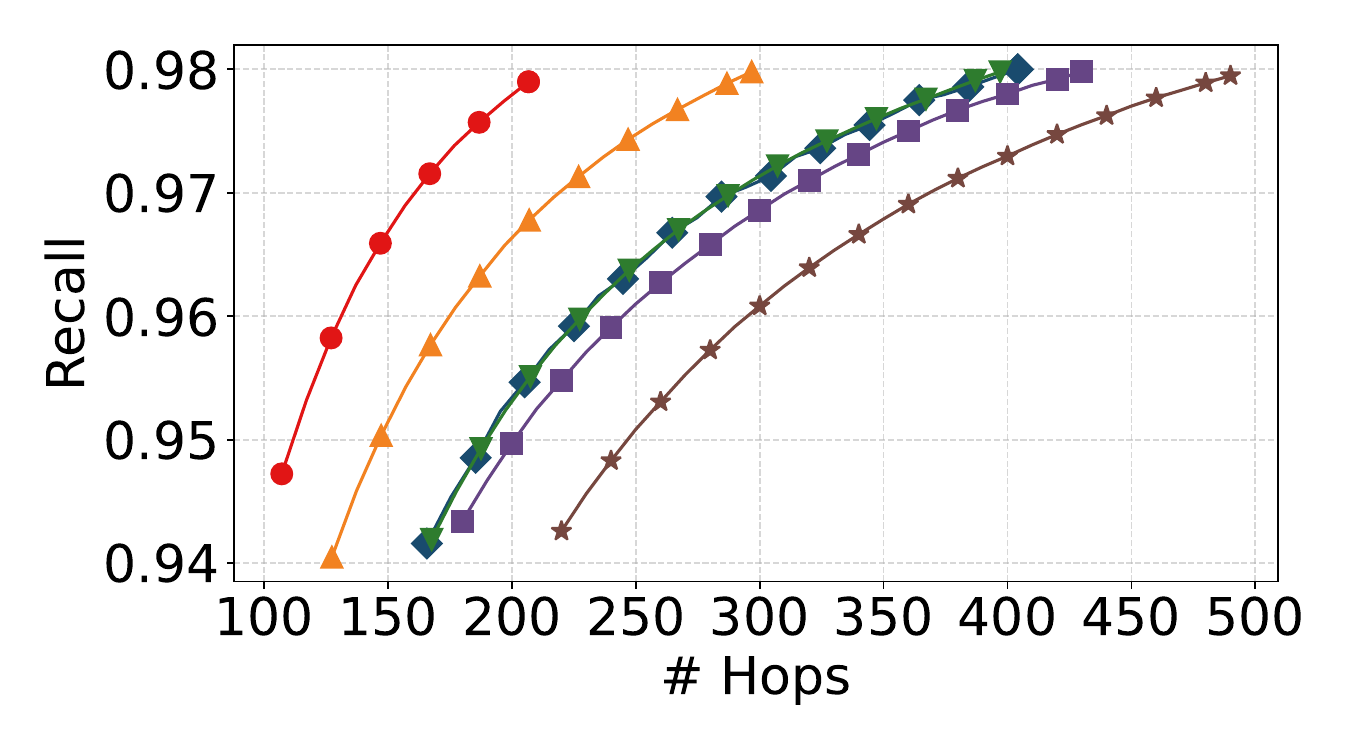}
    \caption{\ttt{DEEP100M}}
  \end{subfigure}

  % ---------- Row 2: BIGANN100M ----------
  \begin{subfigure}[b]{0.8\textwidth}
    \centering
    % \captionsetup{skip=-3pt}
    \includegraphics[width=0.42\linewidth]{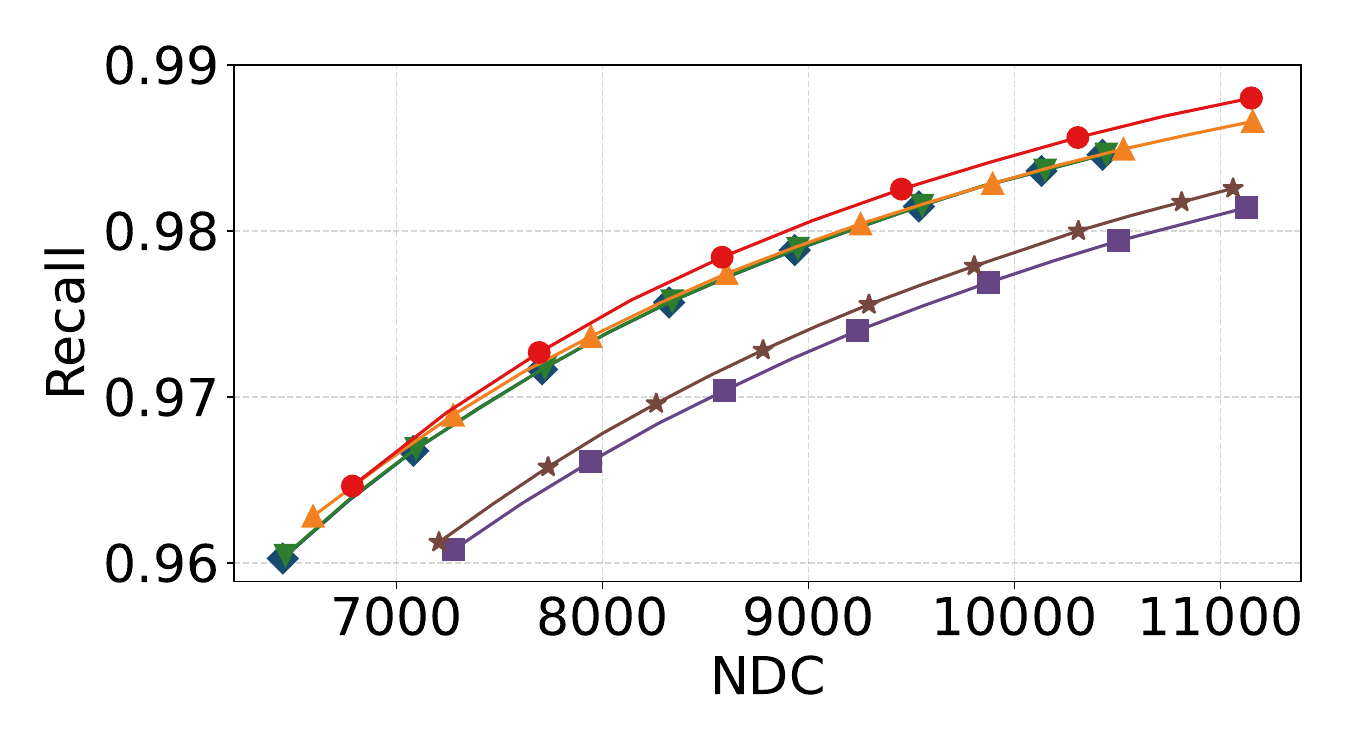}
    \hfill
    \includegraphics[width=0.42\linewidth]{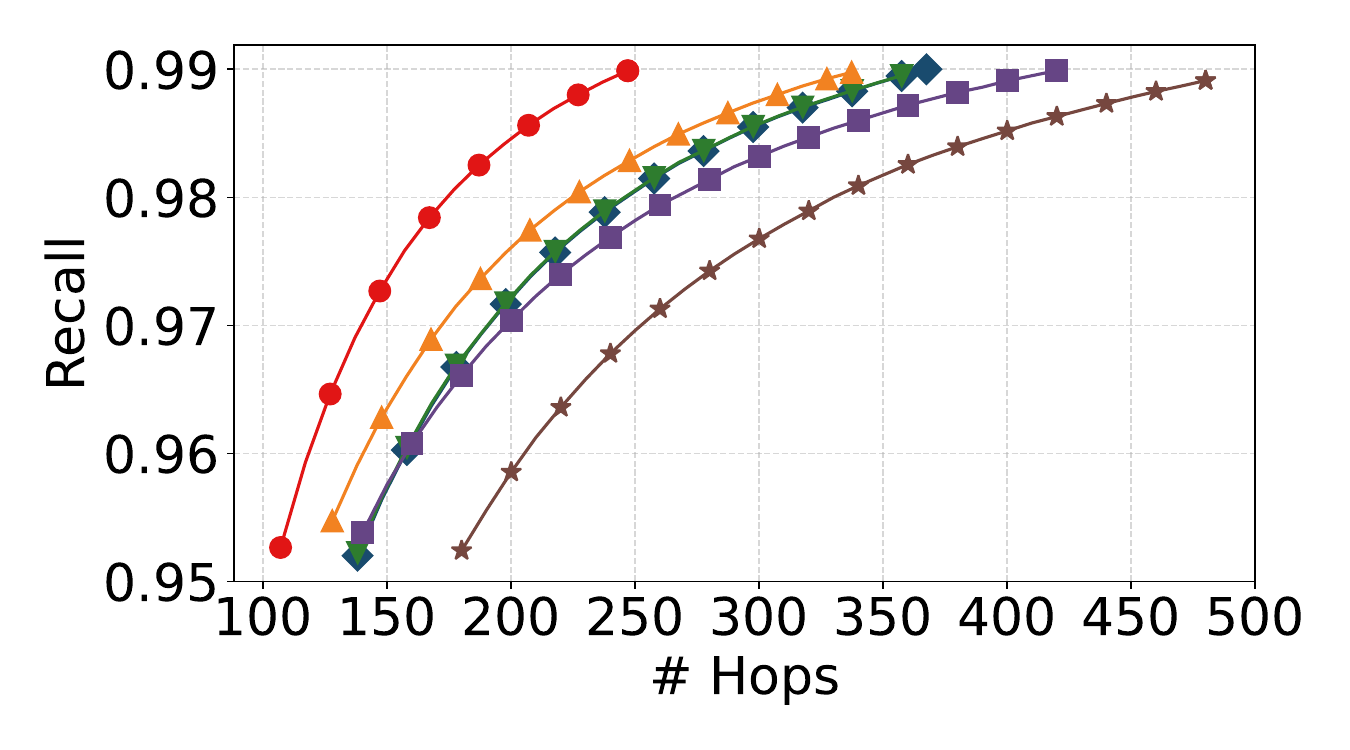}
    \caption{\ttt{BIGANN100M}}
  \end{subfigure}

  \caption{Recall@100 vs. NDC and Recall@100 vs. \# hops on 100M-scale datasets}
  \label{fig:large-data}
\end{figure*}
\subsection{Effect of Parameters} \label{sec:exp:para-scala}

\begin{figure*}[t]  
  \centering
              \captionsetup{skip=1pt}  % ← 减小图与caption之间的间距
  \begin{subfigure}[t]{0.32\textwidth}
  \centering
        \captionsetup{skip=-3pt}  % ← 减小图与caption之间的间距
    \includegraphics[width=1\linewidth]{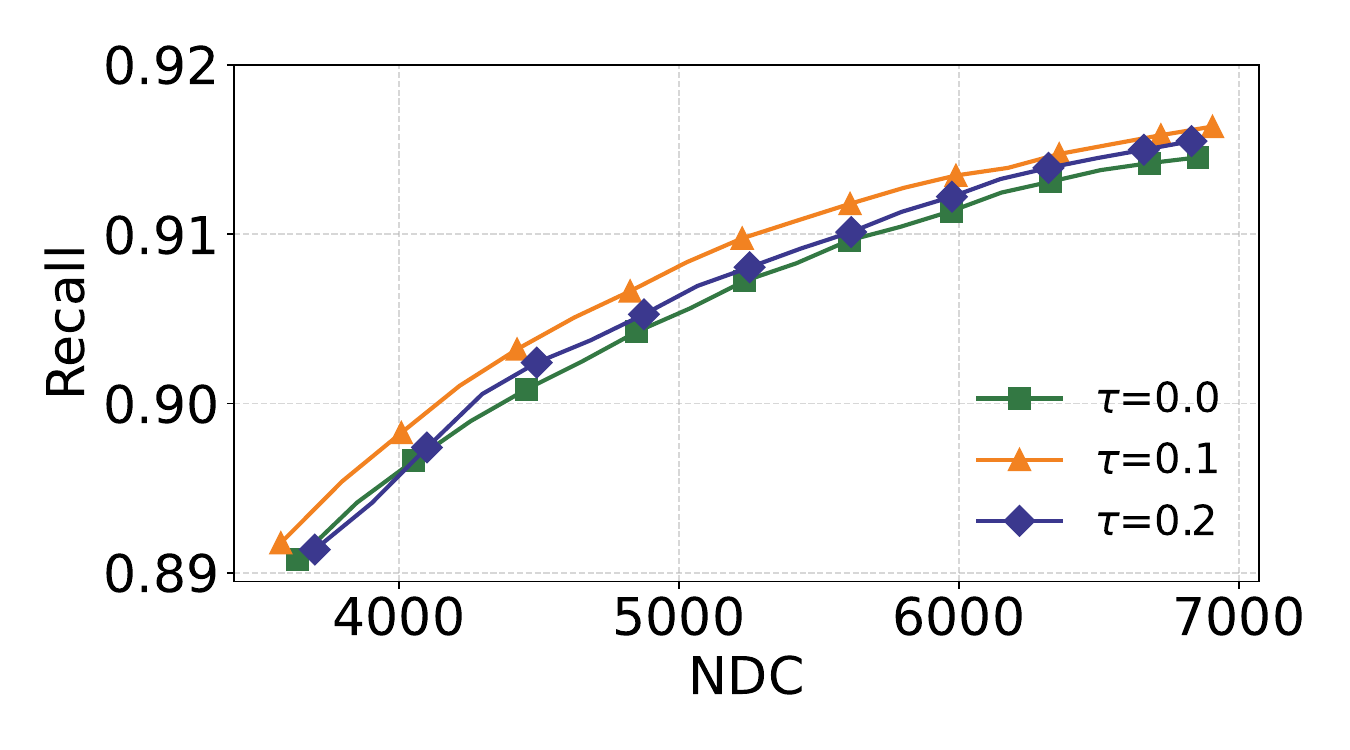}
    \caption{\ttt{WIKI}}
  \end{subfigure}
  \hfill
  \begin{subfigure}[t]{0.32\textwidth}
    \centering
        \captionsetup{skip=-3pt}  % ← 减小图与caption之间的间距
    \includegraphics[width=1\linewidth]{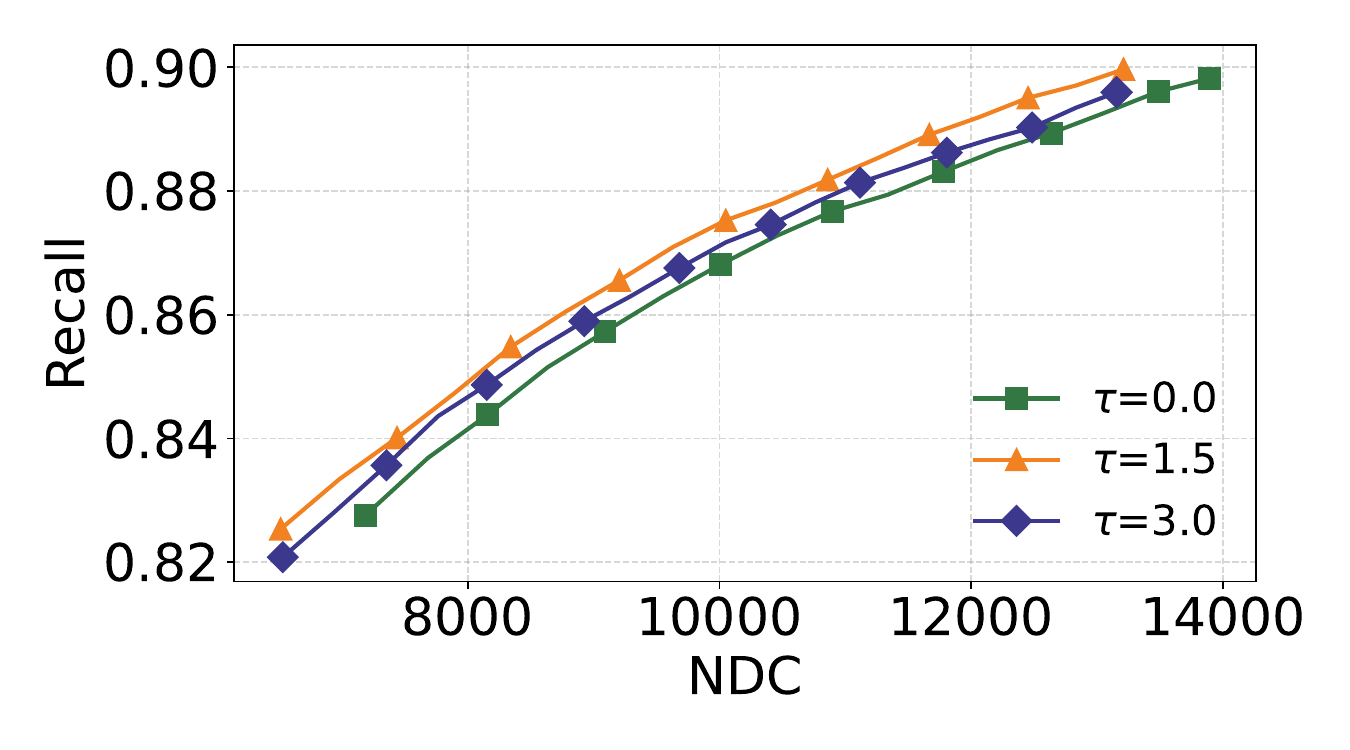}
    \caption{\ttt{MSONG}}
  \end{subfigure}
  \hfill
    \begin{subfigure}[t]{0.32\textwidth}
      \centering
        \captionsetup{skip=-3pt}  % ← 减小图与caption之间的间距
    \includegraphics[width=1\linewidth]{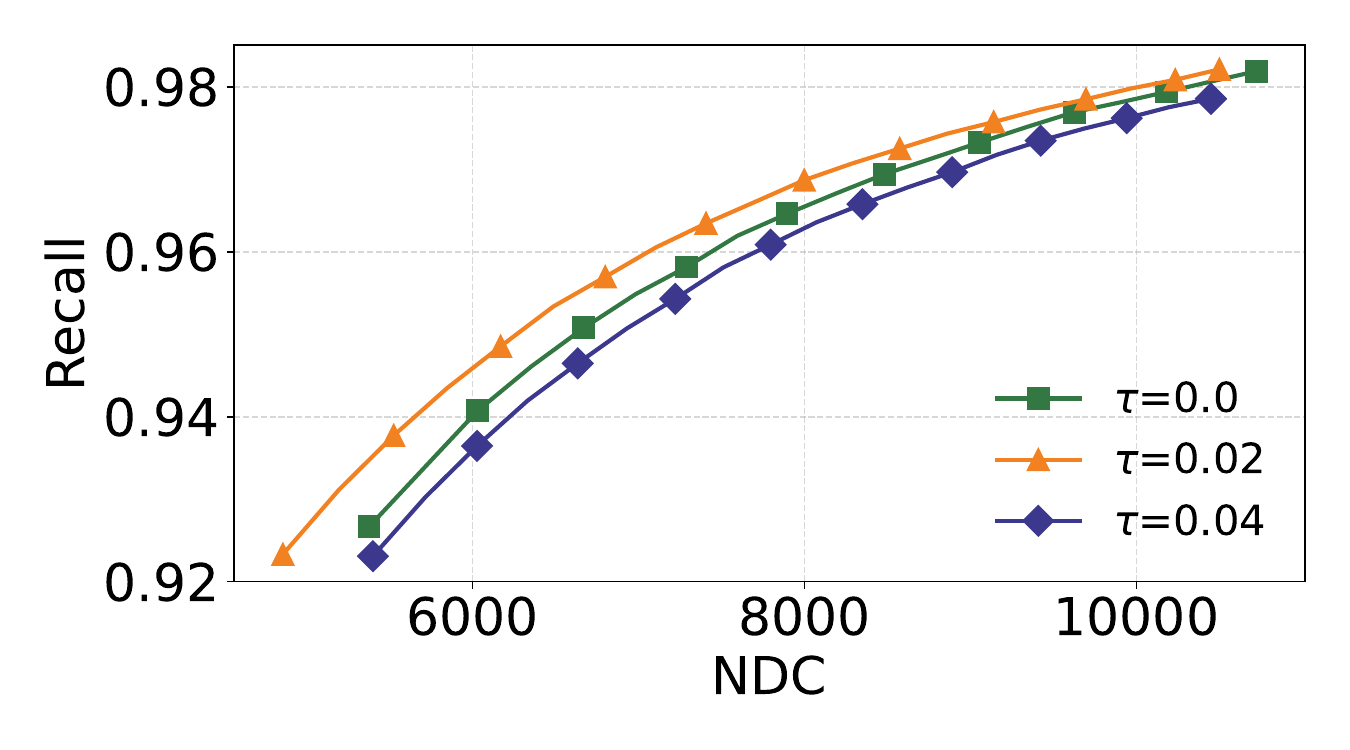}
    \caption{\ttt{GIST}}
  \end{subfigure}
  \caption{Recall@100 vs. NDC with varying $\tau$}
  \label{fig:varying tau}
  \centering
              \captionsetup{skip=1pt}  % ← 减小图与caption之间的间距
    \begin{subfigure}[t]{0.32\textwidth}
      \centering
        \captionsetup{skip=-3pt}  % ← 减小图与caption之间的间距
    \includegraphics[width=1\linewidth]{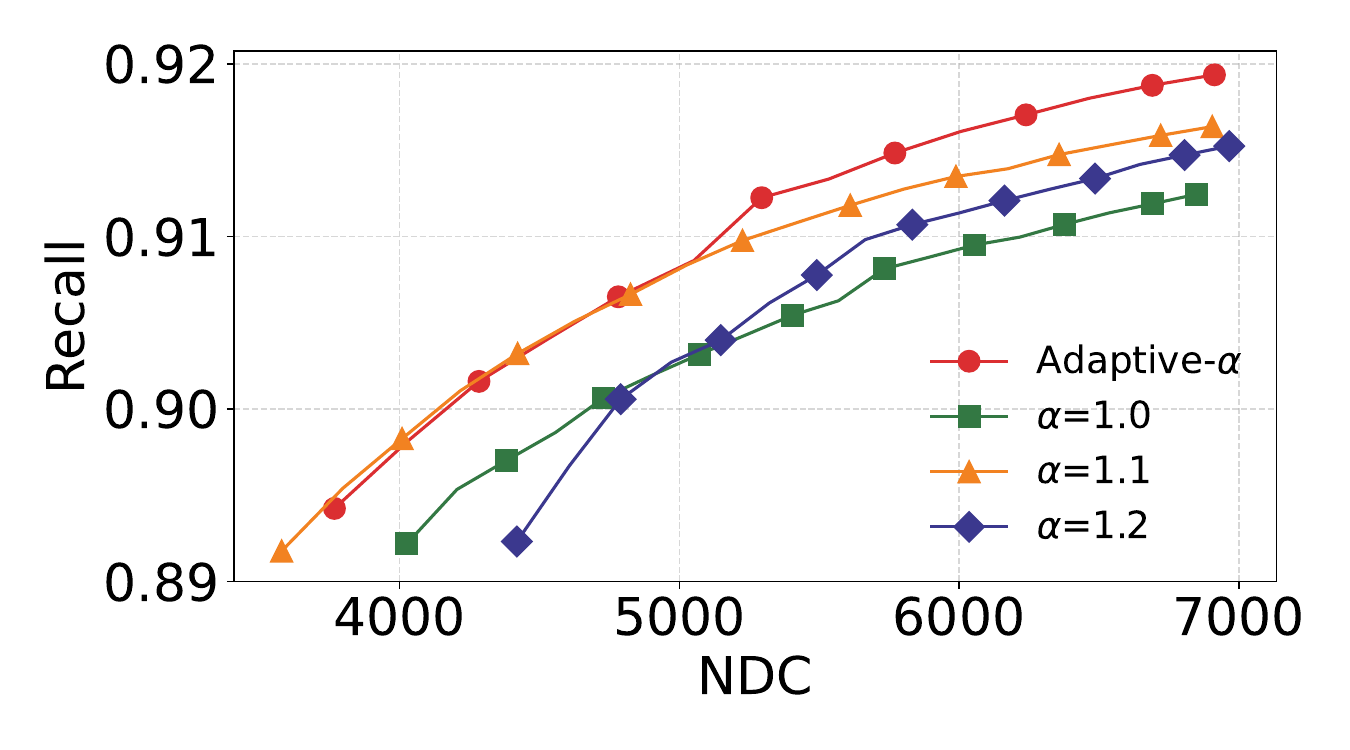}
    \caption{\ttt{WIKI}}
  \end{subfigure}
  \hfill
    \begin{subfigure}[t]{0.32\textwidth}
      \centering
        \captionsetup{skip=-3pt}  % ← 减小图与caption之间的间距
    \includegraphics[width=1\linewidth]{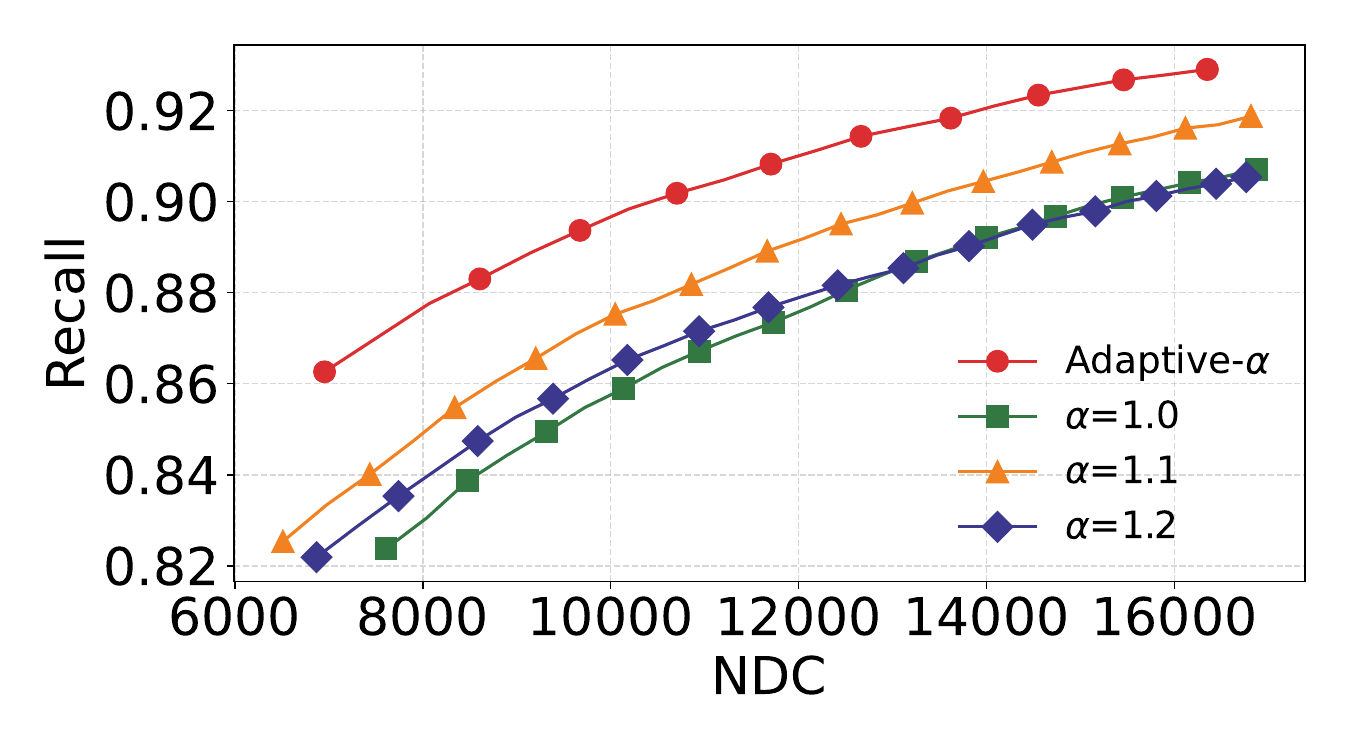}
    \caption{\ttt{MSONG}}
  \end{subfigure}
  \hfill
  \begin{subfigure}[t]{0.32\textwidth}
    \centering
        \captionsetup{skip=-3pt}  % ← 减小图与caption之间的间距
    \includegraphics[width=1\linewidth]{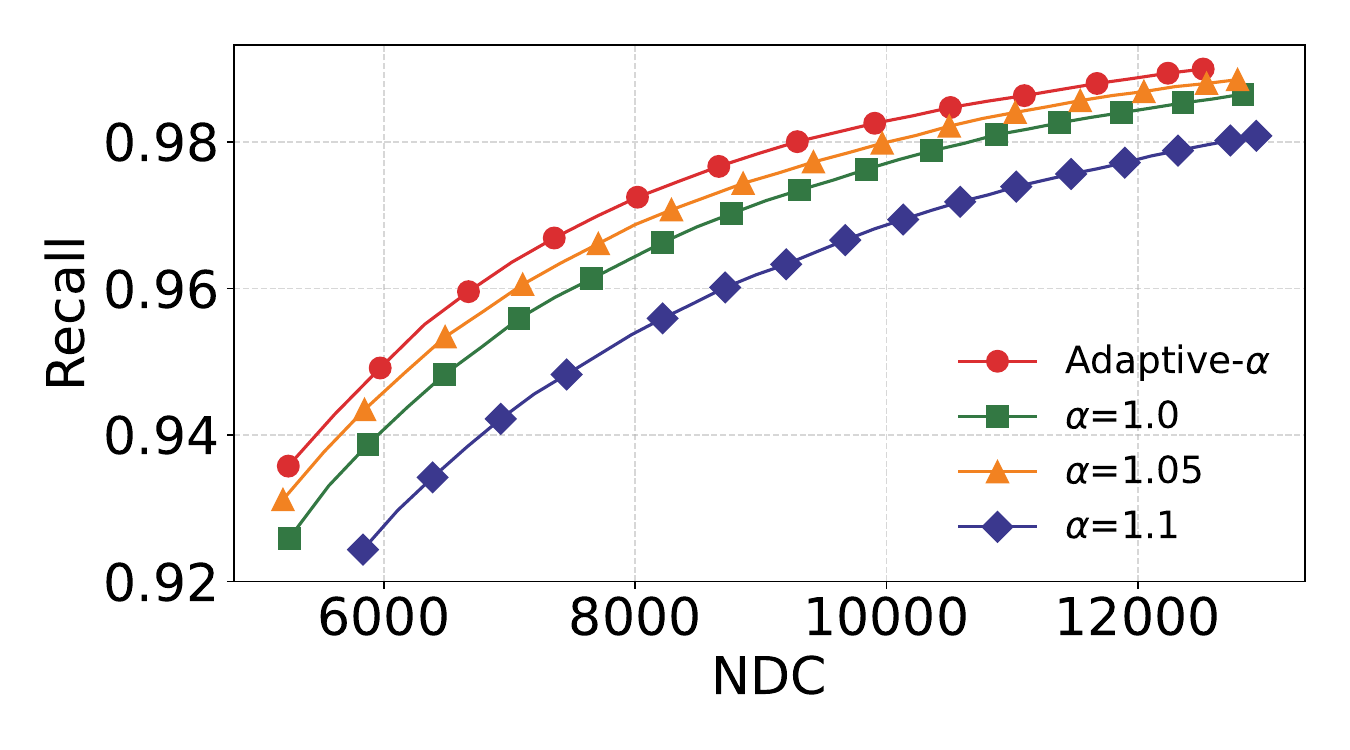}
    \caption{\ttt{GIST}}
  \end{subfigure}
  \caption{Recall@100 vs. NDC with varying $\alpha$}
    % \vspace{-4pt}
  \label{fig:varying alpha}
\end{figure*}

This subsection explores the effects of $\tau$ and $\alpha$ in our edge pruning rule (Definition~\ref{def:edge-pruning}). We evaluated \texttt{Fixed-$\alpha$-CNG} on three 1M-scale datasets with the highest dimensionality from different data types: \ttt{WIKI} for text, \ttt{MSONG} for audio, and \ttt{GIST} for images.

\extraspacing{\bf Varying $\tau$.} Figure~\ref{fig:varying tau} illustrates the impact of the parameter $\tau$ on search performance, where the middle tested $\tau$ for each dataset is the optimal $\tau$ identified. We observe a trend similar to that reported for $\tau$-MNG \cite{pcc+23}: as $\tau$ increases from zero, search performance initially improves before deteriorating with further increases in $\tau$. Recall that the search time is proportional to the total out-degrees of the hop vertices. A moderate increase in $\tau$ allows connecting more candidates and enhances the graph connectivity, thus reducing the number of hops. However, if $\tau$ increases excessively, the out-degree may become too large, as the search must compute distances for more neighbors at each step. Additionally, the increased out-degrees may exceed the limit $M$ for some nodes, resulting in the replacement of long-distance shortcut edges with less useful ones. Overall, a small positive $\tau$ enhances graph connectivity and improves search performance, while excessively large values can lead to over-inclusion and reduced pruning effectiveness.

\extraspacing{\textbf{Varying $\alpha$.}} We evaluated \texttt{Fixed-$\alpha$-CNG} for three values of $\alpha$, along with \texttt{$\alpha$-CNG} (see Figure~\ref{fig:varying alpha}). The middle $\alpha$ value tested is the optimal value chosen for \texttt{Fixed-$\alpha$-CNG}. We observe that increasing $\alpha$ initially enhances the performance of \texttt{Fixed-$\alpha$-CNG}, but it deteriorates as $\alpha$ continues to rise, for the same reasons observed with increasing $\tau$. Notably, \texttt{$\alpha$-CNG} outperformed \texttt{Fixed-$\alpha$-CNG} for all tested $\alpha$ values. This advantage stems from its capability to locally adjust $\alpha$, allowing each node to retain more shortcut edges while ensuring that its degree does not exceed $M$. Overall, \texttt{$\alpha$-CNG} proves to be more practical, as it avoids the need for manual tuning of the parameter $\alpha$ while delivering improved query performance.

\subsection{Construction Performance} \label{sec:exp:construction}

\begin{table*}[ht]
\footnotesize
\centering
\figcapup\figcapup
\caption{Index construction time and index size across six datasets.}
\figcapdown\figcapdown
\label{tab:index_time_size}

\resizebox{\textwidth}{!}{
\begin{tabular}{lcccccc|cccccc}
\toprule
 & \multicolumn{6}{c}{\textbf{Index time (secs)}} 
 & \multicolumn{6}{c}{\textbf{Index size (MB)}} \\
\cmidrule(lr){2-7} \cmidrule(lr){8-13}

\textbf{Method}
& \ttt{SIFT} & \ttt{CRAWL} & \ttt{WIKI} & \ttt{MSONG} & \ttt{LAION-I2I} & \ttt{GIST}
& \ttt{SIFT} & \ttt{CRAWL} & \ttt{WIKI} & \ttt{MSONG} & \ttt{LAION-I2I} & \ttt{GIST} \\
\midrule

\ttt{HNSW}
& 42 & 159 & 91 & 145 & 116 & 206
& 256 & 509 & 256 & 254 & 256 & 256 \\

\ttt{Vamana}
& 9 & 43 & 19 & 38 & 35 & 67
& 163 & 409 & 114 & 151 & 134 & 118 \\

\ttt{NSG}
& 43 & 174 & 97 & 223 & 93 & 200
& 115 & 183 & 113 & 97 & 83 & 90 \\

\ttt{$\tau$-MNG}
& 43 & 182 & 98 & 242 & 98 & 217
& 113 & 329 & 124 & 117 & 127 & 131 \\

\textbf{\ttt{Fixed-$\alpha$-CNG}}
& 42 & 206 & 88 & 215 & 99 & 207
& 158 & 512 & 179 & 203 & 186 & 162 \\

\textbf{\ttt{$\alpha$-CNG}}
& 53 & 221 & 119 & 227 & 110 & 228
& 188 & 563 & 228 & 225 & 246 & 227 \\

\midrule
\textbf{Vector Data Size}
& \multicolumn{6}{c|}{--}
& 492 & 2284 & 1464 & 1597 & 2908 & 3666 \\
\bottomrule
\end{tabular}
}
\end{table*}
\noindent{\bf Index time.} Table~\ref{tab:index_time_size} compares indexing times. Both our methods, \texttt{$\alpha$-CNG} and \texttt{Fixed-$\alpha$-CNG}, achieved performance comparable to that of baseline methods, except for \texttt{Vamana}. Compared to \texttt{Vamana}, our methods employ approximate $K$-NN graphs for candidate generation and a relaxed edge pruning rule that preserves more shortcut edges. This enhances graph connectivity, improving query performance as shown in Figure~\ref{fig:NDC-vs-recall}.

\vgap

We further observe the following regarding our two methods: (1) Although using a more relaxed pruning, \texttt{Fixed-$\alpha$-CNG} exhibited index times comparable to \texttt{NSG} and \texttt{$\tau$-MNG}, and occasionally outperformed them. This efficiency stems from our lazy pruning strategy during backward edge insertion, reducing invocations of the pruning procedure. (2) Despite \texttt{$\alpha$-CNG} iteratively pruning candidate sets for multiple $\alpha$ values, its index time remains only marginally higher than \texttt{Fixed-$\alpha$-CNG}. This is due to our distance-reusing mechanism that optimizes the total number of distance computations.

\extraspacing{\bf Index Size.} Table~\ref{tab:index_time_size} presents the sizes of the indexes and the corresponding raw vector data. The index sizes of our methods are comparable to those of \texttt{HNSW} but larger than the other baselines. This is because we retain additional long-distance shortcut edges to accelerate the convergence to ANN and enhance query performance, as validated in Section~\ref{sec:exp:search}. Since the PGs are usually much smaller than the raw vectors, our methods yield only a marginal increase in the overall data size.

\subsection{Effectiveness of Our Edge Pruning Rule} \label{sec:exp:our-rule}

Our two methods, \texttt{NSG}, and \texttt{$\tau$-MNG}, share a framework that generates candidate sets from approximate $K$-NN graphs. Section~\ref{sec:exp:search} shows the efficacy of our pruning rule (Definition~\ref{def:edge-pruning}) within this framework. We further evaluated its effectiveness on two widely used PGs --- \texttt{HNSW} and \texttt{Vamana} --- by creating two variants, \texttt{HNSW+} and \texttt{Vamana+}, through the integration of our edge pruning rule. Figure~\ref{fig:integration into existing systems} presents the recall-NDC trade-offs, with \texttt{$\alpha$-CNG} included for reference. Both \texttt{HNSW+} and \texttt{Vamana+} consistently outperformed their original counterparts. \texttt{HNSW+} achieved more substantial gains over \texttt{HNSW} and even surpassed \texttt{$\alpha$-CNG} on \texttt{MSONG}. We attribute this difference to the distinct pruning strategies used by \texttt{Vamana} and \texttt{HNSW}: \REV{\texttt{Vamana} employs a more relaxed
pruning rule by setting
$r = \dis(p,u)/\alpha$ (Fig.~\ref{fig:lune-final}b), which incorporates the parameter $\alpha$, while \texttt{HNSW} uses the strict pruning rule with
$r = \dis(p,u)$ (Fig.~\ref{fig:lune-final}a). These experimental results demonstrate the transferability of our edge pruning rule to enhance the query performance of other PG solutions.}

Although \texttt{$\alpha$-CNG}, \texttt{HNSW+}, and \texttt{Vamana+} 
share the same $\alpha$-pruning rule, \texttt{$\alpha$-CNG} still achieves 
better performance because it constructs candidate neighbor sets of higher quality. 
Specifically, \texttt{$\alpha$-CNG} first builds a $K$-NN graph and performs 
edge pruning over approximate nearest neighbors obtained from that graph. 
In contrast, \texttt{HNSW+} and \texttt{Vamana+} incrementally insert nodes, 
so each node can only access neighbors that have already been inserted, 
leading to a limited and often suboptimal candidate set for pruning. 
Consequently, the edges retained by \texttt{$\alpha$-CNG} better preserve global connectivity, resulting in higher recall and more 
efficient search.
\begin{figure*}[t]  
  \centering
  \captionsetup{skip=1pt}  % ← 减小图与caption之间的间距
    \includegraphics[width=0.65\textwidth]{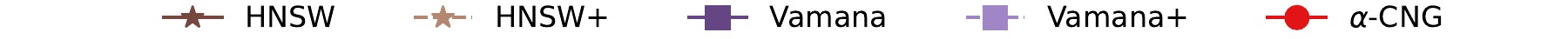}
    
      \begin{subfigure}[t]{0.32\textwidth}
                \captionsetup{skip=-3pt}  % ← 减小图与caption之间的间距
    \includegraphics[width=\linewidth]{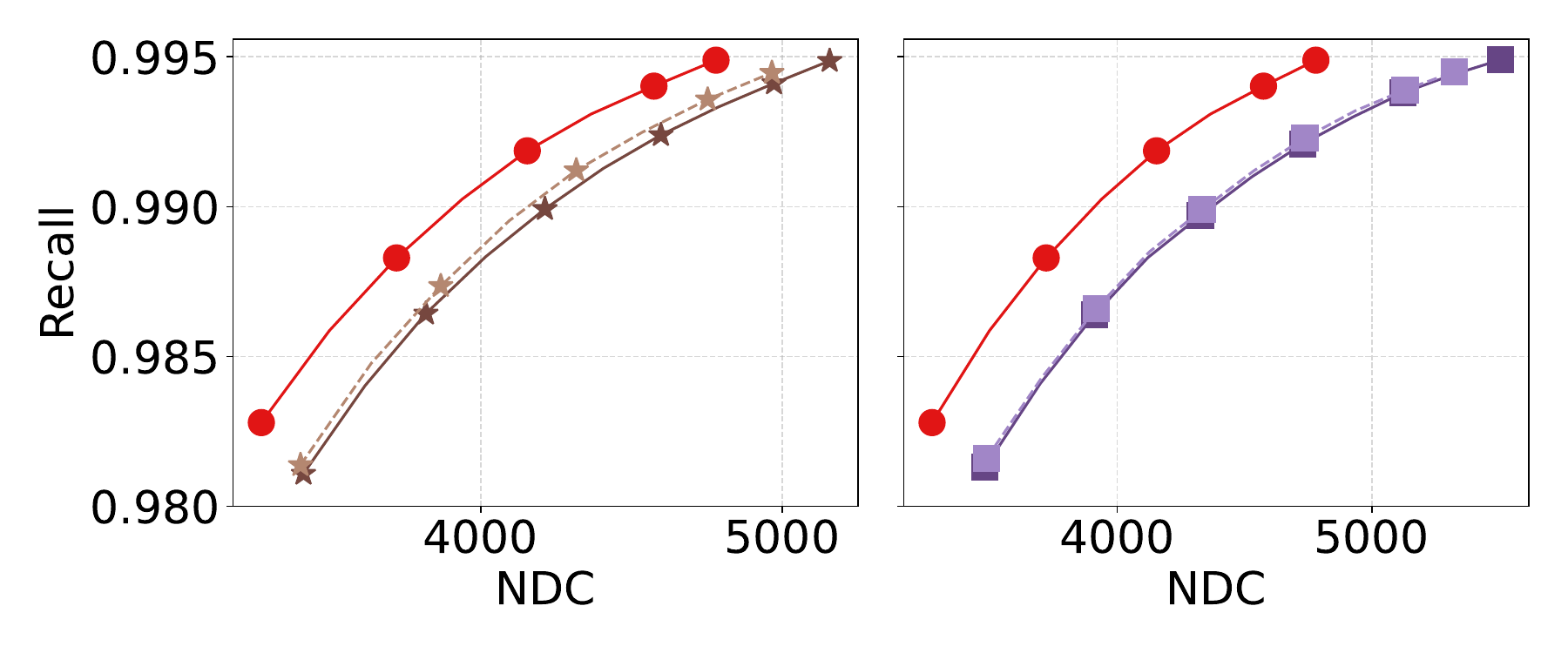}
    \caption{\ttt{SIFT}}
  \end{subfigure}
  \hfill
    \begin{subfigure}[t]{0.32\textwidth}
              \captionsetup{skip=-3pt}  % ← 减小图与caption之间的间距
    \includegraphics[width=\linewidth]{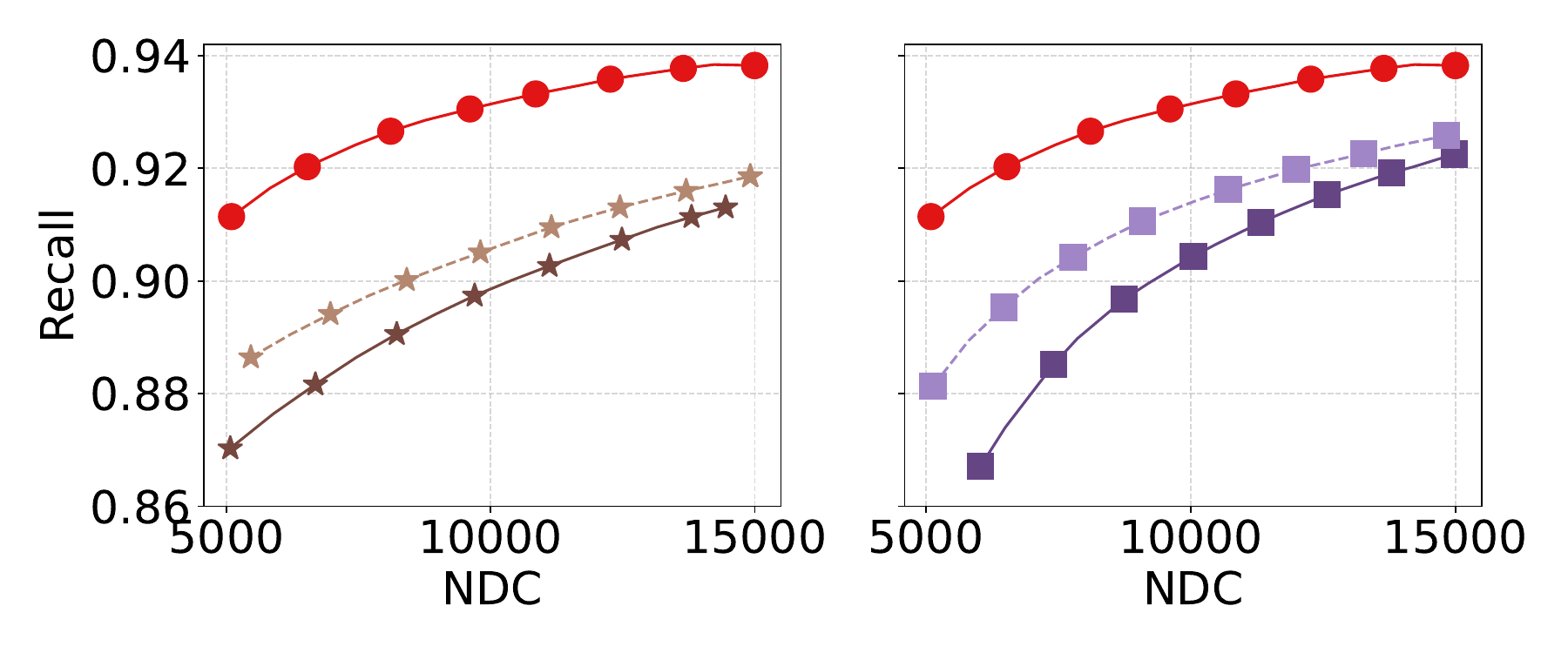}
    \caption{\ttt{CRAWL}}
  \end{subfigure}
  \hfill
  \begin{subfigure}[t]{0.32\textwidth}
  \captionsetup{skip=-3pt}  % ← 减小图与caption之间的间距
    \includegraphics[width=\linewidth]{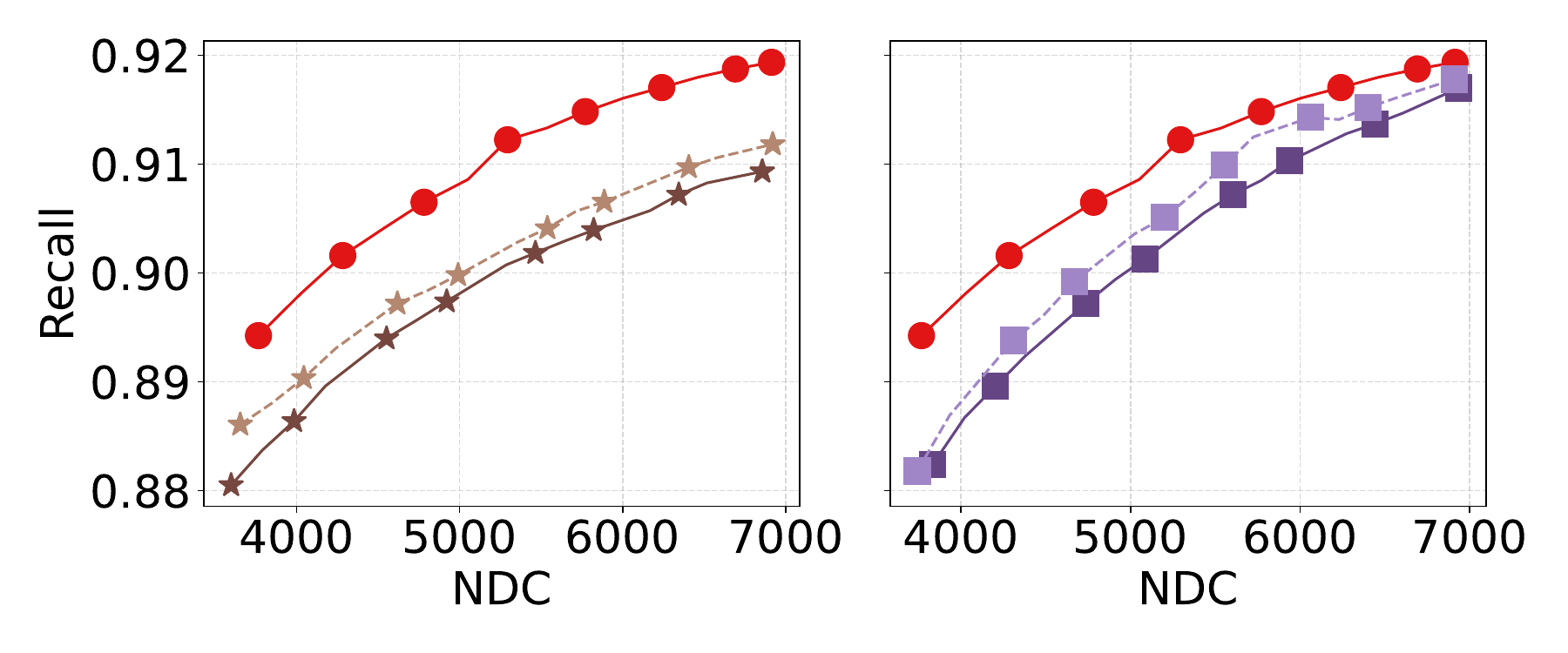}
    \caption{\ttt{WIKI}}
  \end{subfigure}
    \hfill
    \begin{subfigure}[t]{0.32\textwidth}
    \captionsetup{skip=-3pt}  % ← 减小图与caption之间的间距
    \includegraphics[width=\linewidth]{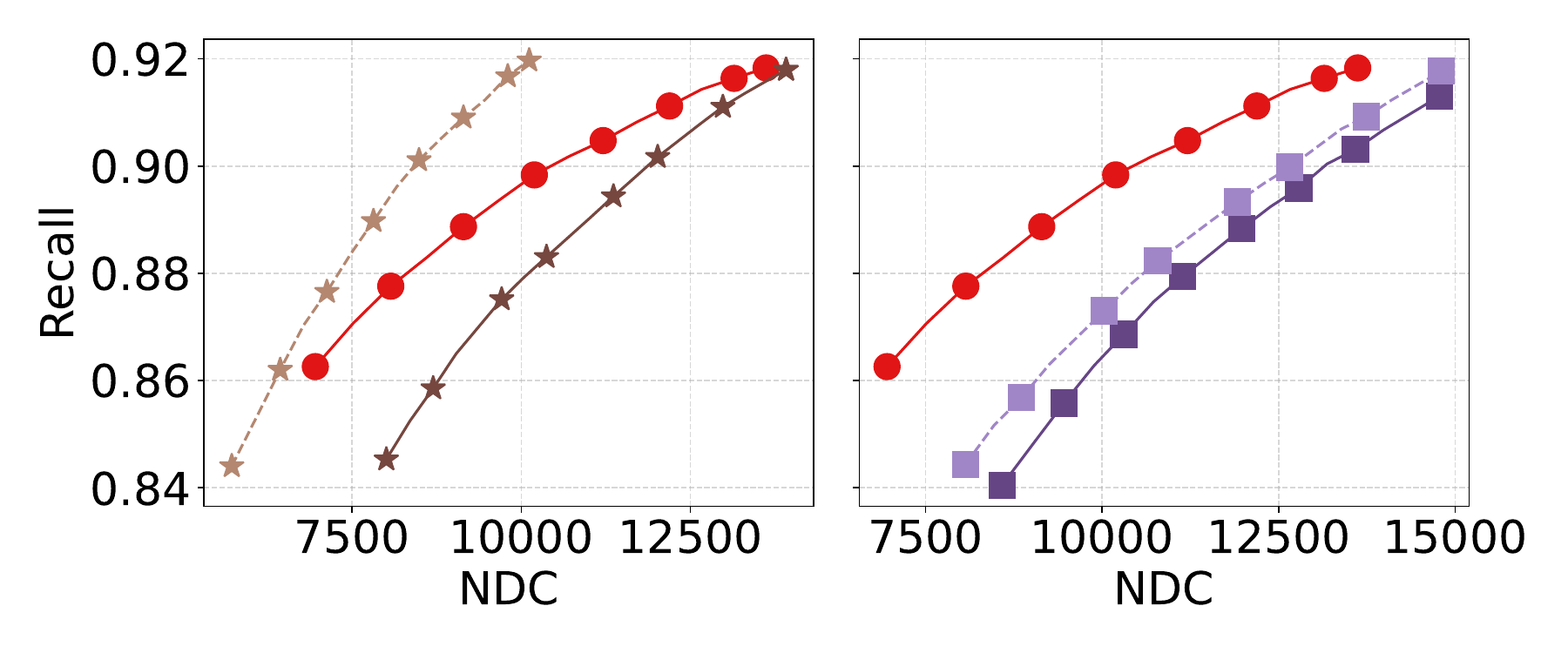}
    \caption{\ttt{MSONG}}
  \end{subfigure}
  \hfill
    \begin{subfigure}[t]{0.32\textwidth}
    \captionsetup{skip=-3pt}  % ← 减小图与caption之间的间距
    \includegraphics[width=\linewidth]{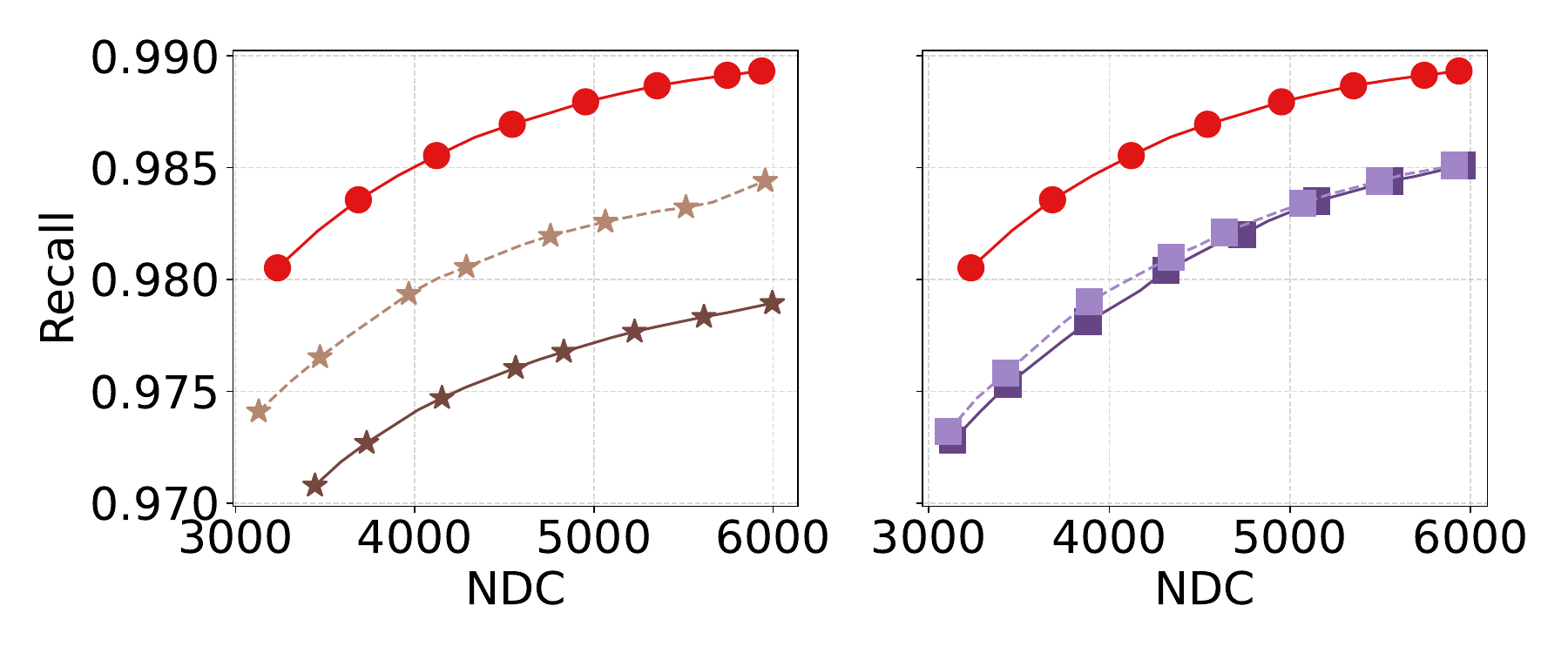}
    \caption{\ttt{LAION-I2I}}
  \end{subfigure}
  \hfill
  \begin{subfigure}[t]{0.32\textwidth}
  \captionsetup{skip=-3pt}  % ← 减小图与caption之间的间距
    \includegraphics[width=\linewidth]{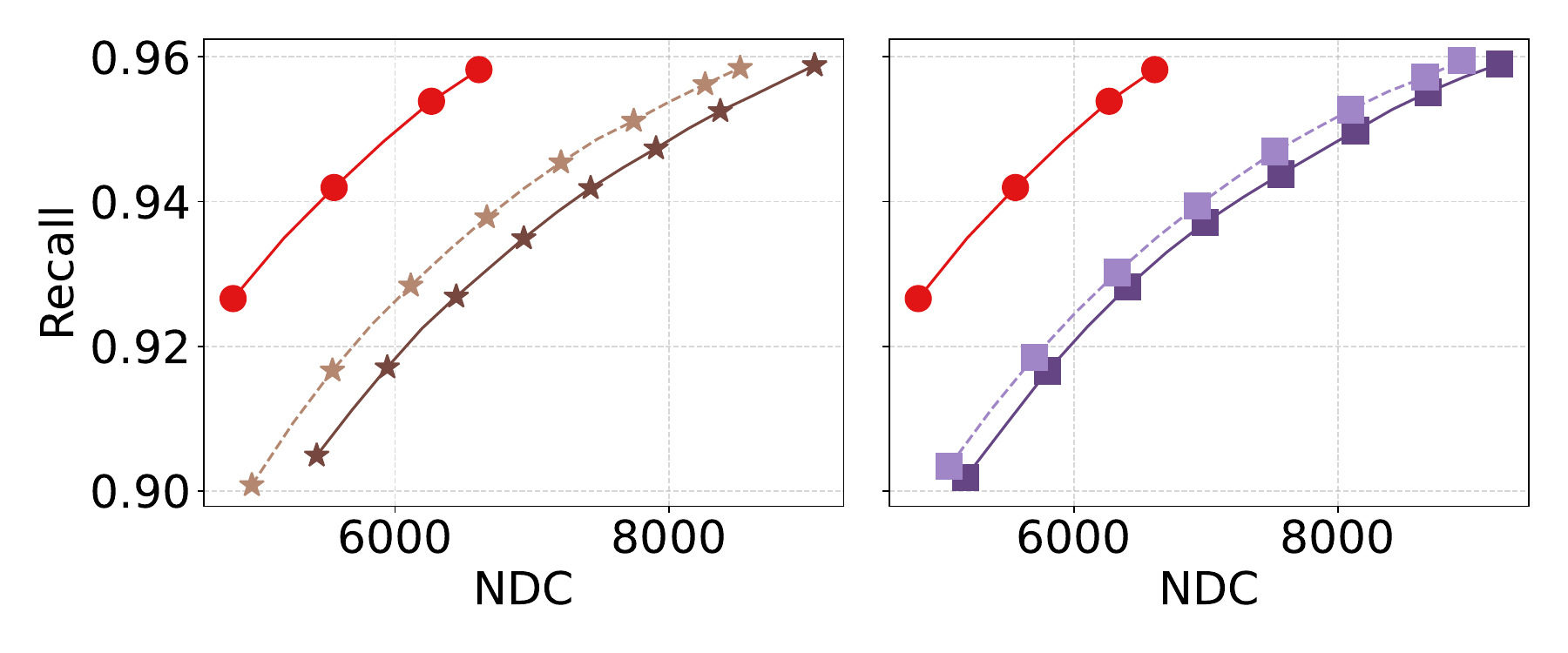}
    \caption{\ttt{GIST}}
  \end{subfigure}
  \caption{Effect of integrating our edge pruning rule into HNSW and Vamana}  
  \label{fig:integration into existing systems}
\end{figure*}

\section{Related Work}

Approximate nearest neighbor (ANN) search has attracted considerable attention over the past two decades, leading to the development of diverse methodologies aimed at enhancing search performance. Recent experimental studies \cite{abf20, lzah20, lzs+19, wxyw21, aep25} indicate that PG-based approaches surpass other techniques, including hash-based methods \cite{im98,swq+14,hfz+15,wplp24,tzz24,lk20,lwwk20}, tree-based methods \cite{ks97,b75,c08,y93,wwj+13,laja08}, and inverted index-based methods \cite{bl14,jds10}.

\extraspacing{\bf PG-based methods.} The Delaunay Graph (DG) \cite{a91}, one of the earliest PGs, is the dual graph of the Voronoi diagram with $O(n^{\lc m/2 \rc})$ space in the $m$-dimensional space. DG guarantees finding the exact NN but suffers from high out-degrees and unbounded search time.

\vgap

Several PGs in the literature provide non-trivial query accuracy guarantees while maintaining graph sparsity. Inspired by the relative neighborhood graph \cite{jt92}, Arya et al. \cite{am93} and Fu et al. \cite{fxwc19} proposed MRNG. However, it can find the exact NN only when $q\in P$. Fu et al. \cite{fwc22} introduced the satellite system graph (SSG) to support the case when $q\notin P$ and the input is randomly distributed, although the worst-case time remains unbounded. Harwood et al. \cite{hd16} studied the scenario when $\delta(q, v^*)$ is at most a constant $\tau > 0$; when $P$ is uniformly drawn from $\real^m$, their PG can find the exact NN in $O(n^{2/m}(\ln n)^2)$ time. Later, $\tau$-MNG \cite{pcc+23} improved the search time to $O(n^{1/m}(\ln n)^2)$. Recently, Indyk et al. proved that the slow preprocessing version of Vamana guarantees to find an $(\frac{\alpha+1}{\alpha-1}+\epsilon)$-ANN of $q$ in $O(\alpha^d \cdot \log \Delta \cdot \log_\alpha \frac{\Delta}{(\alpha-1)\epsilon})$ time, regardless of whether $\delta(q, v^*) \le \tau$. Refer to other theoretical works studying cases where $q\in P$ \cite{dgm+24} or the data follows specific distributions \cite{l18,ps20}.

\vgap

The worst-case construction time of all the aforementioned PGs is $\Omega(n^2)$. To reduce construction time, numerous practical PGs have been proposed in the literature (see \cite{zxc+23, sds+19, fxwc19, my20, hd16, gl23, lkxi21, pcc+23, zth+23, czk+24, wxy+24, ggxl25, yxl+24} and the references therein). A recent survey by Azizi et al. \cite{aep25} identifies several design paradigms for PGs and indicates that neighbor selection (Definition~\ref{def:candidate-sel}) is crucial for improving search performance, warranting further theoretical exploration. This paper aims to contribute to the understanding of this direction.

\vgap

%\extraspacing{\bf Non-PG-based methods.} Tree-based methods \cite{ks97,b75,c08,y93,wwj+13,laja08} employ hierarchical structures to partition the data space recursively, enabling efficient search operations by pruning irrelevant partitions during queries. Hashing-based methods ~\cite{ljw+07,gffn12,lch+14,swq+14,hfz+15,tysk09,wplp24,tzz24,lk20,lwwk20} project high-dimensional data points into multiple low-dimensional hash buckets, significantly accelerating query times by limiting the search to relevant buckets. Techniques such as Locality-Sensitive Hashing (LSH) \cite{im98} offer tunable performance and theoretical error guarantees. However, achieving high accuracy often requires substantial memory overhead, which can be a limiting factor for large datasets. Quantization-based (clustering-based) methods \cite{jds10,ka14,mya15,glgp13,gsl+20,pel+22,abh+23,gl24,ggy+25} compress high-dimensional vectors by dividing them into lower-dimensional subspaces and performing quantization (clustering) within each. Recent research, such as \cite{ggxl25}, explores integrating these methods into PGs to improve construction and search efficiency.

\extraspacing{\bf Additional directions.} Research has explored methods to extend ANN search by incorporating attribute constraints, enabling data retrieval that satisfies both vector similarity and user-specified attributes \cite{gks+23,wlx+23,pkgz24,zqz+24,cscz24,xgg+24}. Other studies have focused on the efficient construction of PGs \cite{zth+23,dcl11,om23,yxl+24} and on supporting updates \cite{ssks21,xmb+25,www+20}. Recently, quantization-based techniques \cite{jds10,gsl+20,pel+22,abh+23,gl24,ggy+25} have been developed to compress high-dimensional vectors, thereby accelerating distance computations. These techniques can be integrated into PGs to enhance both construction and search efficiency, as investigated in recent research \cite{ggxl25,ssks21}.

\section{Conclusions} \label{sec:conc}
This paper introduces $\alpha$-CG, a new PG structure for high-performance ANN search. Specifically, $\alpha$-CG employs a well-designed pruning rule to eliminate ineffective candidates. We prove that, under a realistic assumption that the distance between the query and its NN is bounded by a constant $\tau$, $\alpha$-CG guarantees exact NN retrieval in poly-logarithmic time. Without this assumption, it ensures ANN search within the same complexity bounds. To reduce graph construction overhead, we develop an approximate variant $\alpha$-CNG with an adaptive local pruning rule that avoids manually tuning the parameter $\alpha$ and preserves more useful shortcut edges. We also propose optimizations to accelerate graph construction further. Empirical results show $\alpha$-CNG consistently outperforms state-of-the-art PG methods, achieving superior accuracy-efficiency trade-offs. Furthermore, our edge pruning rule demonstrates transferability, enhancing query performance when integrated into other popular PGs.

\begin{acks}
	Shangqi Lu's research was supported by a start-up fund from HKUST-Guangzhou.
\end{acks}

%%
%% The next two lines define the bibliography style to be used, and
%% the bibliography file.
\bibliographystyle{ACM-Reference-Format}
\bibliography{sigmod}
\end{document}